\newcommand{\cmark}{\ding{51}}%
\newcommand{\xmark}{\ding{55}}%
\pgfplotsset{compat=1.12}
\definecolor{vgreen}{RGB}{104,180,104}
\definecolor{vblue}{RGB}{49,49,255}
\definecolor{vorange}{RGB}{255,143,102}
\lstdefinestyle{verilog-style}
{
	language=Verilog,
	basicstyle=\small\ttfamily,
	keywordstyle=\color{vblue},
	identifierstyle=\color{black},
	commentstyle=\color{vgreen},
	numbers=left,
	numberstyle=\tiny\color{black},
	numbersep=10pt,
	tabsize=8,
	moredelim=*[s][\colorIndex]{[}{]},
	literate=*{:}{:}1
}
\newcommand*\@lbracket{[}
\newcommand*\@rbracket{]}
\newcommand*\@colon{:}
\newcommand*\colorIndex{%
	\edef\@temp{\the\lst@token}%
	\ifx\@temp\@lbracket \color{black}%
	\else\ifx\@temp\@rbracket \color{black}%
	\else\ifx\@temp\@colon \color{black}%
	\else \color{vorange}%
	\fi\fi\fi
}
\newcommand{\powerset}{\mathcal{P}}
\newcommand{\set}[1]{\{#1\}}
\newcommand{\ldot}{\mathpunct{.}}
\renewcommand{\implies}{\Rightarrow}
\newcommand{\Tau}{\mathcal{T}}
\newcommand{\good}{\mathit{good}}
\newcommand{\bad}{\mathit{bad}}
\newcommand{\pspace}{\textsc{PSpace}}
\newcommand{\ltl}{\text{LTL}}
\newcommand{\hyperltl}{\text{HyperLTL}}
\newcommand{\secltl}{\text{SecLTL}}
\newcommand{\lang}{\mathcal{L}}
\newcommand{\ap}{\text{AP}}
\renewcommand{\models}{\vDash}
\newcommand{\nmodels}{\nvDash}
\newcommand{\traces}{\mathit{TR}}
\newcommand{\pref}{\preceq}
\newcommand{\tracebox}[1]{\framebox[4.5ex][l]{#1}}
\newcommand{\U}{\Until}
\newcommand{\X}{\Next}
\newcommand{\G}{\Globally}
\newcommand{\W}{\WUntil}
\newcommand{\true}{\mathit{true}}
\newcommand{\pathassign}{\Pi}
\newcommand{\pathvars}{\mathcal{V}}
\newcommand{\pathassignfin}{\Pi_\mathit{fin}}
\newcommand{\monitor}{\mathcal{M}}
\newcommand{\ninfluences}{\not\leadsto}
\title{Monitoring Hyperproperties\thanks{This work was partially supported by the European Research Council (ERC) Grant OSARES (No. 683300) and as part of the Collaborative Research Center “Methods and Tools for Understanding and Controlling Privacy” (SFB 1223) by the German Research Foundation (DFG).}
}
\titlerunning{Monitoring Hyperproperties}
\author{Bernd Finkbeiner, Christopher Hahn, Marvin Stenger and Leander Tentrup}
\authorrunning{Finkbeiner, Hahn, Stenger, Tentrup}
\institute{%
	Reactive Systems Group\\
	Saarland University\\
	\email{lastname@react.uni-saarland.de}
}
\begin{document}
	
\maketitle              
Hyperproperties, such as non-interference and observational determinism, relate multiple system executions to each other. They are not expressible in standard temporal logics, like LTL, CTL, and CTL*, and thus cannot be monitored with standard runtime verification techniques. $\hyperltl$ extends linear-time temporal logic (LTL) with explicit quantification over traces in order to express Hyperproperties. We investigate the runtime verification problem of $\hyperltl$ formulas for three different input models: (1) The parallel model, where a fixed number of system executions is processed in parallel. (2) The unbounded sequential model, where system executions are processed sequentially, one execution at a time. In this model, the number of incoming executions may grow forever. (3) The bounded sequential model where the traces are processed sequentially and the number of incoming executions is bounded. We show that deciding monitorability of $\hyperltl$ formulas is $\pspace$-complete for input models (1) and (3). Deciding monitorability is $\pspace$-complete for alternation-free $\hyperltl$ formulas in input model (2). For every input model, we provide practical monitoring algorithms. We also present various optimization techniques. By recognizing properties of specifications such as reflexivity, symmetry, and transitivity, we reduce the number of comparisons between traces. For the sequential models, we present a technique that minimized the number of traces that need to be stored. Finally, we provide an optimization that succinctly represents the stored traces by sharing common prefixes. We evaluate our optimizations, showing that this leads to much more scalable monitoring, in particular, significantly lower memory consumption.

\keywords{Hyperproperties \and Runtime Verification \and Monitoring \and Information-flow}

\section{Introduction}
\label{intro}
\sloppy
\emph{Hyperproperties}~\cite{journals/jcs/ClarksonS10} generalize trace properties in that they not only check the correctness of individual traces, but can also relate multiple computation traces to each other. This is needed, for example, to express information flow security policies like the requirement 
that the system behavior appears to be deterministic, i.e., independent of certain secrets, to an external observer. Monitoring hyperproperties is difficult, because it is no longer possible to analyze traces in isolation: a violation of a hyperproperty in general involves a set of traces, not just a single trace.

We present monitoring algorithms for hyperproperties given in the temporal
logic HyperLTL~\cite{conf/post/ClarksonFKMRS14}, which extends linear-time
temporal logic (LTL) with trace variables and trace quantifiers in order
to refer to multiple traces at a time. For example, the HyperLTL formula
$$\forall \pi. \exists \pi'.~ \G dummyInput_{\pi'} \land ~ lowOut_\pi = lowOut_{\pi'}$$
expresses \emph{noninference} by 
stating that for all traces $\pi$, there exists a trace $\pi'$, such that the observable outputs are the same on both traces even when the high security input of $\pi'$ being replaced by a dummy input. For example, in a messaging app, we might replace the address book, which we want to keep secret, with an \emph{empty} address book.

A first, and absolutely fundamental, question to be answered is in what form the input, which now consists
of more than one execution trace, should be presented to the monitor. Should
the traces be presented all at once or one at a time? Is the number of traces
known in advance? Obviously, the choice of the input representation has significant impact  both on the principal monitoriability of a hyperproperty and on the actual monitoring algorithm.

We study three basic input models for monitoring hyperproperties. (1) The \emph{parallel} model, where a \emph{fixed} number of system executions is processed in parallel.
(2) The \emph{unbounded sequential} model, where system executions are processed sequentially, one execution at a time. In this model, the number of incoming executions is a-priori unbounded and may in fact grow forever. (3) The \emph{bounded sequential model} where the traces are processed sequentially and the number of incoming executions is \emph{bounded}.

\paragraph{Parallel model.} The assumption that the number of incoming traces is fixed before the actual monitoring process starts, results in the easiest and most efficient monitoring algorithms. We distinguish \emph{online monitoring}, where the traces become available one position at a time from left to right, from \emph{offline monitoring} where the positions of the traces can be accessed in any order. In particular, offline algorithms can traverse the traces in backwards direction, which is more efficient. Figure~\ref{fig:fixed} illustrates the two types of algorithms. 
\begin{figure}[h]
	\begin{minipage}{0.5\textwidth}
		\centering
		\includegraphics[scale=0.3]{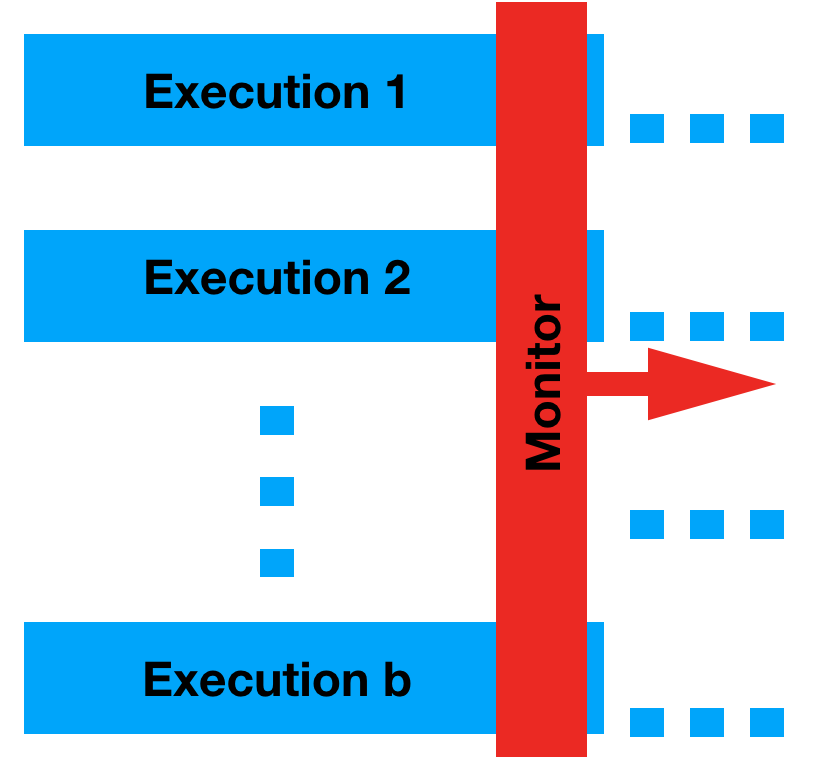}
		
	\end{minipage}\begin{minipage}{0.5\textwidth}
		
		\centering
		\includegraphics[scale=0.3]{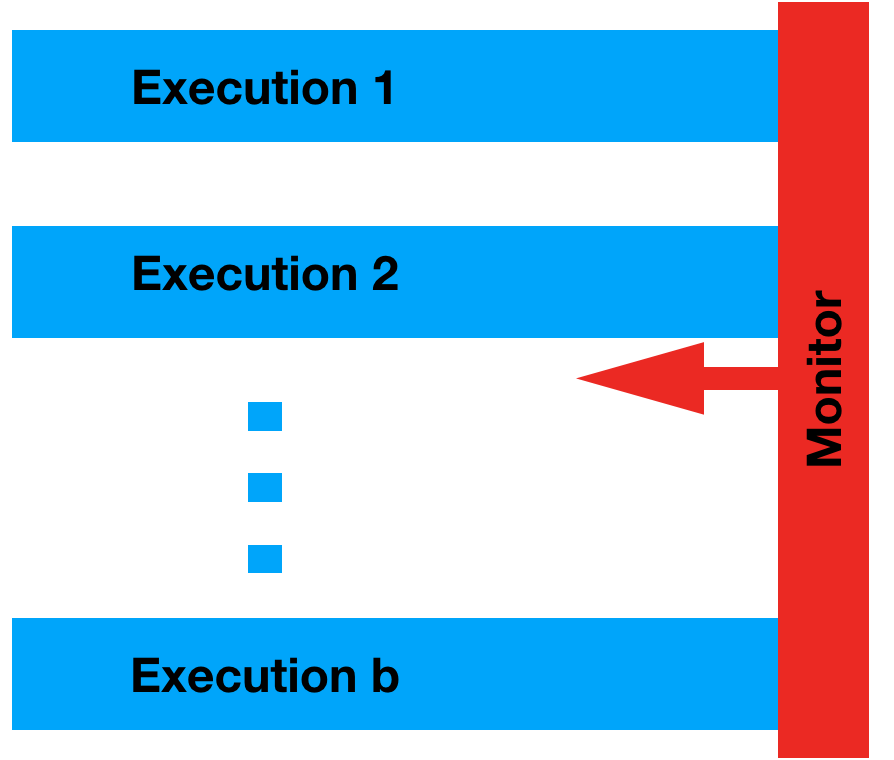}
		
	\end{minipage}
	\caption{Monitor approaches for the fixed size model: online in a forward fashion (left) and offline in a backwards fashion (right).}
	\label{fig:fixed}
\end{figure}
The parallel model is known from techniques like secure-multi-execution~\cite{conf/sp/DevrieseP10}, where several system executions are generated by providing different high-security inputs.
We present an online and an offline monitoring algorithm for hyperproperties expressed in $\hyperltl$. The online algorithm is based on standard techniques for building monitoring automata from LTL formulas. Such a monitor automaton is then instantiated for multiple traces as specified by the HyperLTL formula. 
The offline algorithm is based on constructing an alternating automaton and then proceeding through the automaton in a bottom-up fashion, similar to the classic construction for LTL~\cite{DBLP:journals/fmsd/FinkbeinerS04}.

\paragraph{Unbounded and bounded sequential model.}  The sequential models are useful when multiple sessions of a system under observation have to be monitored one after the other in an online fashion. The disadvantage of the unbounded sequential model is that many interesting hyperproperties, in particular most hyperproperties with quantifier alternations, are not monitorable in this model. It is therefore often useful to define a stop condition in the form of a bound on the number of traces that need to be handled during the monitoring process.
\begin{figure}[h]
	\begin{minipage}{0.5\textwidth}
		
		\centering
		\includegraphics[scale=0.3]{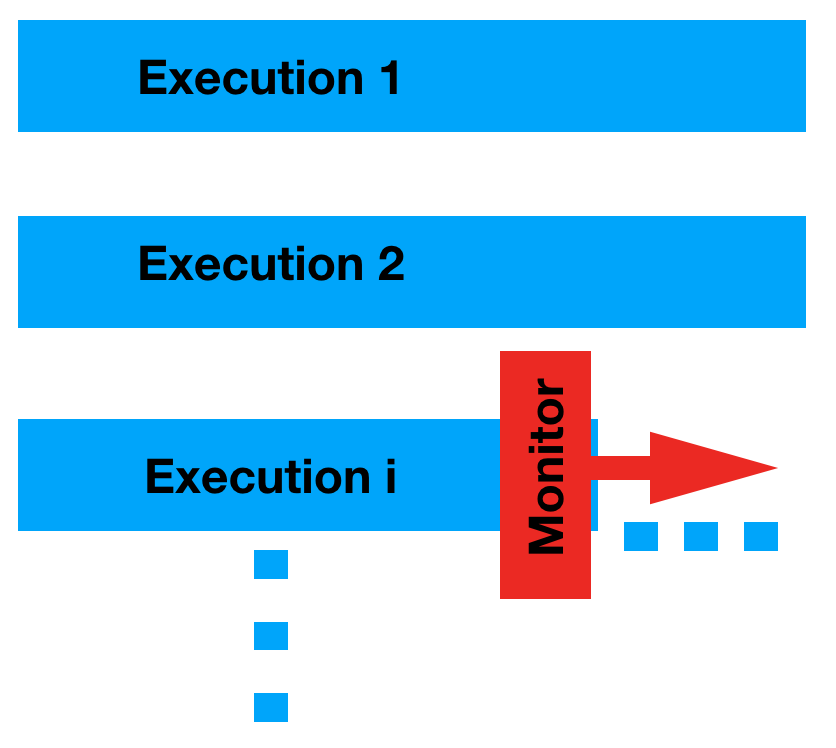}
		
	\end{minipage}\begin{minipage}{0.5\textwidth}
		
		\centering
		\includegraphics[scale=0.3]{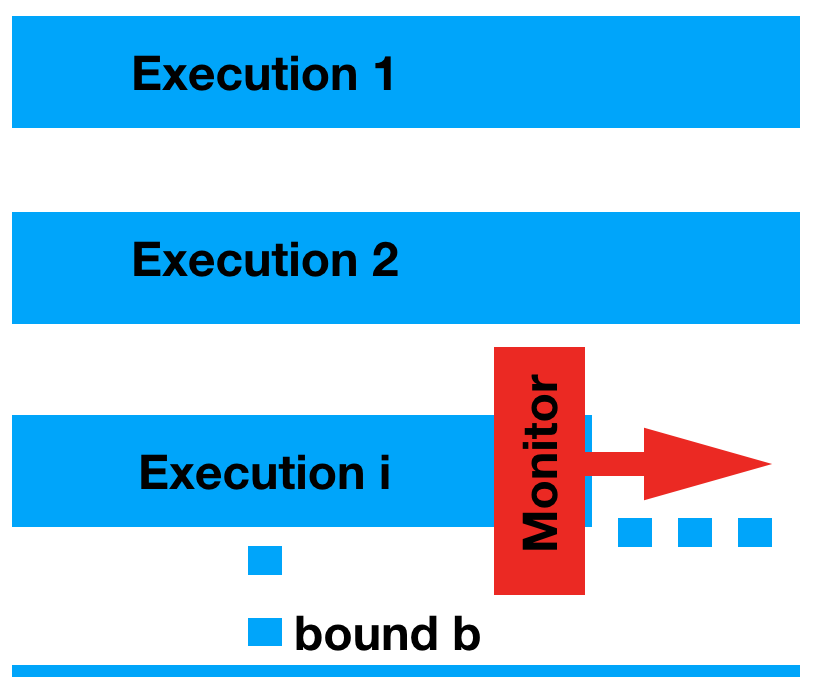}
		
	\end{minipage}
	\caption{Monitor approaches for the sequential models: an unbounded number of traces (left) and bounded number of traces (right) are processed sequentially.}
	\label{fig:sequential}
\end{figure}
Figure~\ref{fig:sequential} sketches the monitoring algorithm for the unbounded and bounded cases. 

A naive monitoring approach for the sequential models would be to simply store all traces seen so far. However, this would create two problems: a memory problem, because the needed memory grows with the number of traces observed by the monitor, and a runtime problem, because one needs to relate every newly observed trace against the growing set of stored traces.

There are hyperproperties where this effect cannot be avoided. An example is the hyperproperty with two atomic propositions $p$ and $q$, where any pair of traces that agree on their $p$ labeling must also agree on their $q$ labeling. Clearly, for every $p$ labeling seen so far, we must also store the corresponding $q$ labeling. 
In practice, however, it is often possible to greatly simplify the monitoring. Consider, for example, the hyperproperty that states that all traces have the same $q$ labeling (independently of the $p$ labeling). In HyperLTL, this property is specified as the formula $$\forall \pi\ldot \forall \pi'\ldot \G (q_\pi {\leftrightarrow} q_{\pi'}).$$ The naive approach would store all traces seen so far, and thus require $O(n)$ memory after $n$ traces. A new trace would be compared against every stored trace twice, once as $\pi$ and once as $\pi'$, resulting in a $O(2n)$ running time for each new trace.
Obviously, however, in this example it is sufficient to store the first trace, and compare all further incoming traces against this reference. The required memory is thus, in fact, constant in the number of traces. A further observation is that the specification is symmetric in $\pi$ and $\pi'$. Hence, a single comparison suffices.

In this article, we present a monitoring approach for hyperproperties in the unbounded model that reduces the set of traces that new traces must be compared against to a minimal subset.
Our approach comes with a strong correctness guarantee: 
our monitor produces the same verdict as a naive monitor that would store all traces and, additionally, we keep a sufficient set of traces to always provide an actually observed witness for the monitoring verdict. Our monitoring thus delivers a result that is equally informative as the naive solution, but is computed faster and with less memory.

We introduce two analysis techniques and an optimized succinct representation of the trace set to be monitored: \emph{The trace analysis} reduces the stored set of traces to a minimum, thus minimizing the required memory. \emph{The specification analysis}, which is applicable in the parallel model as well, identifies symmetry, transitivity, and reflexivity in the specification, in order to reduce the algorithmic workload that needs to be carried out on the stored traces.
\emph{The tries representation} exploits the often prefix-equal traces in the incoming trace set, by storing them in a tree-like data structure called tries.



\paragraph{Trace Analysis.}
As an example for a system where confidentiality and information flow is of outstanding importance for the intended operation, we consider a conference management system.
There are a number of confidentiality properties that such a system should satisfy, like \emph{``The final decision of the program committee remains secret until the notification''} and \emph{``All intermediate decisions of the program committee are never revealed to the author''}.
We want to focus on important hyperproperties of interest beyond confidentiality, like the property that no paper submission is lost or delayed.
Informally, one formulation of this property is \emph{``A paper submission is immediately visible for every program committee member''}.
More formally, this property relates pairs of traces, one belonging to an author and one belonging to a program committee member.
We assume this separation is indicated by a proposition $pc$ that is either disabled or enabled in the first component of those traces.
Further propositions in our example are the proposition $s$, denoting that a paper has been submitted, and $v$ denoting that the paper is visible.

Given a set of traces $T$, we can verify that the property holds by checking every pair of traces $(t,t') \in T \times T$ with $pc \notin t[0]$ and $pc \in t'[0]$ that $s \in t[i]$ implies $v \in t'[i+1]$ for every $i \geq 0$.
When $T$ satisfies the property, $T \cup \set{t^*}$, where $t^*$ is a new trace, amounts to checking new pairs $(t^*,t)$ and $(t,t^*)$ for $t \in T$.
This, however, leads to an increasing size of $T$ and thereby to an increased number of checks: the monitoring problem becomes inevitable costlier over time.
To circumvent this, we present a method that keeps the set of traces \emph{minimal} with respect to the underlying property.
When monitoring hyperproperties, traces may pose \emph{requirements} on future traces.
The core idea of our approach is to characterize traces that pose strictly stronger requirements on future traces than others.
In this case, the traces with the weaker requirements can be safely discarded.
As an example, consider the following set of traces
\begin{align}
&\tracebox{\set{s}}\tracebox{\hspace{0.5ex}\{\}} \tracebox{\hspace{0.5ex}\{\}} \tracebox{\hspace{0.5ex}\{\}} \tracebox{\hspace{0.5ex}\{\}} &&\textit{an author immediately submits a paper} \label{intro-one}\\
&\tracebox{\hspace{0.5ex}\{\}}\tracebox{\set{s}} \tracebox{\hspace{0.5ex}\{\}} \tracebox{\hspace{0.5ex}\{\}} \tracebox{\hspace{0.5ex}\{\}} &&\textit{an author submits a paper after one time unit}\label{intro-two}\\
&\tracebox{\hspace{0.5ex}\{\}}\tracebox{\set{s}} \tracebox{\set{s}} \tracebox{\hspace{0.5ex}\{\}} \tracebox{\hspace{0.5ex}\{\}} &&\textit{an author submits two papers}\label{intro-three}
\end{align}
A satisfying PC trace would be $\set{pc}\set{v}\set{v}\set{v}\emptyset$ as there are author traces with paper submissions at time step 0, 1, and 2.
For checking our property, one can safely discard trace~\ref{intro-two} as it poses no more requirements than trace~\ref{intro-three}.
We say that trace~\ref{intro-three} dominates trace~\ref{intro-two}.
We show that, given a property in the temporal logic $\hyperltl$, we can automatically reduce trace sets to be minimal with respect to this dominance.
On relevant and more complex information flow properties, this reduces the memory consumption dramatically.

\paragraph{Specification Analysis.} 
For expressing hyperproperties, we use the recently introduced temporal logic $\hyperltl$~\cite{conf/post/ClarksonFKMRS14}, which extends linear-time temporal logic (LTL)~\cite{conf/focs/Pnueli77} with explicit trace quantification. We construct a monitor template, containing trace variables, from the $\hyperltl$ formula. We initialize this monitor with explicit traces resulting in a family of monitors checking the relation, defined by the hyperproperty, between the traces.
Our specification analysis technique allows us to reduce the number of monitors in order to detect violation or satisfaction of a given $\hyperltl$ formula.
We use the decision procedure for the satisfiability problem of $\hyperltl$~\cite{conf/concur/FinkbeinerH16} to check whether or not a universally quantified $\hyperltl$ formula is symmetric, transitive, or reflexive.
If a hyperproperty  is \emph{symmetric}, then we can omit every symmetric monitor, thus, performing only half of the language membership tests.
A canonical example for a symmetric $\hyperltl$ formula is $$\forall \pi. \forall \pi'.\; (O_\pi = O_{\pi'}) \W (I_\pi \neq I_{\pi'}),$$ a variant of observational determinism~\cite{journals/jcs/McLean92,conf/sp/Roscoe95,conf/csfw/ZdancewicM03}.
Symmetry is particular interesting, since many information flow policies have this property.
If a hyperproperty is \emph{transitive}, then we can omit every, except for one, monitor, since we can check every incoming trace against any reference trace.
One example for a transitive $\hyperltl$ formula is equality $$\forall \pi. \forall \pi' \ldot \G (a_\pi \leftrightarrow a_{\pi'}).$$
If a hyperproperty is \emph{reflexive}, then we can omit the monitor where every trace variable is initialized with the same trace. For example, both hyperproperties above are reflexive.

\paragraph{Trie representation.}
With the trace analysis, described above, we eliminate traces that are dominated by new incoming traces. The trie representation is an optimization that exploits prefix-equality of traces and therefore succinctly represents traces that are \emph{similar}, although not necessarily equal or dominated. A \emph{trie} is a tree-like data structure that can represent a set of traces as a tree, where traces with equal prefixes collapse to the same path up to the point where the traces differ. Exactly at this position, the trie branches.
For example, consider the finite traces $\tau=\{\mathit{on}\}^{n+1}$ and $\tau'= \{\mathit{on}\}^{n}\{\mathit{off}\}$ and the hyperproperty $\forall \pi. \forall \pi'. \G (\mathit{on}_\pi \oplus \mathit{off}_{\pi'})$. Since none of the traces dominate each other, a monitor would have to spawn four monitor instances $(\tau,\tau'),(\tau',\tau),(\tau,\tau),$ and $(\tau',\tau')$. By using the trie representation, only one monitor instance suffices up to position $n+1$.

\paragraph{Structure of this Article.}

The remainder of this article is structured as follows. Section~\ref{prelim} introduces the syntax and semantics of $\hyperltl$ and the notion of monitorable $\hyperltl$ formula in all three input models.
We furthermore present algorithms for checking whether a $\hyperltl$ formula is monitorable or not.
In Section~\ref{sec:fixed_size_monitoring}, we give a finite trace semantics for $\hyperltl$. For the parallel input model, we present an offline and online monitoring algorithm for arbitrarily $\hyperltl$ formulas.
In Section~\ref{sec:sequential_monitoring}, we present online algorithms for (universal) $\hyperltl$ formulas in the (unbounded) sequential model.
We then tackle the above mentioned memory explosion in by formally introducing the trace analysis, hyperproperty analysis and the tries data structure sketched above. 
We report on our implementation RVHyper v2 and experimental results in Section~\ref{sec:experimentalresults}, before concluding in Section~\ref{sec:conclusion}.

This is a revised and extended version of a paper that appeared at RV~2017~\cite{conf/rv/FinkbeinerHST17}.

\paragraph{Related Work.}

The temporal logic $\hyperltl$ was introduced to model check security properties of reactive systems~\cite{conf/post/ClarksonFKMRS14,conf/cav/FinkbeinerRS15}.
For one of its predecessors, $\secltl$~\cite{conf/vmcai/DimitrovaFKRS12}, there has been a proposal for a white box monitoring approach~\cite{conf/isola/DimitrovaFR12} based on alternating automata.
The problem of monitoring $\hyperltl$ has been considered before~\cite{conf/csfw/AgrawalB16,conf/tacas/BrettSB17}.
Agrawal and Bonakdarpour~\cite{conf/csfw/AgrawalB16} gave a syntactic characterization of monitorable $\hyperltl$ formulas and a monitoring algorithm based on Petri nets.
In subsequent work, a constraint based approach has been proposed~\cite{conf/tacas/BrettSB17}.
Like our monitoring algorithm, they do not have access to the implementation (black box), but in contrast to our work, they do not provide witnessing traces for a monitor verdict.
For certain information flow policies, like non-interference and some extensions, dynamic enforcement mechanisms have been proposed.
Techniques for the enforcement of information flow policies include tracking dependencies at the hardware level~\cite{conf/asplos/SuhLZD04}, language-based monitors~\cite{journals/jsac/SabelfeldM03,conf/csfw/AskarovS09,conf/pldi/AustinF10,conf/csfw/VanhoefGDPR14,conf/post/BichhawatRGH14}, and abstraction-based dependency tracking~\cite{conf/asian/GuernicBJS06,conf/essos/KovacsS12,conf/csfw/ChudnovKN14}.
Secure multi-execution~\cite{conf/sp/DevrieseP10} is a technique that can enforce non-interference by executing a program multiple times in different security levels.
To enforce non-interference, the inputs are replaced by default values whenever a program tries to read from a higher security level.


\section{Runtime Verification of $\hyperltl$}
\label{prelim}
As Hyperproperties relate multiple executions to each other, a monitor for hyperproperties has to consider sets of traces instead of solely processing a single execution in isolation.
In this section, we elaborate on the runtime verification problem of $\hyperltl$.
In the first subsection, we present $\hyperltl$, which is a temporal logic for expressing hyperproperties.
In the second subsection, we define the notion of monitorable $\hyperltl$ specifications for three different input models: the unbounded input model, the bounded model, and the parallel model, which is a special case of the latter.

We begin by defining some notation.
Let $\ap$ be a finite set of atomic propositions and let $\Sigma = 2^\ap$ be the corresponding finite \emph{alphabet}.
A finite (infinite) trace is a finite (infinite) sequence over $\Sigma$.
We denote the concatenation of a finite trace $u \in \Sigma^*$ and a finite or infinite trace $v \in \Sigma^* \cup \Sigma^\omega$ by $uv$ and write $u \pref v$ if $u$ is a prefix of $v$.
Further, we lift the prefix operator to sets of traces, i.e., $U \pref V \coloneqq \forall u \in U \ldot \exists v \in V \ldot u \pref v$ for $U \subseteq \Sigma^*$ and $V \subseteq \Sigma^* \cup \Sigma^\omega$.
We denote the powerset of a set $A$ by $\powerset(A)$ and define $\powerset^*(A)$ to be the set of all finite subsets of $A$.

\subsection{HyperLTL}

$\hyperltl$~\cite{conf/post/ClarksonFKMRS14} is a temporal logic for specifying hyperproperties.
It extends $\ltl$~\cite{conf/focs/Pnueli77} by quantification over trace variables $\pi$ and a method to link atomic propositions to specific traces.
The set of trace variables is $\pathvars$.
Formulas in $\hyperltl$ are given by the grammar
\begin{align*}
\varphi &{}\Coloneqq \forall\pi\ldot\varphi \mid \exists\pi\ldot\varphi \mid \psi \enspace, \text{ and}\\
\psi &{}\Coloneqq a_\pi \mid \neg\psi \mid \psi\lor\psi \mid \X\psi \mid \psi\U\psi \enspace,
\end{align*}
where $a \in \ap$ and $\pi \in \pathvars$.
We call a $\hyperltl$ formula an $\ltl$ formula if it is quantifier free.
The semantics is given by the satisfaction relation $\models_P$ over a set of traces $T \subseteq \Sigma^\omega$.
We define an assignment $\pathassign : \pathvars \to \Sigma^\omega$ that maps trace variables to traces. $\pathassign[i,\infty]$ denotes the trace assignment that is equal to $\pathassign(\pi)[i,\infty]$ for all $\pi$.
\begin{equation*}
\begin{array}{ll}
\pathassign \models_T a_\pi         \qquad \qquad & \text{if } a \in \pathassign(\pi)[0] \\
\pathassign \models_T \neg \varphi              & \text{if } \pathassign \nmodels_T \varphi \\
\pathassign \models_T \varphi \lor \psi         & \text{if } \pathassign \models_T \varphi \text{ or } \pathassign \models_T \psi \\
\pathassign \models_T \X \varphi                & \text{if } \pathassign[1,\infty] \models_T \varphi \\
\pathassign \models_T \varphi\U\psi             & \text{if } \exists i \geq 0 \ldot \pathassign[i,\infty] \models_T \psi \land \forall 0 \leq j < i \ldot \pathassign[j,\infty] \models_T \varphi \\
\pathassign \models_T \exists \pi \ldot \varphi & \text{if there is some } t \in T \text{ such that } \pathassign[\pi \mapsto t] \models_T \varphi
\end{array}
\end{equation*}
We write $T \models \varphi$ for $\set{} \models_T \varphi$ where $\set{}$ denotes the empty assignment.
The language of a $\hyperltl$ formula $\varphi$, denoted by $\lang(\varphi)$, is the set $\set{ T \subseteq \Sigma^\omega \mid T \models \varphi }$.
Let $\varphi$ be a $\hyperltl$ formula with trace variables $\pathvars = \set{\pi_1,\dots,\pi_k}$ over alphabet $\Sigma$.
We define $\Sigma_\pathvars$ to be the alphabet where $p_\pi$ is interpreted as an atomic proposition for every $p \in \ap$ and $\pi \in \pathvars$.
We denote by $\models_\ltl$ the $\ltl$ satisfaction relation over $\Sigma_\pathvars$.
We define the $\pi$-projection, denoted by $\#_\pi(s)$, for a given $s \subseteq \Sigma_\pathvars$ and $\pi \in \pathvars$, as the set of all $p_\pi \in s$.

\begin{lemma} \label{thm:hyperltl-body-to-ltl}
	Let $\psi$ be an $\ltl$ formula over trace variables $\pathvars$.
	There is a trace assignment $A$ such that $A \models_\emptyset \psi$ if, and only if, $\psi$ is satisfiable under $\ltl$ semantics over atomic propositions $\Sigma_\pathvars$.
	The models can be translated effectively.
\end{lemma}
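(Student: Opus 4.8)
The plan is to exhibit a satisfaction-preserving bijection between trace assignments $A\colon\pathvars\to\Sigma^\omega$ and single infinite words over the product alphabet $\Sigma_\pathvars$, and then transport satisfiability across it. Concretely, given an assignment $A$, define the word $\sigma_A \in (\Sigma_\pathvars)^\omega$ position-wise by $\sigma_A[i] \coloneqq \set{a_\pi \mid \pi \in \pathvars,\ a \in A(\pi)[i]}$. Conversely, given $\sigma \in (\Sigma_\pathvars)^\omega$, define the assignment $A_\sigma$ by reading off the $\pi$-projection at each position and stripping the subscript, i.e. $A_\sigma(\pi)[i] \coloneqq \set{a \mid a_\pi \in \#_\pi(\sigma[i])}$. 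These two maps are mutually inverse, so there is a bijection between trace assignments and words over $\Sigma_\pathvars$.

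First I would record that, because $\psi$ is quantifier free, the trace set in the relation $\models_T$ is never consulted (it enters only through the clause for $\exists$), so $A \models_\emptyset \psi$ is well defined and in fact coincides with $A \models_T \psi$ for every $T$. The heart of the argument is then the claim that $A \models_\emptyset \psi$ if, and only if, $\sigma_A \models_\ltl \psi$, which I would prove by structural induction on $\psi$. For the base case $\psi = a_\pi$, both sides say exactly $a \in A(\pi)[0]$, i.e. $a_\pi \in \sigma_A[0]$. The Boolean cases are immediate from the induction hypothesis. For the temporal operators $\X$ and $\U$ the only thing to verify is that suffixing commutes with the encoding, namely $\sigma_{A[i,\infty]} = (\sigma_A)[i,\infty]$, which holds position-wise directly from the definition; granting this, the HyperLTL-body clauses for $\X$ and $\U$ match the LTL clauses verbatim.

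Both directions of the lemma then follow: if $A \models_\emptyset \psi$ then $\sigma_A$ witnesses LTL-satisfiability over $\Sigma_\pathvars$, and if some $\sigma$ satisfies $\psi$ as an LTL formula then $A_\sigma \models_\emptyset \psi$. Effectiveness is inherited from the explicit, letter-by-letter definitions of $\sigma_A$ and $A_\sigma$: any finitely represented model (for instance, the ultimately periodic word produced by a standard LTL satisfiability procedure) is converted into a model of the other kind simply by applying the encoding to each letter. The only point that needs genuine care is the suffix-commutation identity $\sigma_{A[i,\infty]} = (\sigma_A)[i,\infty]$ driving the temporal induction step; everything else is routine bookkeeping, so I do not expect a substantial obstacle.
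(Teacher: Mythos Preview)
Your proposal is correct and follows essentially the same approach as the paper: encode a trace assignment as a single word over $\Sigma_\pathvars$ position-wise (and conversely via the $\pi$-projection), then argue by structural induction on $\psi$ that satisfaction is preserved. You spell out more detail than the paper does---in particular the suffix-commutation step and the observation that $T$ is irrelevant for quantifier-free $\psi$---but the underlying argument is the same.
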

\begin{proof}
	Assume that there is a trace assignment $A$ over trace variables $\pathvars$ such that $A \models_\emptyset \psi$.
	We define $w \subseteq \Sigma_\pathvars^\omega$ such that $x_\pi \in w[i]$ if, and only if, $x \in A(\pi)[i]$ for all $i \geq 0$, $x \in \ap$, and $\pi \in \pathvars$.
	An induction over $\psi$ shows that $w \models_\ltl \psi$.
	
	Assume $\psi$ is satisfiable for $\models_\ltl$, i.e., there exists a $w \subseteq \Sigma^\omega_\pathvars$, such that $w \models_\ltl \psi$.
	We construct an assignment $A$ in the following manner: Let $\pi \in \pathvars$ be arbitrary.
	We map $\pi$ to the trace $t$ obtained by projecting the corresponding $p_\pi \in \Sigma_\pathvars$, i.e., $\forall i \geq 0 \ldot t[i] = \#_\pi(w[i])$.
	\qed
\end{proof}

\subsection{Monitorability} \label{sec:monitorability}

In general, we distinguish between three different input models.

In the first model, the size of the trace set is not known in advance. Traces are processed sequentially, i.e., given a set of traces $T$ and a fresh trace $t$, the monitoring algorithm has to decide whether $T \cup t$ satisfies or violates a given $\hyperltl$ formula before processing another fresh traces $t'$.

In the second model, we consider the special case of the sequential model, where a maximal bound on the trace set is known.
Figure~\ref{fig:sequential} sketches how the monitor processes the traces sequentially for the unbounded and bounded case.

In the third input model, which is a special case of the bounded model, we assume that the set of execution traces is of fixed size and, furthermore, that the traces are given at the same time. This models, for example, secure multi execution, where multiple instances of a system are run in parallel.
As shown in~\cite{DBLP:journals/fmsd/FinkbeinerS04} offline monitoring of future-time temporal formulas can be monitored efficiently by processing a trace in a backwards fashion.
Figure~\ref{fig:fixed} sketches how a monitor will process the traces in an online and offline fashion.


In the remainder of this section, we present three notions of monitorability for hyperproperties for every input model respectively.
In the first subsection, we will consider monitorability of $\hyperltl$ formulas in the unbounded sequential model, before considering the bounded sequential model and the parallel model.
We show that deciding whether alternation-free $\hyperltl$ formulas are monitorable in the unbounded fragment is $\pspace$-complete, i.e., no harder than the corresponding problem for $\ltl$.
Furthermore, we show that deciding whether an arbitrary $\hyperltl$ formula is monitorable in the bounded input model is $\pspace$-complete as well.
Those results extend earlier characterizations based on restricted syntactic fragments of $\hyperltl$~\cite{conf/csfw/AgrawalB16}.

For trace languages, monitorability is the property whether language containment can be decided by finite prefixes~\cite{conf/fm/PnueliZ06}.
Given a trace language $L \subseteq \Sigma^\omega$, the set of \emph{good} and \emph{bad} prefixes is $\good(L) \coloneqq \set{u \in \Sigma^* \mid \forall v \in \Sigma^\omega \ldot uv \in L}$ and $\bad(L) \coloneqq \set{u \in \Sigma^* \mid \forall v \in \Sigma^\omega \ldot uv \notin L}$, respectively.
$L$ is \emph{monitorable} if $\forall u \in \Sigma^* \ldot \exists v \in \Sigma^* \ldot uv \in \good(L) \lor uv \in \bad(L)$.
The decision problem, i.e., given an $\ltl$ formula $\varphi$, decide whether $\varphi$ is monitorable, is $\pspace$-complete~\cite{journals/corr/abs-1006-3638}.

A \emph{hyperproperty} $H$ is a set of trace properties, i.e., $H \subseteq \powerset(\Sigma^\omega)$.
Given $H \subseteq \powerset(\Sigma^\omega)$.
The set of \emph{good} and \emph{bad prefix traces} is $\good(H) \coloneqq \set{U \in \powerset^*(\Sigma^*) \mid \forall V \in \powerset(\Sigma^\omega) \ldot U \pref V \implies V \in H}$ and $\bad(H) \coloneqq \set{U \in \powerset^*(\Sigma^*) \mid \forall V \in \powerset(\Sigma^\omega) \ldot U \pref V \implies V \notin H}$, respectively.

\paragraph{Unbounded Sequential Model.} A hyperproperty $H$ is \emph{monitorable} in the unbounded input model if
\begin{equation*}
\forall U \in \powerset^*(\Sigma^*) \ldot \exists V \in \powerset^*(\Sigma^*) \ldot U \pref V \implies V \in \good(H) \lor V \in \bad(H) \enspace.
\end{equation*}

With this definition, hardly any alternating $\hyperltl$ formula is monitorable as their satisfaction cannot be characterized by a finite trace set, even for safety properties.
Consider, for example, the formula $\varphi = \forall \pi \ldot \exists \pi' \ldot \G (a_\pi \rightarrow b_{\pi'})$.
Assume a finite set of traces $T$ does not violate the formula.
Then, one can construct a new trace $t$ where $a \in t[i]$ and $b \notin t[i]$ for some position $i$, and for all traces $t' \in T$ it holds that $b \notin t'[i]$.
Thus, the new trace set violates $\varphi$.
Likewise, if there is a finite set of traces that violates $\varphi$, a sufficiently long trace containing only $b$'s stops the violation.

However, we present a method to decide whether a $\hyperltl$ formula in the highly expressive fragment of quantifier alternation-free formulas is monitorable.


\begin{lemma}\label{thm:forall-good-empty}
	Given a $\hyperltl$ formula $\varphi = \forall \pi_1 \dots \forall \pi_k \ldot \psi$, where $\psi$ is an $\ltl$ formula.
	It holds that $\good(\lang(\varphi)) = \emptyset$ unless $\psi \equiv \true$.
\end{lemma}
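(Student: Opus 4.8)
The plan is to assume $\psi \not\equiv \true$ and show directly that no finite set of finite traces is a good prefix, i.e. $\good(\lang(\varphi)) = \emptyset$; the case $\psi \equiv \true$ is the stated exception and needs no argument. The governing intuition is that $\varphi$ is \emph{purely universal}, so enlarging a trace set only adds obligations and never removes them. Hence, whatever finite information a candidate prefix $U$ records, I can always adjoin a bad witness tuple of traces to any infinite completion of $U$ and thereby violate $\varphi$. The point that makes this possible is that the lifted prefix relation $U \pref V$ only requires every $u \in U$ to be a prefix of \emph{some} $v \in V$, while imposing no constraint on the remaining members of $V$; this asymmetry is exactly what lets me insert the violating traces.

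First I would extract a violating assignment from the failure of validity. Since $\psi \not\equiv \true$, the formula $\neg\psi$ is satisfiable as an $\ltl$ formula over $\Sigma_\pathvars$, so applying Lemma~\ref{thm:hyperltl-body-to-ltl} to $\neg\psi$ yields a trace assignment $A \colon \pathvars \to \Sigma^\omega$ with $A \models_\emptyset \neg\psi$, i.e. $A \nmodels_\emptyset \psi$. Setting $t_i \coloneqq A(\pi_i)$ for $1 \le i \le k$, the tuple $(t_1,\dots,t_k)$ is precisely an instantiation of $\pi_1,\dots,\pi_k$ under which the quantifier-free body $\psi$ fails.

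Next I would fix an arbitrary $U \in \powerset^*(\Sigma^*)$ and construct a completion $V$ that simultaneously extends $U$ and contains this bad tuple. For each $u \in U$ take the infinite extension $u\emptyset^\omega$ (the finite word $u$ followed by infinitely many empty letters) and collect these into the finite set $V_0 \subseteq \Sigma^\omega$; then put $V \coloneqq V_0 \cup \set{t_1,\dots,t_k}$. By construction $U \pref V_0$, and since $V_0 \subseteq V$ the prefix witnesses survive, so $U \pref V$. On the other hand $t_1,\dots,t_k \in V$, so $(t_1,\dots,t_k) \in V^k$ is a legal instantiation of the universal quantifiers, and because it falsifies $\psi$ we obtain $V \nmodels \varphi$, that is $V \notin \lang(\varphi)$. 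Hence $U$ admits a completion outside $\lang(\varphi)$ and cannot be a good prefix. Since $U$ was arbitrary, $\good(\lang(\varphi)) = \emptyset$.

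The argument is short, and I expect the only point requiring care to be the bookkeeping around the quantifier semantics: I must read $\forall$ as the dual of the $\exists$-clause given in the semantics, and observe that because $\psi$ is quantifier-free its truth under an assignment is independent of the surrounding trace set — this independence is what Lemma~\ref{thm:hyperltl-body-to-ltl} encodes, and it is what lets me transport the $\ltl$-level violation $A \nmodels_\emptyset \psi$ into a violation of $\varphi$ by the concrete set $V$. The conceptual crux, rather than any technical obstacle, is the observation that universal quantification together with the one-sided prefix relation makes trace sets only harder to satisfy as they grow, so that a finite prefix can never rule out all bad completions.
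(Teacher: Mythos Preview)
Your proof is correct and follows essentially the same approach as the paper: extract a violating assignment from $\psi \not\equiv \true$ via Lemma~\ref{thm:hyperltl-body-to-ltl}, then for any candidate $U$ build a completion $V$ that extends $U$ and also contains the violating traces, so that $V \notin \lang(\varphi)$. Your version is in fact tidier---you argue directly rather than by contradiction and spell out the construction of $V$ explicitly, whereas the paper's phrasing ``for all $V$ with $U \pref V$ it holds that $W \subseteq V$'' is at best elliptical.
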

\begin{proof}
	If $\psi \equiv \true$ then $\lang(\varphi) = \powerset(\Sigma^\omega)$ and $\good(\lang(\varphi)) = \powerset^*(\Sigma^*)$.
	Assume for contradiction that $\psi \not\equiv \true$ and $\good(\lang(\varphi)) \neq \emptyset$, i.e., there is a finite set $U \subseteq \Sigma^*$ that is a good prefix set of $\varphi$.
	Since $\psi \neq \true$, there is at least one infinite trace $\sigma$ with $\sigma \nmodels \psi$.
	We translate this trace to a set of infinite traces $W$ where $W \nmodels \varphi$ using Lemma~\ref{thm:hyperltl-body-to-ltl}.
	Further, for all $V \in \powerset(\Sigma^*)$ with $U \pref V$, it holds that $W \subseteq V$, hence, $V \notin \lang(\varphi)$ violating the assumption that $U \in \good(\lang(\varphi))$.
	\qed
\end{proof}

\begin{theorem} \label{thm:monitorability-forall}
	Given a $\hyperltl$ formula $\varphi = \forall \pi_1 \dots \forall \pi_k \ldot \psi$, where $\psi \not\equiv \true$ is an $\ltl$ formula.
	$\varphi$ is monitorable if, and only if, $\forall u \in \Sigma_\pathvars^* \ldot \exists v \in \Sigma_\pathvars^* \ldot uv \in \bad(\lang(\psi))$.
\end{theorem}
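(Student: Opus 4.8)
The plan is to reduce the hyperproperty-level monitorability condition to the stated LTL-level statement over $\Sigma_\pathvars$ by exploiting the body-to-LTL correspondence of Lemma~\ref{thm:hyperltl-body-to-ltl}. First, since $\psi \not\equiv \true$, Lemma~\ref{thm:forall-good-empty} gives $\good(\lang(\varphi)) = \emptyset$, so the only attainable verdict is violation, and monitorability of $\varphi$ collapses to the requirement that every finite prefix trace set can be continued to a set in $\bad(\lang(\varphi))$. The key bridge I would establish is a characterization of bad prefix trace sets in terms of bad prefix words: for a finite trace set $V$, we have $V \in \bad(\lang(\varphi))$ exactly when some $k$-tuple of traces of $V$, read over $\Sigma_\pathvars$ through the $\pi$-projections, is a bad prefix of $\lang(\psi)$. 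The ``if'' part is immediate from Lemma~\ref{thm:hyperltl-body-to-ltl}: if a tuple already forms a word in $\bad(\lang(\psi))$, then in every infinite completion that tuple still violates $\psi$, hence the completed set violates the universal formula $\varphi$.

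For the sufficiency direction, assuming $\forall u \in \Sigma_\pathvars^* \ldot \exists v \ldot uv \in \bad(\lang(\psi))$, I would argue as follows. Given any prefix trace set $U$, pick the word $u \in \Sigma_\pathvars^*$ induced by a chosen $k$-tuple of traces of $U$ (padding to a common length). By assumption $u$ extends to some $uv \in \bad(\lang(\psi))$; decoding $uv$ via the projections $\#_{\pi_j}$ yields finite traces that extend the corresponding traces of $U$, so that the resulting set $V$ satisfies $U \pref V$ and contains a bad prefix tuple. By the characterization above $V \in \bad(\lang(\varphi))$, whence every prefix set has a conclusive continuation and $\varphi$ is monitorable.

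For the necessity direction I would argue by contraposition: assume there is a word $u_0 \in \Sigma_\pathvars^*$ that admits no bad extension, i.e.\ every $u_0 v$ still has an infinite completion satisfying $\psi$. Decoding $u_0$ through the projections gives a prefix trace set $U^*$, and the goal is to show $U^*$ has no conclusive continuation. Since $\good(\lang(\varphi)) = \emptyset$, it suffices to show $U^*$ can never be driven into $\bad(\lang(\varphi))$; by the characterization this amounts to showing that no continuation forces one of its tuples to become a bad prefix word, which the \emph{liveness} of $u_0$ should guarantee. The main obstacle is precisely this step: a continuation of $U^*$ may extend several traces simultaneously and introduce additional tuples, so one must verify that the satisfying completions promised for the individual tuples can be chosen \emph{coherently}, i.e.\ that a single infinite completion of the whole trace set keeps every tuple on a $\psi$-satisfying continuation. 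I expect to discharge this by a prefix-closed (K\"onig-style) selection over the finitely many tuples at each length, using liveness of $u_0$ to keep the distinguished tuple satisfiable and propagating the choice to the remaining tuples; making this simultaneous completion precise, and ensuring that no continuation creates an unavoidable bad tuple, is the crux of the argument.
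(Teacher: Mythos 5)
Your sufficiency direction is essentially the paper's: map the prefix set to a word over $\Sigma_\pathvars$, extend it to a bad prefix of $\psi$ using the hypothesis, decode via the projections, and observe that every completion of the resulting set contains a violating tuple. That part is fine. The genuine gap is in the necessity direction, and it sits exactly where you say it does --- but you have only named it, not closed it. Your argument leans on the biconditional ``$V \in \bad(\lang(\varphi))$ iff some $k$-tuple of traces of $V$ reads as a bad prefix word of $\psi$''; you justify only the ``if'' half, while the step your contraposition needs is the ``only if'' half: that a prefix set which cannot be forced to contain a bad tuple admits, for \emph{every} finite continuation $V$, a single infinite completion $W$ in which \emph{every} $k$-tuple satisfies $\psi$. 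The K\"onig-style simultaneous selection you gesture at is not routine here: the liveness of $u_0$ only controls extensions of the one distinguished tuple, whereas $W \models \varphi$ constrains all tuples, including tuples built from traces of $V$ that are not extensions of any trace of $U^*$ at all (the definition $U \pref V$ permits $V$ to contain entirely fresh traces, so a continuation can become conclusive ``for free'' by introducing a fresh violating trace). Nothing in the hypothesis on $u_0$ constrains those, so the completion argument as sketched does not go through without an additional idea or an additional restriction on how $V$ may extend $U^*$.

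For comparison, the paper proves necessity directly rather than by contraposition: from monitorability and the emptiness of $\good(\lang(\varphi))$ (Lemma~\ref{thm:forall-good-empty}) it extracts, for the canonical trace set of a given word $u$, a continuation $V' \in \bad(\lang(\varphi))$, reads off an extension word $v'$, and asserts $uv' \in \bad(\lang(\psi))$ ``since $V' \in \bad(\lang(\varphi))$.'' That assertion is precisely the ``only if'' half of your characterization, so the paper is equally terse at the load-bearing step. You have correctly located the crux, but a complete proof would have to supply the argument that neither your proposal nor the paper's one-line justification actually spells out.
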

\begin{proof}
	Assume $\forall u \in \Sigma_\pathvars^* \ldot \exists v \in \Sigma_\pathvars^* \ldot uv \in \bad(\lang(\psi))$ holds.
	Given an arbitrary prefix $U \in \powerset^*(\Sigma^*)$.
	Pick an arbitrary mapping from $U$ to $\Sigma^*_\pathvars$ and call it $u'$.
	By assumption, there is a $v' \in \Sigma^*_\pathvars$ such that $u'v' \in \bad(\lang(\psi))$.
	We use this $v'$ to extend the corresponding traces in $U$ resulting in $V \in \powerset^*(\Sigma^*)$.
	It follows that for all $W \in \powerset(\Sigma^\omega)$ with $V \pref W$, $W \nmodels \varphi$, hence, $V \in \bad(\lang(\varphi))$.

	Assume $\varphi$ is monitorable, thus, $\forall U \in \powerset^*(\Sigma^*) \ldot \exists V \in \powerset^*(\Sigma^*) \ldot U \pref V \implies V \in \good(\lang(\varphi)) \lor V \in \bad(\lang(\varphi))$.
	As the set of good prefixes $\good(\lang(\varphi))$ is empty by Lemma~\ref{thm:forall-good-empty} we can simplify the formula to $\forall U \in \powerset^*(\Sigma^*) \ldot \exists V \in \powerset^*(\Sigma^*) \ldot U \pref V \implies V \in \bad(\lang(\varphi))$.
	Given an arbitrary $u \in \Sigma_\pathvars^*$, we translate it into the (canonical) $U'$ and get a $V'$ satisfying the conditions above.
	Let $v' \in \Sigma_\pathvars^*$ be the finite trace constructed from the extensions of $u$ in $V'$ (not canonical, but all are bad prefixes since $V' \in \bad(\lang(\varphi))$).
	By assumption, $u'v' \in \bad(\lang(\psi))$.
	\qed
\end{proof}

\begin{corollary}
	Given a $\hyperltl$ formula $\varphi = \exists \pi_1 \dots \exists \pi_k. \psi$, where $\psi$ is an $\ltl$ formula.
	$\varphi$ is monitorable if, and only if, $\forall u \in \Sigma_\pathvars^* \ldot \exists v \in \Sigma_\pathvars^* \ldot uv \in \good(\lang(\psi))$.
\end{corollary}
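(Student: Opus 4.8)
The proof proceeds by duality with Theorem~\ref{thm:monitorability-forall}, systematically exchanging the existential and universal quantifiers and, correspondingly, the roles of good and bad prefixes. The plan is to first establish the dual of Lemma~\ref{thm:forall-good-empty}, namely that for $\varphi = \exists\pi_1\dots\exists\pi_k\ldot\psi$ we have $\bad(\lang(\varphi)) = \emptyset$ unless $\psi \equiv \false$. Here $T \models \varphi$ holds exactly when \emph{some} assignment of the $k$ variables to traces of $T$ satisfies the body $\psi$, so $\lang(\varphi)$ is upward closed under adding traces. Hence, if $\psi \not\equiv \false$, I would fix a satisfying assignment $A \models_\emptyset \psi$ and, via Lemma~\ref{thm:hyperltl-body-to-ltl}, realize it as a tuple of infinite traces; for an arbitrary prefix set $U$ one extends the traces of $U$ arbitrarily and additionally throws in these witnessing traces, yielding an infinite $W$ with $U \pref W$ and $W \models \varphi$, so $U \notin \bad(\lang(\varphi))$. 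As in the universal case, $\psi \equiv \false$ is the degenerate situation in which $\bad(\lang(\varphi)) = \powerset^*(\Sigma^*)$, and the stated equivalence then requires the implicit side condition $\psi \not\equiv \false$, paralleling the hypothesis $\psi \not\equiv \true$ of the theorem. With $\bad(\lang(\varphi)) = \emptyset$, the monitorability condition collapses to $\forall U \in \powerset^*(\Sigma^*)\ldot \exists V \in \powerset^*(\Sigma^*)\ldot U \pref V \land V \in \good(\lang(\varphi))$, the mirror image of the simplification exploited in Theorem~\ref{thm:monitorability-forall}.

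With this reduction in hand, I would prove the two directions exactly as in the theorem, with $\bad$ replaced throughout by $\good$. For the \emph{if} direction, assume $\forall u\ldot \exists v\ldot uv \in \good(\lang(\psi))$. Given an arbitrary $U$, I pick a mapping of the trace variables to traces of $U$, read off the associated word $u' \in \Sigma_\pathvars^*$, obtain $v'$ with $u'v' \in \good(\lang(\psi))$, and extend the chosen traces of $U$ by the projections of $v'$ to form $V$. Since those $k$ traces now synchronize to a good prefix of $\psi$, every infinite $W$ with $V \pref W$ carries a witnessing assignment satisfying $\psi$, so $W \models \varphi$; hence $V \in \good(\lang(\varphi))$ and $U \pref V$, witnessing monitorability. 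For the \emph{only if} direction, assume $\varphi$ is monitorable; using $\bad(\lang(\varphi)) = \emptyset$ one obtains, for any $u$, the canonical prefix set $U'$ and a good prefix set $V'$ extending it, from whose extensions of $u$ one reads off $v'$ with $u'v' \in \good(\lang(\psi))$.

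The main obstacle, just as in the universal case, is the faithful translation between good prefix \emph{sets} of $\varphi$ and good prefixes of the \emph{body} $\psi$ over the product alphabet $\Sigma_\pathvars$, carried out through Lemma~\ref{thm:hyperltl-body-to-ltl}. The \emph{if} direction is the clean one: it only needs the monotonicity of good prefixes, i.e.\ that embedding a single good-prefix tuple into $V$ already forces $\exists$-satisfaction under every extension, and adding the witnessing traces is legitimate precisely because $U \pref V$ does not forbid extra traces in $V$. The delicate step is the \emph{only if} direction, where from $V' \in \good(\lang(\varphi))$ one must argue that the particular $k$-tuple extending the given $u$ is itself a good prefix of $\psi$ rather than merely that \emph{some} witnessing tuple exists in every extension; this is exactly the point mirroring the corresponding claim in Theorem~\ref{thm:monitorability-forall}, and it is where the canonical-translation bookkeeping between $U,V$ and $u,v$ has to be made precise.
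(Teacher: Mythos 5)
Your overall strategy---dualizing Theorem~\ref{thm:monitorability-forall} by exchanging $\good$ and $\bad$---is exactly the route the paper intends (it gives no separate proof of the corollary), and two of your three ingredients are sound. The dual of Lemma~\ref{thm:forall-good-empty}, namely $\bad(\lang(\varphi)) = \emptyset$ for $\psi \not\equiv \false$, works because $\bad$ quantifies over \emph{all} $V$ with $U \pref V$ and the lifted prefix relation lets you ``throw in'' the witnessing traces obtained from Lemma~\ref{thm:hyperltl-body-to-ltl}. The \emph{if} direction is also fine: appending the projections of $v'$ to the chosen traces of $U$ forces an existential witness tuple in every $\omega$-extension of $V$, modulo the same padding conventions the paper itself leaves implicit.

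The problem is the \emph{only if} direction, which you flag as ``bookkeeping to be made precise'' but which actually fails as sketched. From $V' \in \good(\lang(\varphi))$ with $U' \pref V'$ you cannot in general read off a $v$ with $uv \in \good(\lang(\psi))$: since $U \pref V$ permits $V$ to contain traces that extend nothing in $U$, the set $V'$ may be good solely on account of such extra witnesses rather than the extensions of $u$'s projections. Concretely, for $k=1$ and $\psi = a_{\pi_1}$, take $u$ to be the one-letter word $\emptyset$; the set $V'$ consisting of the two one-letter traces $\emptyset$ and $\set{a}$ satisfies $U' \pref V'$ and lies in $\good(\lang(\exists \pi_1 \ldot a_{\pi_1}))$, yet no extension of $u$ is a good prefix of $a_{\pi_1}$, so the required $v$ does not exist. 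To close the argument you must either read the monitorability definition as restricting $V$ to elementwise extensions of $U$, or show that a good $V'$ can always be chosen of that restricted form; and even then you must address the fact that the witnessing assignment in an extension of $V'$ need not be the canonical tuple of extensions of $u$'s projections (it may permute or repeat traces of $V'$). The paper's own proof of Theorem~\ref{thm:monitorability-forall} glosses over the mirror image of this point, so you have faithfully reproduced the intended argument---including its weakest step---but as a self-contained proof the extraction step is a genuine gap, not mere bookkeeping.
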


\begin{theorem}
	Given an alternation-free $\hyperltl$ formula $\varphi$.
	Deciding whether $\varphi$ is monitorable in the unbounded sequential model is $\pspace$-complete.
\end{theorem}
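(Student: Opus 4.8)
The plan is to reduce, via Theorem~\ref{thm:monitorability-forall} and its corollary, the monitorability question to a one-sided prefix-reachability condition on the \ltl{} body $\psi$ over the alphabet $\Sigma_\pathvars$, and then to settle that condition using the known $\pspace$-completeness of \ltl{} monitorability~\cite{journals/corr/abs-1006-3638}. An alternation-free $\varphi$ is either $\forall\pi_1\dots\forall\pi_k\ldot\psi$ or $\exists\pi_1\dots\exists\pi_k\ldot\psi$ with $\psi$ an \ltl{} formula. First I would dispatch the degenerate cases: for the universal form, test whether $\psi\equiv\true$ (an \ltl{} validity check, in $\pspace$), in which case $\varphi$ is trivially monitorable by Lemma~\ref{thm:forall-good-empty}; dually, for the existential form test $\psi\equiv\false$. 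In all remaining cases Theorem~\ref{thm:monitorability-forall} says the universal $\varphi$ is monitorable iff $\forall u\in\Sigma_\pathvars^*\ldot\exists v\in\Sigma_\pathvars^*\ldot uv\in\bad(\lang(\psi))$, and the corollary gives the analogous statement with $\good$ in place of $\bad$ for the existential form. Observing that $u\in\bad(\lang(\psi))$ iff every extension of $u$ satisfies $\neg\psi$, i.e. $u\in\good(\lang(\neg\psi))$, both cases collapse to a single primitive problem: given an \ltl{} formula $\theta$ over $\Sigma_\pathvars$ (namely $\theta=\neg\psi$ in the universal case and $\theta=\psi$ in the existential case), decide whether every finite word extends to a good prefix of $\lang(\theta)$. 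Call this \textsc{GoodReach}$(\theta)$.

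For the upper bound I would reduce \textsc{GoodReach}$(\theta)$ to ordinary \ltl{} monitorability. Let $d$ be a fresh proposition and set $\theta'=\theta\vee\G\F d$ over $\Sigma_\pathvars$ extended by $d$. Since any finite word can be extended by a suffix containing $d$ infinitely often, $\theta'$ has no bad prefixes, so $\bad(\lang(\theta'))=\emptyset$; and because $\theta$ ignores $d$, a word is a good prefix of $\theta'$ exactly when its $\Sigma_\pathvars$-projection is a good prefix of $\theta$. Consequently $\theta'$ is monitorable iff \textsc{GoodReach}$(\theta)$ holds. As $\theta'$ has size linear in $|\varphi|$ over a polynomially large alphabet, and \ltl{} monitorability is in $\pspace$~\cite{journals/corr/abs-1006-3638}, membership follows. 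Alternatively, one can argue directly: build the exponential-size nondeterministic automaton for $\theta$, mark its states with non-empty language, and check the required reachability of deciding macro-states on the fly, which is exactly the step for which the cited $\pspace$ algorithm is designed.

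For the lower bound I would reduce from \ltl{} monitorability, which is $\pspace$-hard~\cite{journals/corr/abs-1006-3638}, taking $k=1$ so that $\Sigma_\pathvars$ is just a relabelling of $\Sigma$. The subtlety is that monitorability of $\varphi$ is governed by the \emph{one-sided} condition \textsc{GoodReach}, whereas \ltl{} monitorability of a formula $\chi$ is the \emph{two-sided} requirement that every prefix extend to a good \emph{or} bad prefix. The plan is therefore to build, from $\chi$, a formula $\theta$ over $\Sigma$ enlarged by fresh ``commit'' propositions whose good prefixes are reachable precisely at the prefixes that decide $\chi$, so that \textsc{GoodReach}$(\theta)$ holds iff $\chi$ is monitorable; the existential formula $\exists\pi\ldot\theta$ then realises the reduction. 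Faithfully turning the two-sided deciding event for $\chi$ into a one-sided reachable good prefix of a single \ltl{} formula is the main obstacle, and I would address it by adapting the gadget used in the hardness proof of \ltl{} monitorability, guarding the two verdicts with the fresh propositions so that reaching a committed state coincides with $\chi$ having become good or bad.

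The crux throughout is the determinization/subset-construction blow-up: read naively, \textsc{GoodReach} is a reachability question in a doubly-exponential deterministic monitor, which only yields an \textsc{ExpSpace} bound. The whole point of the argument is to avoid materialising that monitor---either by inheriting $\pspace$ through the reduction to the black-box \ltl{} monitorability result, or by reproducing its on-the-fly guessing procedure---and, on the hardness side, to encode a two-sided monitoring verdict into the one-sided reachability condition without an exponential detour.
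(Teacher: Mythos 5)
Your reduction of the problem to the characterization of Theorem~\ref{thm:monitorability-forall}, the validity pre-check, and the treatment of the existential case by duality all match the paper. For the upper bound you take a genuinely different and arguably cleaner route: the paper invokes ``a slight modification of the $\pspace$ algorithm given by Bauer'' for the one-sided condition $\forall u \ldot \exists v \ldot uv \in \bad(\lang(\psi))$, whereas you give a black-box reduction to ordinary $\ltl$ monitorability via $\theta' = \theta \vee \G\F d$ for a fresh $d$, which kills all bad prefixes and makes the two-sided monitorability of $\theta'$ coincide with good-prefix reachability for $\theta$. This argument is correct (goodness of a prefix of $\theta'$ depends only on its $\Sigma_\pathvars$-projection, and every finite word extends to one with $d$ infinitely often), and it buys self-containedness at the price of a slightly larger formula; the paper's version avoids the detour but leaves the ``slight modification'' unspecified.

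The genuine gap is in your lower bound. You correctly diagnose the one-sided/two-sided mismatch that the paper glosses over with ``hardness follows as the problem is already $\pspace$-hard for $\ltl$,'' but your proposed fix --- guarding the two verdicts of $\chi$ with fresh ``commit'' propositions --- is left as a plan, and the natural instantiations of it fail. The obstruction is that a good prefix must account for \emph{all} futures, including futures that flip or revoke the commitment: formulas like $(\F c \rightarrow \chi) \wedge (\F d \rightarrow \neg\chi)$ or $(\G\neg c \wedge \G\neg d) \vee (\F c \wedge \chi) \vee (\F d \wedge \neg\chi)$ have empty good-prefix sets precisely because any finite word still admits extensions containing both $c$ and $d$, and committing via the first letter instead yields the conjunction of good-reachability and bad-reachability of $\chi$, which is strictly stronger than monitorability. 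A much simpler route sidesteps the conversion entirely: reduce from $\ltl$ \emph{unsatisfiability} (which is $\pspace$-hard since $\pspace$ is closed under complement). Given $\chi$ over $\ap$ and a fresh proposition $b$, let $\varphi \coloneqq \forall \pi \ldot \F(\chi_\pi \wedge b_\pi)$. The body is never valid (a trace without $b$ falsifies it), so Theorem~\ref{thm:monitorability-forall} applies; if $\chi$ is unsatisfiable the body is unsatisfiable and every nonempty prefix set is bad, so $\varphi$ is monitorable, while if $\chi$ has a model $w$ then every finite $u$ extends to $u w'$ (with $b$ added everywhere in $w'$) satisfying the body, so $\bad(\lang(\F(\chi \wedge b))) = \emptyset$ and $\varphi$ is not monitorable. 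This closes the hardness argument with a single quantifier and no gadget.
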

\begin{proof}
	We consider the case that $\varphi = \forall \pi_1 \dots \forall \pi_2 \ldot \psi$, the case for existentially quantified formulas is dual.
	We apply the characterization from Theorem~\ref{thm:monitorability-forall}.
	First, we have to check validity of $\psi$ which can be done in polynomial space~\cite{conf/stoc/SistlaC82}.
	Next, we have to determine whether $\forall u \in \Sigma_\pathvars^* \ldot \exists v \in \Sigma_\pathvars^* \ldot uv \in \bad(\lang(\psi))$.
	We use a slight modification of the $\pspace$ algorithm given by Bauer~\cite{journals/corr/abs-1006-3638}.
	Hardness follows as the problem is already $\pspace$-hard for $\ltl$.
	\qed
\end{proof}

\paragraph{Bounded Sequential Model.}
A Hyperproperty $H$ is \emph{monitorable} in the bounded input model if
\begin{equation*}
	\forall U \in \powerset^*(\Sigma^*) \ldot \exists V \in \powerset^*(\Sigma^*) \ldot U \pref V \wedge |U| = |V| \implies V \in \good(H) \lor V \in \bad(H) \enspace.
\end{equation*}

\begin{corollary}
	Given an arbitrarily $\hyperltl$ formula $\varphi$. Deciding whether $\varphi$ is monitorable in the bounded sequential model is $\pspace$-complete.
\end{corollary}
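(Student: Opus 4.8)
The plan is to reduce bounded-model monitorability to ordinary $\ltl$ monitorability over a product alphabet, exploiting that the constraint $|U| = |V|$ freezes the number of traces throughout the extension. Fix a cardinality $m$ and consider trace sets $T$ with $|T| = m$. Since each trace quantifier now ranges over the finite set $T$, a universal quantifier unfolds into a conjunction over the $m$ traces and an existential quantifier into a disjunction; applying this to $\varphi = Q_1 \pi_1 \dots Q_k \pi_k \ldot \psi$ yields an $\ltl$ formula $\varphi_m$ over the alphabet $\Sigma_{[m]} = (2^\ap)^m$ that tags every atomic proposition with one of the $m$ trace indices. Using the semantics of $\hyperltl$ together with Lemma~\ref{thm:hyperltl-body-to-ltl}, I would show that $T \models \varphi$ if, and only if, the componentwise encoding of $T$ as a single word over $\Sigma_{[m]}$ is a model of $\varphi_m$ under $\models_\ltl$. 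Consequently a size-$m$ set $U \in \powerset^*(\Sigma^*)$ is a good (bad) prefix trace set of $\lang(\varphi)$ exactly when its encoding is a good (bad) prefix of $\lang(\varphi_m)$ in the classical $\ltl$ sense, so bounded-model monitorability at cardinality $m$ is precisely $\ltl$ monitorability of $\varphi_m$.

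Second, I would bound the cardinalities that have to be inspected. Because $\psi$ mentions at most $k$ trace variables, every instantiation of the quantifiers touches at most $k$ traces simultaneously, and only the pattern of which variables are identified with which traces is relevant to $\psi$. I expect this to let me argue that it suffices to decide $\ltl$ monitorability of $\varphi_m$ for all $m \le k$: any larger trace set realizes no identification pattern beyond those already available with $k$ traces, so no additional good or bad prefixes, and hence no new obstruction to monitorability, can arise. This reduction from ``all cardinalities'' to the finitely many values $m \in \{1, \dots, k\}$ is the step I expect to require the most care, since it must simultaneously handle universal and existential quantifiers, so that both $\good(\lang(\varphi))$ and $\bad(\lang(\varphi))$ are treated, in contrast to Theorem~\ref{thm:monitorability-forall} where $\good(\lang(\varphi)) = \emptyset$.

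Finally, for each $m \le k$ I would run the $\pspace$ algorithm of Bauer~\cite{journals/corr/abs-1006-3638} that decides $\ltl$ monitorability. The one subtlety is that $\varphi_m$, written out explicitly, is of size exponential in $k$, so I would not construct it as a formula; instead I would work with the self-composition B\"uchi automaton obtained as the $m$-fold product of the (exponential) subset automaton for $\psi$, which can be explored on the fly in space polynomial in $|\varphi|$ and $m \le k$. Monitorability then reduces, following Bauer, to a reachability question on this implicitly represented automaton, namely that from every reachable state a good or a bad component is reachable, which lies in $\mathrm{NPSPACE} = \pspace$. Membership in $\pspace$ follows; the validity checks underlying the quantifier unfolding can be carried out in polynomial space by~\cite{conf/stoc/SistlaC82}. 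Hardness is inherited from the case $k = 1$, where bounded-model monitorability of $\forall \pi \ldot \psi$ (equivalently $\exists \pi \ldot \psi$) coincides with $\ltl$ monitorability of $\psi$, which is already $\pspace$-hard.
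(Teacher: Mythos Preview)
The paper states this result as a bare corollary with no proof, so there is nothing explicit to compare against; presumably the authors regard it as immediate once the cardinality constraint $|U|=|V|$ is in place. Your overall plan---unfold the quantifiers over a fixed-size trace set to obtain an $\ltl$ formula $\varphi_m$ over a product alphabet and then invoke $\ltl$ monitorability---is the natural route and is almost certainly what the authors have in mind.

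That said, the proposal has genuine gaps. First, your Step~1 asserts that a size-$m$ set $U$ lies in $\good(\lang(\varphi))$ (resp.\ $\bad$) iff its encoding is a good (resp.\ bad) prefix of $\lang(\varphi_m)$. Under the paper's literal definitions this fails: $\good(H)$ and $\bad(H)$ quantify over extensions $V\in\powerset(\Sigma^\omega)$ of \emph{unrestricted} cardinality, so for $\varphi=\forall\pi\ldot\forall\pi'\ldot\G(a_\pi\leftrightarrow a_{\pi'})$ and $m=1$ every word is a good prefix of $\varphi_1\equiv\true$, yet no singleton $U$ is in $\good(\lang(\varphi))$ because one can always add a second, disagreeing trace. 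You are tacitly using a size-preserving variant of $\good$/$\bad$; that is probably the intended reading, but you must make the assumption explicit and argue why it is the right one here. Second, and more seriously, your Step~2 is explicitly only an expectation, not an argument. For $m>k$ the unfolding $\varphi_m$ is a nested boolean combination of $m^k$ relabelled copies of $\psi$ that \emph{share} trace components, and it is not at all clear that the overlaps among different $k$-tuples cannot create a prefix with no good/bad extension even when every $\varphi_{m'}$ with $m'\le k$ is monitorable; you give no mechanism to rule this out. Third, even granting Step~2, your complexity sketch in Step~3 is too optimistic: the automaton for $\varphi_m$ with alternating quantifiers is not an $m$-fold self-composition of the $\psi$-automaton but a nested intersection/union over up to $k^k$ relabelled copies, and you have not shown that its state space can be explored on the fly in space polynomial in $|\varphi|$.
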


\paragraph{Parallel Model.}
A Hyperproperty $H$ is \emph{monitorable} in the fixed size input model if for a given bound $b$
\begin{equation*}
	\forall U \in \powerset^b(\Sigma^*) \ldot \exists V \in \powerset^b(\Sigma^*) \ldot U \pref V \implies V \in \good(H) \lor V \in \bad(H) \enspace.
\end{equation*}
This is a special case of the monitorability notion for the bounded input model, where the bound is fixed beforehand. Thus, the previous results are carried over to this model.
\begin{corollary}
	.Given an arbitrarily $\hyperltl$ formula $\varphi$. Deciding whether $\varphi$ is monitorable in the parallel model is $\pspace$-complete.
\end{corollary}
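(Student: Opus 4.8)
The plan is to prove both directions by viewing the parallel model as the bounded model with the trace-set size frozen to the given bound $b$, and by reducing parallel monitorability to plain $\ltl$ monitorability over a $b$-fold product alphabet.

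For membership in $\pspace$, I would reduce the pair $(\varphi, b)$ to a single $\ltl$ formula $\hat{\varphi}$ over an alphabet $\Sigma_b$ in which every $a \in \ap$ is duplicated into indexed copies $a_1, \dots, a_b$, one per trace slot. The quantifier prefix is unfolded against the $b$ fixed slots: each $\forall \pi_i$ becomes a conjunction $\bigwedge_{j=1}^{b}$ and each $\exists \pi_i$ a disjunction $\bigvee_{j=1}^{b}$ ranging over the slot assigned to $\pi_i$, and in the body each $a_{\pi_i}$ is replaced by $a_j$ for the chosen slot $j$. A straightforward induction, in the spirit of Lemma~\ref{thm:hyperltl-body-to-ltl}, then shows that a size-$b$ set $T = \set{t_1, \dots, t_b}$ satisfies $\varphi$ iff the pointwise-combined word over $\Sigma_b$ satisfies $\hat{\varphi}$ under $\models_\ltl$. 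Under this encoding, and because the fixed-size model admits only size-$b$ extensions, the $\good$ and $\bad$ prefix sets of size $b$ correspond exactly to $\good(\lang(\hat{\varphi}))$ and $\bad(\lang(\hat{\varphi}))$, so $\varphi$ is monitorable in the parallel model iff $\hat{\varphi}$ is monitorable as an $\ltl$ formula, which is decidable in $\pspace$ by the algorithm of Bauer~\cite{journals/corr/abs-1006-3638}. This is precisely the fixed-size specialisation of the bounded-model procedure, so the space bound is inherited from the preceding bounded-model result.

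For $\pspace$-hardness I would reduce from $\ltl$ monitorability, which is already $\pspace$-hard~\cite{journals/corr/abs-1006-3638}. Given an $\ltl$ formula $\theta$ over $\ap$, I set $b = 1$ and take $\varphi = \forall \pi \ldot \theta_\pi$, where $\theta_\pi$ indexes every atom of $\theta$ with the single variable $\pi$. Since the fixed-size model only admits size-$b$ extensions, for $b = 1$ every relevant extension is again a single infinite trace and $\set{t} \models \varphi$ iff $t \models_\ltl \theta$; consequently the good and bad prefix sets of $\varphi$ are in bijection with the ordinary good and bad prefixes of $\theta$. Hence $\varphi$ is monitorable in the parallel model exactly when $\theta$ is monitorable, which establishes hardness.

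The step I expect to require the most care is keeping the membership direction inside $\pspace$: unfolding $k$ quantifiers over $b$ slots makes $\hat{\varphi}$ of size $O(b^k\,\card{\psi})$, so naively handing $\hat{\varphi}$ to the $\ltl$ monitorability procedure does not obviously respect a polynomial space budget. The way I would handle this is to never materialise $\hat{\varphi}$ or its monitor automaton, but instead run Bauer's reachability checks on the product monitor on the fly, generating successor states from the slot assignments and the body monitor for $\psi$ on demand, exactly as in the bounded model from which this result is specialised. Once the encoding and this implicit exploration are fixed, the satisfaction-preservation and prefix-correspondence arguments are routine inductions.
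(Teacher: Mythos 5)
Your overall strategy---unfolding the quantifier prefix over $b$ fixed slots and reducing to $\ltl$ monitorability over a product alphabet---is the natural way to make precise what the paper leaves implicit (the paper merely observes that the parallel model is the bounded model with the bound fixed in advance and lets the earlier results ``carry over''), and your hardness direction via $b=1$ and $\forall\pi\ldot\theta_\pi$ matches the paper's own style of inheriting hardness from $\ltl$. The genuine gap is in the membership direction, and your own ``on-the-fly'' caveat does not close it. The unfolded formula $\hat\varphi$ has size $\Theta(b^k\card{\psi})$ for $k$ quantifiers, so invoking the $\pspace$ procedure of~\cite{journals/corr/abs-1006-3638} on $\hat\varphi$ gives space polynomial in $\card{\hat\varphi}$, i.e., exponential in the input $(\varphi,b)$. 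Generating the product monitor on demand does not repair this: any reachability-based monitorability check must hold at least one state of that monitor in memory, and the natural state is a tuple of $b^k$ component states of the monitor for $\psi$ (one per slot assignment), which already exceeds a polynomial space budget. To land in $\pspace$ you need a further idea---for instance exploiting the symmetry among slot assignments, or a characterization in the style of Theorem~\ref{thm:monitorability-forall} that operates directly on $\psi$ over $\Sigma_\pathvars$ rather than on the unfolded formula---and none is supplied.

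A secondary issue: your claim that the size-$b$ good and bad prefix sets of $\varphi$ ``correspond exactly'' to $\good(\lang(\hat\varphi))$ and $\bad(\lang(\hat\varphi))$ silently replaces the paper's $\good(H)$ and $\bad(H)$, which quantify over arbitrary infinite extensions $V\in\powerset(\Sigma^\omega)$ of any cardinality, by versions relativized to extensions of size exactly $b$. That relativized reading is surely what is intended for the bounded and parallel models (under the literal definitions, Lemma~\ref{thm:forall-good-empty} would make even your $b=1$ gadget fail, since $\good(\lang(\forall\pi\ldot\theta_\pi))=\emptyset$ whenever $\theta$ is not valid, while $\theta$ itself may have good prefixes), but it is not what the definitions say, so the relativization should be stated and justified rather than asserted as an exact correspondence.
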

\section{Monitoring Hyperproperties in the Parallel Model}
\label{sec:fixed_size_monitoring}

We begin with the special case of the parallel model, where the set of traces is fixed in advance. Therefore, the traces can be processed in parallel, either online in a forward fashion, or offline in a backwards fashion. Before describing the respective algorithms, we define the finite trace semantics for $\hyperltl$ in the next subsection.

\subsection{Finite Trace Semantics}
We define a finite trace semantics for $\hyperltl$ based on the finite trace semantics of LTL~\cite{books/daglib/0080029}.
In the following, when using $\lang(\varphi)$ we refer to the finite trace semantics of a $\hyperltl$ formula $\varphi$.
Let $t$ be a finite trace, $\epsilon$ denotes the empty trace, and $|t|$ denotes the length of a trace. Since we are in a finite trace setting, $t[i,\ldots]$ denotes the subsequence from position $i$ to position $|t|-1$.
Let $\pathassignfin : \pathvars \rightarrow \Sigma^*$ be a partial function mapping trace variables to finite traces. We define $\epsilon[0]$ as the empty set.
$\pathassignfin[i, \ldots]$ denotes the trace assignment that is equal to $\pathassignfin(\pi)[i,\ldots]$ for all $\pi$. We define a subsequence of $t$ as follows.
\[
t[i,j] = \begin{cases}
\epsilon & \text{if } i \geq |t|\\
t[i,\textit{min}(j,|t|-1)], & \text{otherwise}
\end{cases}
\]
\begin{equation*}
\begin{array}{ll}
\pathassignfin \models_T a_\pi         \qquad \qquad & \text{if } a \in \pathassignfin(\pi)[0] \\
\pathassignfin \models_T \neg \varphi              & \text{if } \pathassignfin \nmodels_T \varphi \\
\pathassignfin \models_T \varphi \lor \psi         & \text{if } \pathassignfin \models_T \varphi \text{ or } \pathassignfin \models_T \psi \\
\pathassignfin \models_T \X \varphi                & \text{if } \pathassignfin[1,\ldots] \models_T \varphi \\
\pathassignfin \models_T \varphi\U\psi             & \text{if } \exists i \geq 0 \ldot \pathassignfin[i,\ldots] \models_T \psi \land \forall 0 \leq j < i \ldot \pathassignfin[j,\ldots] \models_T \varphi \\
\pathassignfin \models_T \exists \pi \ldot \varphi & \text{if there is some } t \in T \text{ such that } \pathassignfin[\pi \mapsto t] \models_T \varphi
\end{array}
\end{equation*}

\subsection{Monitoring Algorithm}

In this subsection, we describe our automata-based monitoring algorithm for the parallel model for $\hyperltl$.
We present an online algorithm that processes the traces in a forward fashion, and an offline algorithm that processes the traces backwards.

\paragraph{Online Algorithm.} For the online algorithm, we employ standard techniques for building LTL monitoring automata and use this to instantiate this monitor by the traces as specified by the $\hyperltl$ formula.
Let $\ap$ be a set of atomic propositions and $\pathvars = \{\pi_1, \ldots, \pi_n\}$ a set of trace variables. 
A deterministic monitor template $\monitor = (\Sigma, Q, \delta, q_0)$ is a four tuple of
a finite alphabet $\Sigma = 2^{\ap \times \pathvars}$,
a non-empty set of states $Q$,
a partial transition function $\delta: Q \times \Sigma \hookrightarrow Q$,
and a designated initial state $q_0 \in Q$.
The instantiated automaton runs in parallel over traces in $(2^\ap)^*$, thus we define a run with respect to a $n$-ary tuple $N \in ((2^\ap)^*)^n$ of finite traces.
A run of $N$ is a sequence of states $q_0 q_1 \cdots q_m \in Q^*$, where $m$ is the length of the smallest trace in $N$, starting in the initial state $q_0$ such that for all $i$ with $0 \leq i < m$ it holds that 
\begin{equation*}
\delta\left(q_i, \bigcup_{j=1}^n \bigcup_{a \in N(j)(i)} \set{(a,\pi_j)} \right) = q_{i+1} \enspace.
\end{equation*}
A tuple $N$ is accepted, if there is a run on $\monitor$.
For LTL, such a deterministic monitor can be constructed in doubly-exponential time in the size of the formula~\cite{conf/cav/dAmorimR05,journals/fmsd/TabakovRV12}.

\begin{example}
	We consider again the conference management example from the introduction.
	We distinguish two types of traces, \emph{author traces} and \emph{program committee member traces}, where the latter starts with proposition $pc$.
	Based on this traces, we want to verify that no paper submission is lost, i.e., that every submission (proposition $s$) is visible (proposition $v$) to every program committee member in the following step.
	When comparing two PC traces, we require that they agree on proposition $v$. The monitor template for the following $\hyperltl$ formalization is depicted in Fig.~\ref{fig:monitor-template}.
	\begin{equation} \label{eq:conference-management-visibility}
	\forall \pi. \forall \pi' \ldot \big((\neg pc_\pi \wedge pc_{\pi'}) \rightarrow \X\G (s_\pi \rightarrow \X v_{\pi'})\big) \wedge \big((pc_{\pi} \wedge pc_{\pi'}) \rightarrow \X\G (v_\pi \leftrightarrow v_\pi')\big)
	\end{equation}
\end{example}
\begin{figure}[t]
	\centering
	\begin{tikzpicture}[auto,->,>=stealth',shorten >=1pt,thick,transform shape, scale=0.8]
  \node[state,initial,initial text=] (init) {$q_0$};
  \node[state,below=of init] (accept) {$q_2$};
  \node[state,below left=1 and 2 of init] (wait-for-submit) {$q_1$};
  \node[state,below right=1 and 2 of init] (pc-equal) {$q_3$};
  \node[state,left=of wait-for-submit] (submitted) {$q_4$};
  
  \draw (init) edge node[swap] {$\neg pc_\pi \land pc_{\pi'}$} (wait-for-submit)
        (wait-for-submit) edge[loop right] node {$\neg s_p$} ()
        (wait-for-submit) edge[bend left] node {$s_p$} (submitted)
        (submitted) edge[bend left] node {$v_{\pi'}$} (wait-for-submit)
        (submitted) edge[loop left] node {$v_{\pi'} \land s_p$} ()
        (init) edge node {$\neg pc_{\pi'}$} (accept)
        (accept) edge[loop right] node {$\top$} ()
        (init) edge node {$pc_\pi \land pc_{\pi'}$} (pc-equal)
        (pc-equal) edge[loop right] node {$v_\pi \leftrightarrow v_{\pi'}$} ()
        ;
\end{tikzpicture}
	\caption{Visualization of a monitor template corresponding to formula given in Equation~\ref{eq:conference-management-visibility}. We use a symbolic representation of the transition function $\delta$.}
	\label{fig:monitor-template}
	\vspace{-10pt}
\end{figure}
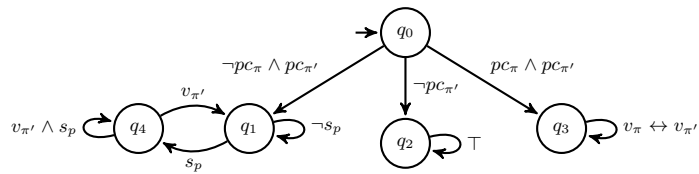


\begin{figure}
	\begin{algorithm}[H]
		\SetKwInOut{Input}{input}
		\SetKwInOut{Output}{output}
		\SetAlgoLined
		\Input{$Q^n$ HyperLTL formula $\varphi$}
		\Output{satisfied or $n$-ary tuple\\witnessing violation}
		\BlankLine
		$\monitor_\varphi = (\Sigma, Q, \delta, q_0) =$ \texttt{build\_template($\varphi$)}\;
		$S: T^n \rightarrow Q$\;
		$t_n \coloneqq (\epsilon_1,\ldots,\epsilon_n)$\;
		\BlankLine
		\While{$p^n \leftarrow$ new elements (in a forward fashion)}{
				$t \coloneqq (t_1~p_1,\ldots,t_n~p_n)$\;
				progress every state in $S$ according to $\delta$\;
				\If{$\lozenge_{1} \dots \lozenge_{n}~t_1 \times \dots \times t_n$ has violation in $M_\varphi$}{
					\Return witnessing tuple $t^n$\;
				}
			}
		\Return satisfied\;
		\BlankLine
	\end{algorithm}
\caption{Online algorithm for the parallel model, where $\lozenge_i := \wedge$ if $Q_i=\forall$ and $\vee$ otherwise.}
\label{alg:fixed_online}
\end{figure}

The \emph{online} algorithm is depicted in Fig.~\ref{alg:fixed_online}. It proceeds with the pace of the incoming stream.
We have a variable $S$ that maps tuples of traces to states of the deterministic monitor.
Whenever a trace progresses, we update the states in $S$ according to the transition function $\delta$.
If on this progress, there is a violation, we return the corresponding tuple of traces as a witness.

\paragraph{Offline Algorithm.}
For our efficient offline algorithm, we process the traces in a backwards fashion. The basic idea is to use an alternating automaton, which can be constructed in linear time, to find violations in the input set. The basic idea is to progress through the automaton in a bottom up fashion while proceeding backwards through the traces.

An \emph{alternating automaton}~\cite{DBLP:conf/banff/Vardi95}, whose runs generalize from sequences to trees, is a tuple $\mathcal A = (Q,q_0,\delta,\Sigma,F)$. $Q$ is the set of states, $q_0$ is the initial state, $\Sigma$ is the alphabet, and $F$ the set of accepting states. $\delta: Q \times \Sigma \rightarrow \mathbb{B}^+{Q}$ is a transition function, which maps a state and a symbol into a boolean combination of states. Thus, a run(-tree) of an alternating B\"uchi automaton $\mathcal A$ on an infinite word $w$ is a $Q$-labeled tree.
A word $w$ is accepted by $\mathcal{A}$ and called a \emph{model} if there exists a run-tree $T$ such that all paths $p$ through $T$ are accepting, i.e., $\textbf{Inf}(p) \cap F \not = \emptyset$. Note that an alternating automata can be constructed in linear time from an LTL formula~\cite{DBLP:conf/banff/Vardi95}.

\begin{figure}
	\begin{algorithm}[H]
		\SetKwInOut{Input}{input}
		\SetKwInOut{Output}{output}
		\SetAlgoLined
		\Input{HyperLTL formula $Q^n.\psi$\\trace set $T$}
		\Output{satisfied or $n$-ary tuple\\witnessing violation}
		\BlankLine
		$A_\psi = (\Sigma, Q, \delta, q_0) =$ \texttt{build\_alternating\_automaton($\psi$)}\;
		\texttt{booleans} $b^n$\;
		\BlankLine
			\ForEach{$t_i \in T$}{
			$b_i \coloneqq$\texttt{LTL\_backwards\_algorithm($A_\psi$,$t_i$)}
			}
			\If{$b_1 \lozenge_{1} \dots \lozenge_{n} b_n$}{\Return satisfied\;}
			\Return witnessing tuple $t^n$\;
		\BlankLine
	\end{algorithm}
\caption{Offline backwards algorithm for the parallel model, where $\lozenge_i := \wedge$ if $Q_i=\forall$ and $\vee$ otherwise.}
\label{alg:fixed_offline}
\end{figure}

The offline algorithm is depicted in Fig.~\ref{alg:fixed_offline}. The input is an arbitrary HyperLTL formula and a trace set $T$. After building the alternating automaton from the suffix of the HyperLTL formula, we apply the backwards monitoring algorithm~\cite{DBLP:journals/fmsd/FinkbeinerS04} for each trace $t_i \in T$. Depending on the quantifier prefix, we check the $n$-fold combinations of the traces for satisfaction. The difference to the forward algorithm is that we solely refer to previously computed values, resulting in a highly efficient algorithm.

\begin{theorem}
	Given a $\hyperltl$ formula $\varphi$ and a trace set $T$, the offline backwards algorithm for the parallel model runs in polynomial time in the size of the formula and the length of the traces.
\end{theorem}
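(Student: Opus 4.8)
The plan is to prove the bound by decomposing the offline backwards algorithm (Fig.~\ref{alg:fixed_offline}) into its three stages --- construction of the alternating automaton, the backwards passes over the traces, and the final combination according to the quantifier prefix --- and to show that each stage runs in time polynomial in $\card{\varphi}$ and in the maximal trace length $m \coloneqq \max_{t \in T} \card{t}$, so that their sequential composition is again polynomial. Throughout I write $\varphi = Q^n \ldot \psi$ with $\psi$ an $\ltl$ formula over $\Sigma_{\pathvars}$, and treat the number $n$ of trace variables and the cardinality of $T$ as fixed, since the statement quantifies only over the formula size and the trace length.

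First I would bound the construction stage. The body $\psi$ is an $\ltl$ formula, and an alternating automaton $A_\psi = (\Sigma, Q, \delta, q_0)$ accepting exactly its models can be built in linear time~\cite{DBLP:conf/banff/Vardi95}; hence $\card{Q} = \bigO(\card{\psi}) = \bigO(\card{\varphi})$ and the transition function $\delta$ has total size polynomial in $\card{\varphi}$. Next I would recall the cost of one invocation of the backwards monitoring procedure of~\cite{DBLP:journals/fmsd/FinkbeinerS04}: given $A_\psi$ and a word of length $m$, it proceeds from position $m-1$ down to $0$, maintaining at each position a boolean valuation over the $\card{Q}$ states, where the valuation at position $i$ is obtained from the valuation at position $i+1$ by evaluating $\delta$ on the current input symbol. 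Each step costs $\bigO(\card{\delta})$, so a single backwards pass costs $\bigO(m \cdot \mathrm{poly}(\card{\varphi}))$, i.e. polynomial in $\card{\varphi}$ and $m$.

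Finally I would account for the combination stage. After the backwards passes have tabulated the per-position valuations, the algorithm reads off the boolean verdicts $b_i$ and combines them according to the quantifier prefix using the connectives $\lozenge_i$, which is a purely propositional evaluation of an expression of size $\bigO(n)$. The decisive efficiency claim, captured by the phrase that the algorithm \emph{solely refers to previously computed values}, is that instantiating $A_\psi$ over the trace variables reuses the already-computed valuations rather than recomputing them: the symbol consumed at each position is determined pointwise from the assigned traces via the projections $\#_\pi$, so the temporal reasoning at every position is carried out exactly once and only the propositional combination remains. Summing the linear construction cost, the per-pass cost, and the $\bigO(n)$ combination cost yields an overall bound polynomial in $\card{\varphi}$ and $m$.

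The step I expect to be the main obstacle is precisely the reuse argument: one must show that the backwards valuations over $Q$ can be shared across instantiations so that no super-polynomial dependence on the trace length creeps in when the automaton over $\Sigma_{\pathvars}$ is driven by several assigned traces at once. The cleanest way to discharge this is to argue that, position by position, the valuation update depends only on the projected symbol and the successor valuation, both of which are available from the backwards pass, so the temporal computation is never repeated within a pass; everything beyond it is the constant-depth propositional combination dictated by $Q^n$, which does not affect polynomiality in $\card{\varphi}$ and $m$.
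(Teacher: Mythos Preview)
The paper does not actually prove this theorem; it is stated immediately after the description of the offline algorithm (Fig.~\ref{alg:fixed_offline}) and is followed directly by the next section, with no accompanying argument. Your proposal therefore supplies a justification where the paper offers none, so there is no paper-side proof to compare against.

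Your decomposition into three stages is sound, and the bounds for the first two stages are correct: the alternating automaton for $\psi$ has linearly many states and is built in linear time, and a single backwards pass over a word of length $m$ costs $\bigO(m \cdot \mathrm{poly}(\card{\varphi}))$ as in~\cite{DBLP:journals/fmsd/FinkbeinerS04}. Where the proposal becomes muddled is the final ``reuse'' paragraph. Having already declared $n$ and $\card{T}$ fixed, you do not need any sharing argument at all: the number of $n$-tuples of traces is then a constant, each tuple determines a single word over $\Sigma_{\pathvars}$ of length at most $m$, one backwards pass on that word is polynomial in $\card{\varphi}$ and $m$, and the quantifier-prefix combination over constantly many verdicts is constant time. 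That already finishes the argument.

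The reuse claim you sketch --- that per-trace backwards valuations can be computed once and then merely combined propositionally across instantiations --- is not correct as stated and should be dropped. The automaton $A_\psi$ runs over $\Sigma_{\pathvars}$, so the valuation of a state at position $i$ depends on the joint symbol contributed by all assigned traces, not on any single projection $\#_\pi$; temporal subformulas that mix trace variables, such as $a_{\pi_1} \U b_{\pi_2}$, cannot have their backwards values recovered from independent per-trace passes. Simply delete that paragraph and let the polynomial bound rest on the fixed-$n$, fixed-$\card{T}$ assumption you already made explicit.
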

\section{Monitoring Hyperproperties: Sequential}
\label{sec:sequential_monitoring}

The algorithms for monitoring $\hyperltl$ formulas in both sequential models (unbounded and bounded) are presented in Fig.~\ref{fig:algorithms_sequential}.
After building the deterministic monitoring automaton $\monitor_\varphi$, the algorithm proceeds with the pace of the incoming stream, which has an indicator when a new trace starts.
We have a variable $S$ that maps tuples of traces to states of the deterministic monitor.
Whenever a trace progresses, we update the states in $S$ according to the transition function $\delta$.
If during this process a violation is detected, we return the corresponding tuple of traces as a witness.
When a new trace $t$ starts, only new tuples are considered for $S$, that are tuples $N \in (T \cup \set{t})^n$ containing the new trace $t$, i.e., $N \notin T^n$.
In the bounded case, the algorithm implements a \emph{stop-condition}, namely when the trace set $T$ surpasses the given bound $b$. Note that monitoring formulas with alternations becomes possible, as their satisfaction or violation can be determined at least when the bound $b$ is reached.

\begin{figure}[t]
	\centering
			\begin{algorithm}[H]
				\SetKwInOut{Input}{input}
				\SetKwInOut{Output}{output}
				\SetAlgoLined
				\Input{$\forall^n$ HyperLTL formula $\varphi$ (unbounded case)\\HyperLTL formula $\varphi$ and bound $b$ (bounded case)}
				\Output{satisfied or $n$-ary tuple\\witnessing violation}
				\BlankLine
				$\monitor_\varphi = (\Sigma, Q, \delta, q_0) =$ \texttt{build\_template($\varphi$)}\;
				$S: T^n \rightarrow Q$\;
				$T \coloneqq \emptyset$\;
				$t \coloneqq \epsilon$\;
				\BlankLine
				\While{$p \leftarrow$ new element}{
					\eIf{$p$ is new trace (and $|T| \leq b$ in the bounded case)}{
						$T \cup \{t\}$\;
						$t \coloneqq \epsilon$\;
						$S \coloneqq \set{q_0 \mid \text{for new $n$-tuple}}$\;
					}{
						$t \coloneqq t~p$\;
						progress every state in $S$ according to $\delta$\;
						\If{violation}{
							\Return witnessing tuple\;
						}
					}
				}
				\Return satisfied\;
				\BlankLine
			\end{algorithm}
	\caption{Evaluation algorithm for monitoring $\forall^n$ HyperLTL formulas in the unbounded sequential model and monitoring arbitrary HyperLTL formulas in the bounded sequential model.}
	\label{fig:algorithms_sequential}
	\vspace{-10pt}
\end{figure}


In contrast to previous approaches, our algorithm returns a witness for violation.
This highly desired property comes with a price. In constructed worst case scenarios, we have to remember every system trace in order to return an explicit witness.
However, it turns out that practical hyperproperties satisfy certain properties such that the majority of traces can be pruned during the monitoring process.

\subsection{Optimizations}
\label{sec:minimizingtracestorage}
The main obstacle in monitoring hyperproperties in the unbounded input model is the potentially unbounded space consumption.
In the following, we present two analysis phases of our algorithm and describe how we incorporate tries at the base of our implementation. The first phase is a specification analysis, which is a preprocessing step that analyzes the HyperLTL formula under consideration. We use the recently introduced satisfiability solver for hyperproperties EAHyper~\cite{conf/cav/FinkbeinerHS17} to detect whether a formula is (1) \emph{symmetric}, i.e., we halve the number of instantiated monitors, (2) \emph{transitive}, i.e, we reduce the number of instantiated monitors to two, or (3) \emph{reflexive}, i.e., we can omit the self comparison of traces.
The second analysis phase is applied during runtime. We analyze the incoming trace to detect whether or not this trace poses strictly more requirements on future traces, with respect to a given HyperLTL formula.

\newpage

\subsubsection{Specification Analysis.}
\emph{Symmetry.}
Symmetry is particular interesting since many information flow policies satisfy this property. Consider, for example, observational determinism 
$
\mathit{ObsDet} \coloneqq \forall \pi\ldot \forall \pi'\ldot (O_\pi = O_{\pi'}) \W (I_\pi \neq I_{\pi'}).
$
We detect symmetry by translating this formula to a formula $\mathit{ObsDet}_\mathit{symm}$ that is unsatisfiable if there exists no set of traces for which every trace pair violates the symmetry condition:
\[
\mathit{ObsDet}_\mathit{symm} \coloneqq \exists \pi\ldot \exists \pi'\ldot \big((O_\pi = O_{\pi'}) \W (I_\pi \neq I_{\pi'})\big) \nleftrightarrow \big((O_\pi' = O_{\pi}) \W (I_\pi' \neq I_{\pi})\big)
\]
This is a sufficient condition for the invariance of $\mathit{ObsDet}$ under $\pi$ and $\pi'$, which we define in the following, and, therefore, $\mathit{ObsDet}$ is symmetric.
\begin{definition} \label{def:symmetry}
	Given a $\hyperltl$ formula $\varphi = \forall \pi_1 \dots \forall \pi_n \ldot \psi$, where $\psi$ is an $\ltl$ formula over trace variables $\{\pi_1, \dots, \pi_n\}$.
	We say $\varphi$ is invariant under trace variable permutation $\sigma : \pathvars \to \pathvars$, if for any set of traces $T \subseteq \Sigma^{\omega}$ and any assignment $\pathassign : \pathvars \to T$, $\pathassign \models_T \psi \Leftrightarrow (\pathassign \circ \sigma) \models_T \psi$.
	We say $\varphi$ is symmetric, if it is invariant under every trace variable permutation in $\pathvars \to \pathvars$.
\end{definition}

We generalize the previous example to formulas with more than two universal quantifiers. We use the fact, that the symmetric group for a finite set  $\pathvars$ of $n$ trace variables is generated by the two permutations $(\pi_1 \; \pi_2)$ and $(\pi_1 \; \pi_2 \; \cdots \; \pi_{n-1} \; \pi_n)$. If the $\hyperltl$-SAT solver determines that the input formula is invariant under these two permutations, then the formula is invariant under every trace variable permutation and thus symmetric.

\begin{theorem} \label{thm:symmetry_construction}
	Given a $\hyperltl$ formula $\varphi = \forall \pi_1 \dots \forall \pi_n \ldot \psi$, where $\psi$ is an $\ltl$ formula over trace variables $\{\pi_1, \dots, \pi_n\}$.
	$\varphi$ is symmetric if and only if
	$
	\varphi_\mathit{symm} = \exists \pi_1 \dots \exists \pi_n \ldot (\psi(\pi_1, \pi_2, \dots, \pi_{n-1}, \pi_n) \nleftrightarrow \psi(\pi_2, \pi_1, \dots, \pi_{n-1}, \pi_n))
	$
	$
	\vee (\psi(\pi_1, \pi_2, \dots, \pi_{n-1}, \pi_n) \nleftrightarrow \psi(\pi_2, \pi_3, \dots, \pi_n, \pi_1))
	$
	is unsatisfiable.
\end{theorem}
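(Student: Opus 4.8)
The plan is to decompose the claim into two independent pieces and then glue them: (i) a semantic bridge showing that unsatisfiability of $\varphi_\mathit{symm}$ is exactly invariance of $\varphi$ under the two specific permutations appearing in its two disjuncts, namely the transposition $\tau = (\pi_1\;\pi_2)$ and the $n$-cycle $\rho = (\pi_1\;\pi_2\;\cdots\;\pi_n)$; and (ii) a group-theoretic argument showing that invariance under $\{\tau,\rho\}$ already forces invariance under \emph{every} permutation in $\pathvars \to \pathvars$, i.e.\ symmetry in the sense of Definition~\ref{def:symmetry}. For (ii) I would invoke the fact (already recalled in the paragraph preceding the theorem) that the symmetric group on $n$ elements is generated by $\tau$ and $\rho$.

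First I would establish the bridge (i). The key observation is that for any permutation $\sigma$ of $\pathvars$, any trace set $T$, and any assignment $\pathassign : \pathvars \to T$, the syntactically permuted body satisfies $\pathassign \models_T \psi(\pi_{\sigma(1)},\dots,\pi_{\sigma(n)})$ if, and only if, $(\pathassign \circ \sigma) \models_T \psi$, since renaming the trace variables inside $\psi$ has the same effect as pre-composing the assignment with $\sigma$. Consequently $\pathassign \models_T \psi \nleftrightarrow \psi\circ\sigma$ holds exactly when invariance of $\psi$ under $\sigma$ \emph{fails} at $\pathassign$. Unfolding the HyperLTL semantics of the existential prefix of $\varphi_\mathit{symm}$, the formula $\varphi_\mathit{symm}$ is satisfiable if, and only if, there exist a trace set $T$ and an assignment $\pathassign : \pathvars \to T$ at which invariance fails for $\tau$ or for $\rho$. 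Hence $\varphi_\mathit{symm}$ is \emph{unsatisfiable} if, and only if, $\varphi$ is invariant under both $\tau$ and $\rho$.

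It then remains to prove (ii), which I expect to be the main obstacle, since everything else is a direct unfolding of definitions. I would show that the set $G$ of permutations under which $\varphi$ is invariant forms a subgroup of the symmetric group on $\pathvars$. Closure under composition is the crucial step: if $\varphi$ is invariant under $\sigma_1$ and under $\sigma_2$, then for every assignment $\pathassign$ we have $\pathassign \models_T \psi \Leftrightarrow (\pathassign \circ \sigma_1) \models_T \psi$, and applying invariance under $\sigma_2$ to the assignment $\pathassign \circ \sigma_1$ yields $(\pathassign \circ \sigma_1) \models_T \psi \Leftrightarrow (\pathassign \circ \sigma_1 \circ \sigma_2) \models_T \psi$, so $\varphi$ is invariant under $\sigma_1 \circ \sigma_2$; here it is essential that invariance is quantified over \emph{all} assignments, which is what lets the two equivalences chain. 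The identity clearly lies in $G$, and because the symmetric group is finite every $\sigma$ has finite order $m$, so $\sigma^{-1} = \sigma^{m-1}$ is a composite of copies of $\sigma$ and thus lies in $G$ as well; hence $G$ is a subgroup. Since $G$ contains the generators $\tau$ and $\rho$, it must be the whole symmetric group, which is precisely the statement that $\varphi$ is symmetric.

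Finally I would assemble the two directions. If $\varphi$ is symmetric it is in particular invariant under $\tau$ and $\rho$, so by the bridge (i) $\varphi_\mathit{symm}$ is unsatisfiable. Conversely, if $\varphi_\mathit{symm}$ is unsatisfiable then by (i) $\varphi$ is invariant under $\tau$ and $\rho$, and by the subgroup argument (ii) it is invariant under every permutation, i.e.\ symmetric. This yields the stated equivalence.
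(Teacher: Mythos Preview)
Your proposal is correct and follows exactly the approach the paper takes: the paper does not spell out a formal proof but, in the paragraph immediately preceding the theorem, invokes precisely the fact that the symmetric group on $n$ elements is generated by the transposition $(\pi_1\;\pi_2)$ and the $n$-cycle $(\pi_1\;\pi_2\cdots\pi_n)$, so invariance under these two permutations implies symmetry. Your bridge (i) and subgroup argument (ii) are a faithful and correct elaboration of this sketch.
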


\paragraph{Transitivity.}
While symmetric HyperLTL formulas allow us to prune half of the monitor instances, transitivity of a HyperLTL formula has an even larger impact on the required memory. Observational Determinism, considered above, is not transitive.
However, equality, i.e, $\mathit{EQ} \coloneqq \forall \pi. \forall \pi' \ldot \G (a_\pi \leftrightarrow a_{\pi'})$, for example, is transitive and symmetric and allow us to reduce the number of monitor instances to one, since we can check equality against any reference trace.

\begin{definition} \label{def:transitivity}
	Given a $\hyperltl$ formula $\varphi = \forall \pi_1\ldot \forall \pi_2 \ldot \psi$, where $\psi$ is an $\ltl$ formula over trace variables $\{\pi_1, \pi_2\}$. Let $T = \{t_1, t_2, t_3\} \in \Sigma^{\omega}$ be three-elemented set of traces.
	We define the assignment $\pathassign_{i,j} : \pathvars \to \Sigma^{\omega}$ by $\pathassign_{i,j} \coloneqq \{\pi_1 \mapsto t_i, \pi_2 \mapsto t_j\}$.
	We say $\varphi$ is transitive, if $T$ was chosen arbitrary and $(\pathassign_{1,2} \models_{T} \psi) \wedge (\pathassign_{2,3} \models_{T} \psi) \Rightarrow \pathassign_{1,3} \models_{T} \psi$.
\end{definition}


\begin{theorem} \label{thm:transitivity_construction}
	Given a $\hyperltl$ formula $\varphi = \forall \pi_1 \ldot \forall \pi_2 \ldot \psi$, where $\psi$ is an $\ltl$ formula over trace variables $\{\pi_1, \pi_2\}$.
	$\varphi$ is transitive if and only if
	$
	\varphi_\mathit{trans} = \exists \pi_1 \exists \pi_2 \exists \pi_3 \ldot (\psi(\pi_1, \pi_2) \wedge \psi(\pi_2, \pi_3)) \nrightarrow \psi(\pi_1, \pi_3)
	$
	is unsatisfiable.
\end{theorem}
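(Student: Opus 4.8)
The plan is to establish the biconditional through its contrapositive on the right-hand side: since $\good$ and $\bad$ play no role here, it suffices to prove that $\varphi$ \emph{fails} to be transitive if and only if $\varphi_\mathit{trans}$ \emph{is} satisfiable, and then read off the stated equivalence by negation. First I would rewrite the matrix of $\varphi_\mathit{trans}$ using $A \nrightarrow B \equiv A \wedge \neg B$, so that $\varphi_\mathit{trans}$ becomes $\exists\pi_1\exists\pi_2\exists\pi_3\ldot \psi(\pi_1,\pi_2) \wedge \psi(\pi_2,\pi_3) \wedge \neg\psi(\pi_1,\pi_3)$, whose matrix is a plain $\ltl$ formula over the trace variables $\{\pi_1,\pi_2,\pi_3\}$.

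The central observation I would exploit is that $\psi$ is quantifier-free, so the satisfaction $\pathassign \models_T \psi$ depends only on the traces that $\pathassign$ assigns to $\pi_1$ and $\pi_2$, not on the surrounding set $T$. Hence Definition~\ref{def:transitivity} says precisely that for all traces $t_1,t_2,t_3$, whenever $\psi$ holds on $(t_1,t_2)$ and on $(t_2,t_3)$ it also holds on $(t_1,t_3)$, and its negation is the existence of traces with $\psi(t_1,t_2)$, $\psi(t_2,t_3)$, and $\neg\psi(t_1,t_3)$. By the semantics of existential quantification, such a triple exists exactly when $\{t_1,t_2,t_3\}$ (or any superset) is a model of $\varphi_\mathit{trans}$; conversely, any model $T \models \varphi_\mathit{trans}$ furnishes witnesses $t_1,t_2,t_3 \in T$ for the matrix. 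This yields the equivalence directly. To make the criterion effective, I would invoke Lemma~\ref{thm:hyperltl-body-to-ltl} to recast ``there is a witnessing assignment for the matrix'' as ``the matrix is $\ltl$-satisfiable over $\Sigma_\pathvars$'', which is what reduces the transitivity check to an $\hyperltl$-SAT query of the kind discharged by the solver.

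The step I expect to be the main obstacle is the distinctness of the witnesses. Definition~\ref{def:transitivity} is phrased via a ``three-elemented set'' $T=\{t_1,t_2,t_3\}$, whereas the existential witnesses for $\varphi_\mathit{trans}$ need not be distinct. I would dispatch the coincidences $t_1=t_2$ and $t_2=t_3$ by noting that each collapses the matrix into a contradictory conjunction of the form $\psi(\,\cdot\,)\wedge\neg\psi(\,\cdot\,)$, so neither can witness $\varphi_\mathit{trans}$. The remaining coincidence $t_1=t_3$ corresponds to a genuine transitivity instance, namely $\psi(t_1,t_2)\wedge\psi(t_2,t_1)\Rightarrow\psi(t_1,t_1)$, and so must be admitted: I would therefore read transitivity as quantifying over arbitrary, possibly coinciding, triples, equivalently observing that the truth of $\psi$ is insensitive to whether the assigned traces are distinct. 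Making this reading explicit is essential, since under a strict distinct-triples interpretation the theorem is false: over $\Sigma = 2^{\{a\}}$ the body $\psi = \neg\G(a_{\pi_1} \leftrightarrow a_{\pi_2})$ is vacuously transitive on pairwise distinct triples (its consequent, ``$t_1 \neq t_3$'', always holds there) yet renders $\varphi_\mathit{trans}$ satisfiable via the witness $t_1=t_3$ together with any $t_2 \neq t_1$.
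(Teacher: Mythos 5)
The paper states Theorem~\ref{thm:transitivity_construction} without any proof, so there is no argument of the authors' to compare yours against; judged on its own, your proof is correct and is surely the intended one: rewrite $\nrightarrow$ as conjunction with a negation, observe that satisfaction of the quantifier-free body $\psi$ under an assignment is independent of the ambient trace set $T$, and then note that the negation of Definition~\ref{def:transitivity} and the satisfiability of $\varphi_\mathit{trans}$ both assert the existence of a witnessing triple, with Lemma~\ref{thm:hyperltl-body-to-ltl} turning this into the decidable $\ltl$-satisfiability query that EAHyper discharges. The one substantive contribution of your write-up is the treatment of coinciding witnesses, and it is well taken: the collapses $t_1=t_2$ and $t_2=t_3$ indeed make the matrix contradictory, while $t_1=t_3$ does not, and your counterexample $\psi = \neg\G(a_{\pi_1} \leftrightarrow a_{\pi_2})$ correctly shows that under a literal reading of the ``three-elemented set'' in Definition~\ref{def:transitivity} (pairwise distinct traces) the stated equivalence fails. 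This pins down a genuine ambiguity in the paper's definition rather than a gap in your proof; your resolution --- quantifying over arbitrary, possibly coinciding triples --- is the reading forced by the intended application, namely replacing all monitor instances by comparisons against a single reference trace, which requires the chain $\psi(t,t_\mathit{ref})$, $\psi(t_\mathit{ref},t')$ to yield $\psi(t,t')$ even when traces repeat. You might add one sentence making explicit that a model of $\varphi_\mathit{trans}$ need only \emph{contain} the three witnesses (the existential semantics ranges over $t \in T$), so satisfiability by some set and existence of a witnessing triple really do coincide; otherwise the argument is complete.
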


\paragraph{Reflexivity.}
Lastly, we introduce a method to check whether a formula is reflexive, which enables us to omit the composition of a trace with itself in the monitoring algorithm. Both $\hyperltl$ formulas considered in this section, $\mathit{ObsDet}$ and $\mathit{EQ}$, are reflexive.
\begin{definition} \label{def:reflexivity}
	Given a $\hyperltl$ formula $\varphi = \forall \pi_1 \dots \forall \pi_n \ldot \psi$, where $\psi$ is an $\ltl$ formula over trace variables $\{\pi_1, \dots, \pi_n\}$.
	We say $\varphi$ is reflexive, if for any trace $t \in \Sigma^{\omega}$ and the corresponding assignment $\pathassign : \pathvars \to \{t\}$, $\pathassign \models_{\{t\}} \psi$.
\end{definition}


\begin{theorem} \label{thm:reflexivity_construction}
	Given a $\hyperltl$ formula $\varphi = \forall \pi_1 \dots \forall \pi_n \ldot \psi$, where $\psi$ is an $\ltl$ formula over trace variables $\{\pi_1, \dots, \pi_n\}$.
	$\varphi$ is reflexive if and only if
	$
	\varphi_\mathit{refl} = \exists \pi \ldot \neg \psi(\pi, \pi, \dots, \pi)
	$
	is unsatisfiable.
\end{theorem}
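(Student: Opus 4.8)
The plan is to establish the biconditional by unfolding both sides to a common statement about a single ``diagonal'' trace assignment. Writing $\pathassign_t$ for the assignment $\pathvars \to \set{t}$ that sends every $\pi_i$ to the same trace $t$, Definition~\ref{def:reflexivity} says that $\varphi$ is reflexive exactly when $\pathassign_t \models_{\set{t}} \psi$ holds for every $t \in \Sigma^\omega$. On the other side, $\varphi_\mathit{refl} = \exists \pi \ldot \neg \psi(\pi,\dots,\pi)$ is a purely existential formula whose body mentions only the single trace variable $\pi$. I would therefore prove reflexivity $\Leftrightarrow$ unsatisfiability of $\varphi_\mathit{refl}$ by showing that each side is equivalent to ``$\pathassign_t \models_{\set{t}} \psi$ for all $t$'', taking the contrapositive where convenient.

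The key step is a substitution lemma relating the diagonal assignment on $\psi$ to the singleton assignment on the relabelled formula $\psi(\pi,\dots,\pi)$. Concretely, for every trace $t$ I claim
\[
\pathassign_t \models_{\set{t}} \psi \quad\Longleftrightarrow\quad \set{\pi \mapsto t} \models_{\set{t}} \psi(\pi,\dots,\pi),
\]
where $\psi(\pi,\dots,\pi)$ is obtained from $\psi$ by rewriting every indexed atom $a_{\pi_i}$ to $a_\pi$. This is proved by a routine induction on the structure of the $\ltl$ body $\psi$: the only nontrivial base case is $a_{\pi_i}$, for which both sides reduce to the condition $a \in t[0]$ because both assignments read the atomic proposition $a$ off the same trace $t$ at the same position; the Boolean and temporal cases follow immediately, since the temporal operators shift all traces in lockstep and the two assignments agree pointwise after relabelling.

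With the lemma in hand the equivalence chain is immediate. Reflexivity means $\pathassign_t \models_{\set{t}} \psi$ for all $t$, which by the lemma is the same as $\set{\pi \mapsto t} \models_{\set{t}} \psi(\pi,\dots,\pi)$ for all $t$, i.e.\ $\set{\pi \mapsto t} \nmodels_{\set{t}} \neg\psi(\pi,\dots,\pi)$ for all $t$; and this says precisely that no trace $t$ witnesses the existential $\varphi_\mathit{refl}$. It then remains to observe that a single-variable existential formula such as $\varphi_\mathit{refl}$ is satisfiable (over some trace set $T$) if and only if some individual trace $t$ witnesses its body: the forward direction takes $t$ to be the trace chosen for $\pi$ by the satisfying set, whose body is quantifier-free and hence depends only on the assignment $\set{\pi \mapsto t}$ and not on the rest of $T$, while the backward direction simply takes $T = \set{t}$. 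I expect this last reduction, together with pinning down the relabelling substitution exactly in the base case of the induction, to be the only points requiring care; the remainder is bookkeeping.
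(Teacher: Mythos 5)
Your proof is correct: the diagonal-substitution lemma relating $\pathassign_t \models_{\set{t}} \psi$ to $\set{\pi \mapsto t} \models \psi(\pi,\dots,\pi)$, together with the observation that satisfaction of a quantifier-free body depends only on the trace assignment and not on the ambient trace set (so a single witnessing trace suffices, e.g.\ with $T=\set{t}$), is exactly the argument this equivalence requires. The paper states Theorem~\ref{thm:reflexivity_construction} without giving a proof, and your write-up supplies precisely the intended reduction of Definition~\ref{def:reflexivity} to unsatisfiability of the single-variable formula $\varphi_\mathit{refl}$.
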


\subsubsection{Trace Analysis.}
\label{trace-analysis}
In the previous subsection, we described a preprocessing step to reduce the number of monitor instantiations.
The main idea of the trace analysis, considered in the following, is to check whether a trace contains new requirements on the system under consideration.
If this is not the case, then this trace will not be stored by our monitoring algorithm.
We denote $\monitor_\varphi$ as the monitor template of a $\forall^*$ HyperLTL formula $\varphi$.

\begin{definition}
Given a HyperLTL formula $\varphi$, a trace set $T$ and an arbitrary $t \in \traces$, we say that $t$ is $(T,\varphi)$-redundant if $T$ is a model of $\varphi$ if and only if $T \cup \{t\}$ is a model of $\varphi$ as well. Formally denoted as follows.
\[
\forall T' \supseteq T.\; T' \in \lang(\varphi) \Leftrightarrow T' \cup \{t\} \in \lang(\varphi).
\]
\end{definition}

\begin{example}
	\label{example_conference}
	Consider, again, our example hyperproperty for a conference management system.
	\emph{``A user submission is immediately visible for every program committee member}
	\emph{and every program committee member observes the same.''}
We formalized this property as a $\forall^2$ HyperLTL formula in Equation~\ref{eq:conference-management-visibility}.
Assume our algorithm observes the following three traces of length five.
\begin{align}
&\tracebox{\hspace{0.5ex}\{\}}\tracebox{\set{s}}\tracebox{\hspace{0.5ex}\{\}} \tracebox{\hspace{0.5ex}\{\}} \tracebox{\hspace{0.5ex}\{\}} &&\textit{an author submits a paper} \label{one}\\
&\tracebox{\hspace{0.5ex}\{\}}\tracebox{\hspace{0.5ex}\{\}}\tracebox{\set{s}} \tracebox{\hspace{0.5ex}\{\}} \tracebox{\hspace{0.5ex}\{\}} &&\textit{an author submits a paper one time unit later}\label{two}\\
&\tracebox{\hspace{0.5ex}\{\}}\tracebox{\hspace{0.5ex}\{\}}\tracebox{\set{s}} \tracebox{\set{s}} \tracebox{\hspace{0.5ex}\{\}}  &&\textit{an author submits two papers}\label{three}
\end{align}
Trace~\ref{two} contains, with respect to $\varphi$ above, no more information than trace~\ref{three}. We say that trace~\ref{three} dominates trace~\ref{two} and, hence, trace~\ref{two} may be pruned from the set of traces that the algorithm has to store.
If we consider a PC member trace, we encounter the following situation.
\begin{align}
&\tracebox{\hspace{0.5ex}\{\}}\tracebox{\set{s}}\tracebox{\hspace{0.5ex}\{\}} \tracebox{\hspace{0.5ex}\{\}} \tracebox{\hspace{0.5ex}\{\}} &&\textit{an author submits a paper} \label{four}\\
&\tracebox{\hspace{0.5ex}\{\}}\tracebox{\hspace{0.5ex}\{\}}\tracebox{\set{s}} \tracebox{\set{s}} \tracebox{\hspace{0.5ex}\{\}} &&\textit{an author submits two papers}\label{five}\\
&\tracebox{\hspace{0.5ex}\{\}}\tracebox{\hspace{-0.5ex}\set{{\tiny{pc}}}}\tracebox{\set{v}}\tracebox{\set{v}} \tracebox{\set{v}} &&\textit{a PC member observes three submissions} \label{six}
\end{align}
Our algorithm will detect no violation, since the program committee member sees all three papers.
Intuitively, one might expect that no more traces can be pruned from this trace set.
However, in fact, trace~\ref{six} dominates trace~\ref{four} and trace~\ref{five}, since the information that three papers have been submitted is preserved in trace~\ref{six}.
Hence, it suffices to remember the last trace to detect, for example, the following violations.
\begin{align}
&\tracebox{\hspace{0.5ex}\{\}}\tracebox{\hspace{-0.5ex}\set{{\tiny{pc}}}}\tracebox{\set{v}} \tracebox{\set{v}} \tracebox{\set{v}} &&\textit{a PC member observes three submissions} \label{seven}\\
&\tracebox{\hspace{0.5ex}\{\}}\tracebox{\hspace{-0.5ex}\set{{\tiny{pc}}}}\tracebox{\set{v}} \tracebox{\set{v}}\tracebox{\hspace{0.5ex}\{\}}  &&\textit{\Lightning a PC member observes two submissions \Lightning} \label{eight}\\
&\textit{or} \nonumber\\
&\tracebox{\hspace{0.5ex}\{\}}\tracebox{\hspace{0.5ex}\{\}} \tracebox{\hspace{0.5ex}\{\}} \tracebox{\hspace{0.5ex}\{\}} \tracebox{\set{s}} &&\textit{\Lightning an author submits a non-visible paper \Lightning}
\end{align}
Note that none of the previous user traces, i.e., trace~\ref{one} to trace~\ref{five}, are needed to detect a violation.
\end{example}
\begin{definition}
	\label{def_dominance}
	Given $t,t'\in\traces$, we say $t$ \emph{dominates} $t'$ if $t'$ is $(\{t\},\varphi)$-redundant.
\end{definition}
The observations from Example~\ref{example_conference} can be generalized to a language inclusion check (cf. Theorem~\ref{dominating}), to determine whether a trace dominates another trace.
For proving this, we first prove the following two lemmas. For the sake of simplicity, we consider $\forall^2$ HyperLTL formulas. The proofs can be generalized. We denote $\monitor_{\varphi}[t/\pi]$ as the monitor where trace variable $\pi$ of the template Monitor $\monitor_{\varphi}$ is initialized with explicit trace $t$.
\begin{lemma} \label{thm:language-inclusion-by-iteration}
Let $\varphi$ be a $\forall^2$ HyperLTL formula over trace variables $\{\pi_1, \pi_2\}$.
	Given an arbitrary trace set $T$ and an arbitrary trace $t$, $T\cup \{t\}$ is a model of $\varphi$ if and only if $T$ is still accepted by the following two monitors: (1) only $\pi_1$ is initialized with $t$ (2) only $\pi_2$ is initialized with $t$. Formally, the following equivalence holds. 
	\[\forall T \subseteq \traces, \forall t \in \traces \ldot T \cup \{t\} \in \lang(\varphi) \Leftrightarrow T \subseteq \lang(\monitor_\varphi[t/\pi_1]) \wedge T \subseteq \lang(\monitor_\varphi[t/\pi_2])\]
\end{lemma}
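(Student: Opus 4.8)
The plan is to reduce the set-level membership $T \cup \{t\} \in \lang(\varphi)$ to a condition on individual pairs of traces, and then to show that the two language-inclusion tests on the right capture exactly the pairs that newly involve $t$. Since $\varphi = \forall \pi_1 \ldot \forall \pi_2 \ldot \psi$ with $\psi$ quantifier free, unfolding the (finite-trace) $\hyperltl$ semantics gives that $T \cup \{t\} \in \lang(\varphi)$ holds if and only if $\{\pi_1 \mapsto t_1, \pi_2 \mapsto t_2\} \models \psi$ for every pair $(t_1,t_2) \in (T \cup \{t\})^2$. As $\psi$ carries no quantifiers, this satisfaction does not depend on the ambient trace set, so I abbreviate it by a binary predicate $\psi(t_1,t_2)$ on the chosen traces.

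First I would pin down the languages of the instantiated monitors. By construction $\monitor_\varphi$ recognizes the finite-trace language $\lang(\psi)$ over $\Sigma_\pathvars$, and fixing $\pi_1$ to $t$ yields an automaton over the single remaining variable $\pi_2$ that accepts a trace $s$ precisely when $\psi(t,s)$ holds; symmetrically, $\monitor_\varphi[t/\pi_2]$ accepts $s$ precisely when $\psi(s,t)$ holds. The bridge between the body $\psi$ and its projected $\ltl$ reading over $\Sigma_\pathvars$ is exactly what Lemma~\ref{thm:hyperltl-body-to-ltl} supplies, so I would invoke it to justify $\lang(\monitor_\varphi[t/\pi_1]) = \{ s \mid \psi(t,s) \}$ and $\lang(\monitor_\varphi[t/\pi_2]) = \{ s \mid \psi(s,t) \}$. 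Hence $T \subseteq \lang(\monitor_\varphi[t/\pi_1])$ says that $\psi(t,s)$ holds for every $s \in T$, and the second inclusion says the same with the two slots swapped.

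With these characterizations the two directions follow from decomposing $(T \cup \{t\})^2$ into the pairs internal to $T$, the crossing pairs $(t,s)$ and $(s,t)$ for $s \in T$, and the self-pair $(t,t)$. The forward direction is immediate: if every pair of $T \cup \{t\}$ satisfies $\psi$, then in particular the crossing pairs do, which is precisely the right-hand side. For the backward direction I would exploit the standing invariant of a violation-free monitoring run, namely that $T$ is already a model of $\varphi$ (signalled by the word ``still'' in the statement), so the internal pairs are already discharged; the two inclusions then discharge the crossing pairs, one orientation each, while the self-comparison $(t,t)$ is the new tuple contributed by the monitor-spawning step when $t$ is inserted.

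I expect the main obstacle to be the bookkeeping in the backward direction, since the right-hand side only mentions the existing set $T$: one must argue that no pair of $(T \cup \{t\})^2$ escapes the test, i.e., that the internal pairs rest on the invariant while the crossing pairs are covered, in their two orientations, by the two separate monitors. Splitting the check into $\monitor_\varphi[t/\pi_1]$ and $\monitor_\varphi[t/\pi_2]$ is exactly what breaks the asymmetry of the quantifier pair and lets a single pass over $T$ certify both orientations. Extending beyond $\forall^2$ is then routine: one instantiates all but one trace variable with $t$ in every slot and intersects the resulting single-variable monitors.
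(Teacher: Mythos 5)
First, a point of reference: the paper states this lemma \emph{without} proof, so there is no official argument to compare against. Your pairwise decomposition of $(T \cup \{t\})^2$ into internal, crossing, and self pairs is exactly the natural route, your characterization of the instantiated monitors via Lemma~\ref{thm:hyperltl-body-to-ltl} is appropriate, and your forward direction is fine.

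The problem is the backward direction, and you have half-noticed it without resolving it. You correctly observe that the right-hand side only constrains the crossing pairs $(t,s)$ and $(s,t)$ for $s \in T$, and you import the hypothesis $T \in \lang(\varphi)$ from the word ``still'' to discharge the internal pairs --- a defensible reading, but it means you are proving a statement with a hypothesis that the displayed formula does not contain, and you should flag that repair explicitly rather than leave it implicit. More seriously, the self-pair $(t,t)$ is covered by nothing: when $t \notin T$, the inclusion $T \subseteq \lang(\monitor_\varphi[t/\pi_1])$ says nothing about whether $t$ itself lies in $\lang(\monitor_\varphi[t/\pi_1])$, and your sentence calling $(t,t)$ ``the new tuple contributed by the monitor-spawning step'' merely acknowledges that the pair exists; it does not establish $\{\pi_1 \mapsto t, \pi_2 \mapsto t\} \models \psi$. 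The displayed equivalence is in fact false as written: take $\psi = \neg(a_{\pi_1} \wedge a_{\pi_2})$, $T = \{\emptyset^\omega\}$ and $t = \{a\}^\omega$; both inclusions hold and $T \in \lang(\varphi)$, yet $T \cup \{t\} \notin \lang(\varphi)$ because of the pair $(t,t)$. To close the proof you must either strengthen the right-hand side to $T \cup \{t\} \subseteq \lang(\monitor_\varphi[t/\pi_i])$ for $i = 1,2$, or add $\{t\} \in \lang(\varphi)$ as an explicit conjunct; with that repair, together with the explicit hypothesis $T \in \lang(\varphi)$, your decomposition does yield the result.
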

\begin{lemma}
	Given a $\forall^2$ HyperLTL formula $\varphi$ over trace variables $\pathvars \coloneqq \{\pi_1, \ldots, \pi_n\}$ and two traces $t,t' \in \traces$, the following holds: $t$ dominates $t'$ if and only if
	\[
	\lang(\monitor_\varphi[t/\pi_1]) \subseteq \lang(\monitor_\varphi[t'/\pi_1]) \wedge \lang(\monitor_\varphi[t/\pi_2]) \subseteq \lang(\monitor_\varphi[t'/\pi_2])
	\]
\end{lemma}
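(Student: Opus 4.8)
The plan is to prove both implications of the biconditional, translating each set-level statement about models of $\varphi$ into the two trace-level inclusions by means of Lemma~\ref{thm:language-inclusion-by-iteration}. First I would unfold the definitions: by Definition~\ref{def_dominance}, $t$ dominates $t'$ exactly when $t'$ is $(\set{t},\varphi)$-redundant, i.e. $\forall T' \supseteq \set{t} \ldot T' \in \lang(\varphi) \Leftrightarrow T' \cup \set{t'} \in \lang(\varphi)$. Since $\varphi = \forall\pi_1 \forall \pi_2 \ldot \psi$ is universal, every pair of a subset is a pair of a superset, so $T' \cup \set{t'} \in \lang(\varphi)$ already entails $T' \in \lang(\varphi)$; hence the biconditional collapses to the single implication $\forall T' \supseteq \set{t} \ldot T' \in \lang(\varphi) \Rightarrow T' \cup \set{t'} \in \lang(\varphi)$, which is what I would actually work with.

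For the direction from the inclusions to dominance, assume $\lang(\monitor_\varphi[t/\pi_1]) \subseteq \lang(\monitor_\varphi[t'/\pi_1])$ and $\lang(\monitor_\varphi[t/\pi_2]) \subseteq \lang(\monitor_\varphi[t'/\pi_2])$, and take any $T' \supseteq \set{t}$ with $T' \in \lang(\varphi)$. By Lemma~\ref{thm:language-inclusion-by-iteration} applied to $T'$ and $t'$, it suffices to establish $T' \subseteq \lang(\monitor_\varphi[t'/\pi_1])$ and $T' \subseteq \lang(\monitor_\varphi[t'/\pi_2])$. For each $s \in T'$, since $t \in T'$ and $T' \in \lang(\varphi)$ the pairs $(t,s)$ and $(s,t)$ satisfy $\psi$, so $s \in \lang(\monitor_\varphi[t/\pi_1])$ and $s \in \lang(\monitor_\varphi[t/\pi_2])$; the two inclusions then hand me $s \in \lang(\monitor_\varphi[t'/\pi_1])$ and $s \in \lang(\monitor_\varphi[t'/\pi_2])$, as required. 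The one point needing a small extra argument is the self-pair $(t',t')$ that $t'$ introduces: I would recover it by chaining the inclusions, using that $\psi(t,t)$ holds (from $t \in T' \in \lang(\varphi)$) to get $\psi(t,t')$ through the $\pi_2$-inclusion, and then $\psi(t',t')$ through the $\pi_1$-inclusion.

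The converse direction, from dominance to the inclusions, is where I expect the real difficulty. The obvious attempt is: to show $\lang(\monitor_\varphi[t/\pi_1]) \subseteq \lang(\monitor_\varphi[t'/\pi_1])$, take $s$ with $\psi(t,s)$, instantiate the dominance implication at $T' = \set{t,s}$, and read $\psi(t',s)$ off from $T' \cup \set{t'} \in \lang(\varphi)$. The gap is that this only works when $\set{t,s} \in \lang(\varphi)$, whereas $\psi(t,s)$ alone does not guarantee this: the pairs $(s,t)$, $(s,s)$, or $(t,t)$ may fail, in which case the dominance implication at $\set{t,s}$ is vacuous and tells me nothing about $\psi(t',s)$. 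Thus a trace $s$ that can never co-occur with $t$ in any model of $\varphi$ imposes no constraint through dominance, even though it lies in $\lang(\monitor_\varphi[t/\pi_1])$. The main work is therefore to bridge this mismatch, and I would try to do so by making the witnessing set genuinely a model before applying dominance — for instance by comparing the monitor languages only on the traces that are $t$-compatible (those occurring with $t$ in some model), or by closing $\set{t,s}$ under the traces forced to accompany $t$ — and I would expect to lean on the consistency of $\psi$ on these relevant traces to carry the self-pair and the symmetric $\pi_2$-inclusion through. This step is the crux, and it also pinpoints the precise side condition ($s$ co-occurring with $t$ in a model) under which the clean inclusion characterization can be expected to hold.
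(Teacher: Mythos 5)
Your direction from the two language inclusions to dominance is correct and follows the paper's argument: for a model $T' \supseteq \set{t}$ the inclusions lift $T' \subseteq \lang(\monitor_\varphi[t/\pi_i])$ to $T' \subseteq \lang(\monitor_\varphi[t'/\pi_i])$ and Lemma~\ref{thm:language-inclusion-by-iteration} closes the case, while a non-model stays a non-model under adding traces because $\varphi$ is universal. Your observation that the biconditional in the redundancy definition collapses to a single implication for universal formulas is right, and your extra care about the self-pair $(t',t')$ --- recovered by chaining $\psi(t,t)$ to $\psi(t,t')$ via the $\pi_2$-inclusion and then to $\psi(t',t')$ via the $\pi_1$-inclusion --- fills a detail the paper silently skips.

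The converse direction you leave open, so as a proof the proposal is incomplete; but your diagnosis of why the obvious argument fails is exactly on target, and it applies verbatim to the paper's own proof. The paper argues by contraposition: from a witness $\tilde t \in \lang(\monitor_\varphi[t/\pi_1]) \setminus \lang(\monitor_\varphi[t'/\pi_1])$ it picks $T' = \set{t, \tilde t}$ and claims a contradiction with the redundancy equivalence. That equivalence is only violated if $T' \in \lang(\varphi)$ while $T' \cup \set{t'} \notin \lang(\varphi)$; knowing $\psi$ holds on the pair $(t,\tilde t)$ says nothing about the pairs $(\tilde t, t)$, $(t,t)$, $(\tilde t,\tilde t)$, so $T'$ may fail to be a model and the equivalence then holds with both sides false --- precisely the vacuity you point out. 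This is not merely a presentational issue: take $\psi = a_{\pi_1} \leftrightarrow \neg a_{\pi_2}$, so that $\psi$ fails on every diagonal pair and the only model of $\varphi$ is the empty trace set; then every trace dominates every other trace vacuously, yet for $t$ with $a \in t[0]$ and $t'$ with $a \notin t'[0]$ the languages $\lang(\monitor_\varphi[t/\pi_1]) = \set{s \mid a \notin s[0]}$ and $\lang(\monitor_\varphi[t'/\pi_1]) = \set{s \mid a \in s[0]}$ are disjoint and nonempty, so the inclusion fails. The lemma as stated therefore needs a side condition of the kind you sketch (e.g., restricting the comparison to traces that can co-occur with $t$ in some model of $\varphi$). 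Identifying that repair is the right move, and it is more than the paper itself does, but the proposal stops short of carrying it out.
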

\begin{proof}
  Assume for the sake of contradiction that (a) $t$ dominates $t'$ and w.l.o.g. (b) $\lang(\monitor_\varphi[t/\pi_1]) \nsubseteq \lang(\monitor_\varphi[t'/\pi_1])$.
  Thus, by definition of subset, there exists a trace $\tilde{t}$ with $\tilde{t} \in \lang(\monitor_\varphi[t/\pi_1])$ and $\tilde{t} \not \in \lang(\monitor_\varphi[t'/\pi_1])$.
  Hence, $\pathassign = \set{\pi_1 \mapsto t, \pi_2 \mapsto \tilde{t}}$ is a valid trace assignment, whereas $\pathassign' = \set{\pi_1 \mapsto t', \pi_2 \mapsto \tilde{t}}$ is not.
  On the other hand, from (a) the following holds by Definition~\ref{def_dominance}: $\forall T'$ with $\set{t} \subseteq T'$ it holds that $T' \in \lang(\varphi) \Leftrightarrow T' \cup \{t'\} \in \lang(\varphi)$.
  We choose $T'$ as $\{t,\tilde{t}\}$, which is a contradiction to the equivalence since we know from (a) that $\pathassign$ is a valid trace assignment, but $\pathassign'$ is not a valid trace assignment.

  For the other direction, assume that $\lang(\monitor_\varphi[t/\pi_1]) \subseteq \lang(\monitor_\varphi[t'/\pi_1])$ and $\lang(\monitor_\varphi[t/\pi_2]) \subseteq \lang(\monitor_\varphi[t'/\pi_2])$.
  Let $T'$ be arbitrary such that $\set{t} \subseteq T'$.
  We distinguish two cases:
  \begin{itemize}
  	\item
      Case $T' \in \lang(\varphi)$, then (a) $T' \subseteq \lang(M_\varphi[t/\pi_1]) \subseteq \lang(M_\varphi[t'/\pi_1])$ and (b) $T' \subseteq \lang(M_\varphi[t/\pi_2]) \subseteq \lang(M_\varphi[t'/\pi_2])$.
      By Lemma~\ref{thm:language-inclusion-by-iteration} and $T' \in \lang(\varphi)$, it follows that $T' \cup \set{t'} \in \lang(\varphi)$.
  \item
      Case $T' \notin \lang(\varphi)$, then $T' \cup \set{\hat{t}} \notin \lang(\varphi)$ for an arbitrary trace $\hat{t}$.
  \end{itemize}
\end{proof}
A generalization leads to the following theorem, which serves as the foundation of our trace storage minimization algorithm.
\begin{theorem}
  \label{dominating}
  Given a $\forall^n$ HyperLTL formula $\varphi$ over trace variables $\pathvars \coloneqq \{\pi_1, \ldots, \pi_n\}$ and two traces $t,t' \in \traces$, the following holds: $t$ dominates $t'$ if and only if
  \begin{equation*}
  \bigwedge_{\pi \in \mathcal{V}}\lang(\monitor_\varphi[t/\pi]) \subseteq \lang(\monitor_\varphi[t'/\pi]) \enspace.
  \end{equation*}
\end{theorem}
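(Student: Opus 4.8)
The plan is to prove both directions of the equivalence by lifting the two preceding lemmas from two to $n$ universally quantified trace variables, reading each instantiated template $\monitor_\varphi[t/\pi_i]$ as the monitor over the \emph{remaining} $n-1$ trace variables, so that $\lang(\monitor_\varphi[t/\pi_i])$ is the set of $(n-1)$-tuples $\bar s$ of traces for which the completed assignment sending $\pi_i$ to $t$ and the free variables to $\bar s$ satisfies the body $\psi$. The enabling fact is the $n$-ary version of Lemma~\ref{thm:language-inclusion-by-iteration}: for a model $T$ and a fresh trace $r$, the set $T\cup\set{r}$ is again a model of $\varphi$ if and only if, for every position $i$ and every completing $(n-1)$-tuple drawn from $T\cup\set{r}$, instantiating $\pi_i$ with $r$ keeps $\psi$ satisfied. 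I would prove this by splitting the $n$-tuples over $T\cup\set{r}$ according to which coordinates carry $r$ and observing that each such tuple is an instantiation of some $\monitor_\varphi[r/\pi_i]$.

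For the direction ``$\Leftarrow$'', assume $\lang(\monitor_\varphi[t/\pi])\subseteq\lang(\monitor_\varphi[t'/\pi])$ for all $\pi\in\pathvars$ and fix an arbitrary $T'\supseteq\set{t}$; I must show $T'\in\lang(\varphi)\Leftrightarrow T'\cup\set{t'}\in\lang(\varphi)$. The implication from right to left is immediate from downward closure of universally quantified languages: every tuple over $T'$ is a tuple over $T'\cup\set{t'}$, so $T'\cup\set{t'}\in\lang(\varphi)$ forces $T'\in\lang(\varphi)$. For the converse, assume $T'\in\lang(\varphi)$ and take any $n$-tuple $\bar s$ over $T'\cup\set{t'}$; let $S$ be the set of coordinates of $\bar s$ carrying $t'$. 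Replacing every such $t'$ by $t$ yields a tuple over $T'$ (using $t\in T'$), which satisfies $\psi$ because $T'\in\lang(\varphi)$. I then reintroduce $t'$ one coordinate at a time: at coordinate $i\in S$ the tuple currently carries $t$ at position $i$ and some completion $\bar r$ elsewhere, hence $\bar r\in\lang(\monitor_\varphi[t/\pi_i])$, and the inclusion gives $\bar r\in\lang(\monitor_\varphi[t'/\pi_i])$, i.e.\ the tuple still satisfies $\psi$ after swapping $t$ back to $t'$ at position $i$. After $|S|$ swaps I recover $\bar s$ and conclude $\bar s\models\psi$; as $\bar s$ was arbitrary, $T'\cup\set{t'}\in\lang(\varphi)$.

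For the direction ``$\Rightarrow$'', I argue by contraposition as in the preceding ($\forall^2$) lemma. If some inclusion fails, say $\lang(\monitor_\varphi[t/\pi_i])\nsubseteq\lang(\monitor_\varphi[t'/\pi_i])$, there is an $(n-1)$-tuple $\bar s$ accepted when $\pi_i$ is instantiated with $t$ but rejected when it is instantiated with $t'$. Collecting $t$ together with the traces appearing in $\bar s$ into a finite set and closing it off to a model of $\varphi$, I obtain $T'\supseteq\set{t}$ with $T'\in\lang(\varphi)$ whose extension by $t'$ contains the rejected instantiation and hence violates $\varphi$; this contradicts $(\set{t},\varphi)$-redundancy of $t'$, so $t$ does not dominate $t'$.

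I expect the main obstacle to be the ``$\Leftarrow$'' direction when $t'$ occupies several coordinates of a single tuple simultaneously, a phenomenon absent for $n=2$. The resolution I rely on is that the containment is read per instantiation rather than per trace set: $\lang(\monitor_\varphi[t/\pi_i])\subseteq\lang(\monitor_\varphi[t'/\pi_i])$ holds for every completing tuple independently of what the other coordinates carry, so the single-coordinate swaps compose without interaction, and even the fully diagonal tuple $(t',\dots,t')$ is handled by first collapsing it to the all-$t$ tuple that lies in $T'$. The remaining care is bookkeeping in the generalized iteration lemma, namely including $T'\in\lang(\varphi)$ to cover tuples disjoint from $t'$ and the diagonal instantiation $(t',\dots,t')$; both are subsumed by the coordinate-wise replacement argument above.
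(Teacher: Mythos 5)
Your overall strategy matches the paper's: the paper proves only the $\forall^2$ case (via Lemma~\ref{thm:language-inclusion-by-iteration} and the subsequent dominance lemma) and obtains Theorem~\ref{dominating} by exactly the generalization you carry out. Your ``$\Leftarrow$'' direction is in fact more careful than the paper's: the coordinate-by-coordinate swap from the all-$t$ replacement back to $\bar s$ cleanly handles tuples in which $t'$ occupies several coordinates, including the diagonal $(t',\dots,t')$ --- a point the two-variable proof in the paper passes over silently.

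The gap is in your ``$\Rightarrow$'' direction, in the step ``collecting $t$ together with the traces appearing in $\bar s$ into a finite set and closing it off to a model of $\varphi$.'' For a $\forall^n$ formula the language is downward closed under subsets, so enlarging a set only adds constraints: if the finite set $T'$ consisting of $t$ and the traces occurring in $\bar s$ is not already a model of $\varphi$ (some tuple over $T'$ other than your witness instantiation may violate $\psi$), then no superset of $T'$ is a model, and in that case both $T'\notin\lang(\varphi)$ and $T'\cup\set{t'}\notin\lang(\varphi)$, so the equivalence defining dominance is not contradicted. The contradiction you want only materializes under the additional hypothesis $T'\in\lang(\varphi)$, which your witness construction does not supply. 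To be fair, the paper's own proof of the two-variable lemma has exactly the same weakness: it chooses $T'=\set{t,\tilde t}$ and claims a contradiction from $\pathassign\models\psi$ versus $\pathassign'\nmodels\psi$ without verifying $T'\in\lang(\varphi)$; indeed, for any $\psi$ falsified by the diagonal assignment $\pi_1,\pi_2\mapsto t$, every $T'\supseteq\set{t}$ and every extension of it fail $\varphi$, so $t$ vacuously dominates every $t'$ even when the inclusions fail. Your proposal thus reproduces the paper's argument, including this unsound step, rather than repairing it; a correct statement of the ``only if'' direction needs an extra hypothesis ruling out such degenerate $t$.
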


The characterization of dominance for existential quantification is dual.
\begin{lemma} \label{thm:exists-2-dominance}
	Given an $\exists^2$ HyperLTL formula $\varphi$ over trace variables $\pathvars := \{\pi_1, \ldots, \pi_n\}$ and two traces $t,t' \in \traces$, the following holds: $t$ dominates $t'$ if and only if
	\[
	\lang(\monitor_\varphi[t'/\pi_1]) \subseteq \lang(\monitor_\varphi[t/\pi_1]) \wedge \lang(\monitor_\varphi[t'/\pi_2]) \subseteq \lang(\monitor_\varphi[t/\pi_2])
	\]
\end{lemma}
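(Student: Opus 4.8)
The plan is to mirror the proof for the universal case (Lemma/Theorem~\ref{dominating}), but dualize every step because existential quantification flips the direction of the language-inclusion checks. First I would establish the existential analogue of Lemma~\ref{thm:language-inclusion-by-iteration}: for a $\forall^2 \to \exists^2$ formula $\varphi = \exists \pi_1 \ldot \exists \pi_2 \ldot \psi$, a trace set $T \cup \set{t}$ is a \emph{model} of $\varphi$ if and only if \emph{there exists} a witness pair among the traces. Concretely, the new trace $t$ can help satisfy $\varphi$ precisely when either $T \cap \lang(\monitor_\varphi[t/\pi_1]) \neq \emptyset$ or $T \cap \lang(\monitor_\varphi[t/\pi_2]) \neq \emptyset$ (or $t$ pairs with itself). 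The key observation is that in the existential setting adding a trace can only \emph{help} satisfaction, never destroy it, so $T \in \lang(\varphi) \implies T \cup \set{t} \in \lang(\varphi)$ always holds; the interesting content of dominance is therefore about whether $t$ subsumes $t'$ as a potential witness.

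Next I would unfold the definition of dominance (Definition~\ref{def_dominance}): $t$ dominates $t'$ means $t'$ is $(\set{t},\varphi)$-redundant, i.e.\ $\forall T' \supseteq \set{t} \ldot T' \in \lang(\varphi) \Leftrightarrow T' \cup \set{t'} \in \lang(\varphi)$. Since the $\Leftarrow$ direction is for free in the existential case (removing $t'$ from a model that already contains $t$ cannot matter because $t$ is present), the substance is the $\Rightarrow$ direction reversed: we need that whenever $T' \cup \set{t'}$ is a model, so is $T'$ alone, i.e.\ $t'$ never contributes an essential witness that $t$ could not also provide. This is exactly captured by the reversed inclusions $\lang(\monitor_\varphi[t'/\pi_1]) \subseteq \lang(\monitor_\varphi[t/\pi_1])$ and $\lang(\monitor_\varphi[t'/\pi_2]) \subseteq \lang(\monitor_\varphi[t/\pi_2])$: any trace $\tilde{t}$ that pairs with $t'$ to witness $\psi$ also pairs with $t$, so $t$ can take over the role of $t'$.

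For the forward direction I would argue by contraposition, paralleling the universal proof. Assume w.l.o.g.\ $\lang(\monitor_\varphi[t'/\pi_1]) \nsubseteq \lang(\monitor_\varphi[t/\pi_1])$, so there is a $\tilde{t}$ with $\tilde{t} \in \lang(\monitor_\varphi[t'/\pi_1])$ but $\tilde{t} \notin \lang(\monitor_\varphi[t/\pi_1])$. Then $\pathassign' = \set{\pi_1 \mapsto t', \pi_2 \mapsto \tilde{t}}$ is a valid assignment witnessing $\psi$ while the corresponding assignment using $t$ is not. Taking $T' = \set{t, \tilde{t}}$, the set $T' \cup \set{t'}$ is a model via $\pathassign'$, but I would need to exhibit a scenario where $T'$ itself is not a model; this requires choosing the ambient propositions so that no other witness pair exists within $\set{t,\tilde{t}}$ — the standard trick is to work with a formula/alphabet where the distinguishing witness is the only one, or to strengthen the statement to quantify over the existence of a separating context. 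For the converse, assuming both reversed inclusions, I would take arbitrary $T' \supseteq \set{t}$ and show $T' \cup \set{t'} \in \lang(\varphi) \implies T' \in \lang(\varphi)$ by transporting any witness pair involving $t'$ to one involving $t$ via the inclusions.

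The main obstacle I anticipate is the direction establishing necessity of the inclusions: unlike the clean universal argument, the existential "only if" must produce a concrete set $T'$ on which dominance \emph{fails}, and this needs care because existential satisfaction is monotone, so simply having one bad witness pair is not automatically fatal — one has to ensure that $t'$ supplies a witness that $t$ genuinely cannot replace, which may require an appropriate choice of separating trace $\tilde{t}$ (analogous to the $\hat{t}$ / $\tilde{t}$ gadget in the universal proof) together with the fact that $T'$ can be chosen minimal enough to expose the difference.
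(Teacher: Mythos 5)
Your overall plan mirrors the paper's own proof: contraposition with a separating trace $\tilde t$ and the two-element set $T' = \set{t,\tilde t}$ for necessity, and a case split on $T' \cup \set{t'}$ for sufficiency. Your sufficiency argument is sound, and in fact cleaner than the paper's, which transplants the universal-case inclusions ``$T' \subseteq \lang(\monitor_\varphi[t'/\pi_1])$'' verbatim even though they do not follow from $T' \cup \set{t'} \in \lang(\varphi)$ under existential semantics; your ``transport any witness pair involving $t'$ to one involving $t$'' argument is the right way to do it (just make sure to also handle the pair $(t',t')$, which transports to $(t,t)$ by applying both inclusions in sequence). The half $T' \in \lang(\varphi) \Rightarrow T' \cup \set{t'} \in \lang(\varphi)$ is indeed free by monotonicity of existential satisfaction, as you observe.

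The obstacle you flag in the necessity direction is a genuine gap, and you are right that it cannot be closed by the same move as in the universal case: knowing that $(\pi_1 \mapsto t', \pi_2 \mapsto \tilde t)$ satisfies $\psi$ while $(\pi_1 \mapsto t, \pi_2 \mapsto \tilde t)$ does not is not enough to conclude $T' = \set{t,\tilde t} \notin \lang(\varphi)$, because existential satisfaction may be witnessed by a different pair, e.g.\ $(t,t)$ or $(\tilde t,\tilde t)$. The paper's own proof has exactly this hole: it asserts the contradiction at this step without justification. Moreover, the hole is not repairable in general. Take $\psi = a_{\pi_1} \leftrightarrow a_{\pi_2}$: every nonempty trace set is a model (assign both variables the same trace), so every trace dominates every other trace vacuously; yet for $t$ with $a \notin t[0]$ and $t'$ with $a \in t'[0]$ the inclusion $\lang(\monitor_\varphi[t'/\pi_1]) \subseteq \lang(\monitor_\varphi[t/\pi_1])$ fails. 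Hence the ``only if'' direction of the lemma is false as stated, and your instinct that one would need an extra ``separating context'' (some hypothesis excluding such self-witnesses) is the correct diagnosis --- no choice of $\tilde t$ alone can provide one here.
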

\begin{proof}
  Assume for the sake of contradiction that (a) $t$ dominates $t'$ and w.l.o.g. (b) $\lang(\monitor_\varphi[t'/\pi_1]) \nsubseteq \lang(\monitor_\varphi[t/\pi_1])$.
  Thus, by definition of subset, there exists a trace $\tilde{t}$ with $\tilde{t} \in \lang(\monitor_\varphi[t'/\pi_1])$ and $\tilde{t} \not \in \lang(\monitor_\varphi[t/\pi_1])$.
  Hence, $\pathassign = \set{\pi_1 \mapsto t', \pi_2 \mapsto \tilde{t}}$ is a valid trace assignment, whereas $\pathassign' = \set{\pi_1 \mapsto t, \pi_2 \mapsto \tilde{t}}$ is not.
  On the other hand, from (a) the following holds by Definition~\ref{def_dominance}: $\forall T'$ with $\set{t} \subseteq T'$ it holds that $T' \in \lang(\varphi) \Leftrightarrow T' \cup \{t'\} \in \lang(\varphi)$.
  We choose $T'$ as $\{t,\tilde{t}\}$, which is a contradiction to the equivalence since we know from (a) that $\pathassign$ is a valid trace assignment, but $\pathassign'$ is not a valid trace assignment.

  For the other direction, assume that $\lang(\monitor_\varphi[t'/\pi_1]) \subseteq \lang(\monitor_\varphi[t/\pi_1])$ and $\lang(\monitor_\varphi[t'/\pi_2]) \subseteq \lang(\monitor_\varphi[t/\pi_2])$.
  Let $T'$ be arbitrary such that $\set{t} \subseteq T'$.
  We distinguish two cases:
  \begin{itemize}
  	\item
      Case $T' \cup \set{t'} \in \lang(\varphi)$, then (a) $T' \subseteq \lang(M_\varphi[t'/\pi_1]) \subseteq \lang(M_\varphi[t/\pi_1])$ and (b) $T' \subseteq \lang(M_\varphi[t'/\pi_2]) \subseteq \lang(M_\varphi[t/\pi_2])$.
      By Lemma~\ref{thm:language-inclusion-by-iteration} and $T' \cup \set{t'} \in \lang(\varphi)$, it follows that $T' \in \lang(\varphi)$.
  \item
      Case $T' \cup \set{t'} \notin \lang(\varphi)$, then $T' \notin \lang(\varphi)$.
  \end{itemize}
\end{proof}
\begin{corollary} \label{thm:exists-dominating}
  Given an $\exists^n$ HyperLTL formula $\varphi$ over trace variables $\pathvars \coloneqq \{\pi_1, \ldots, \pi_n\}$ and two traces $t,t' \in \traces$, the following holds: $t$ dominates $t'$ if and only if
  $
    \bigwedge_{\pi \in \mathcal{V}}\lang(\monitor_\varphi[t'/\pi]) \subseteq \lang(\monitor_\varphi[t/\pi])
  $.
\end{corollary}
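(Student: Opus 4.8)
The plan is to establish Corollary~\ref{thm:exists-dominating} by the same two-step strategy used for the universal case: first prove the two-variable version (Lemma~\ref{thm:exists-2-dominance}) and then generalize it to $n$ quantifiers. Since Lemma~\ref{thm:exists-2-dominance} is already available, the corollary follows by essentially lifting its proof from the case $n=2$ to arbitrary $n$. So the real work is to explain why the generalization goes through verbatim once the direction of inclusion is correctly flipped.

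First I would record the dual of Lemma~\ref{thm:language-inclusion-by-iteration} appropriate to the existential setting, namely that for a $\forall$-formula the monitor language inclusion characterizes membership conjunctively, whereas for an $\exists$-formula one trace in the set already suffices to witness satisfaction. Concretely, for $\varphi = \exists^n \pi\ldot \psi$ we have $T \cup \set{t} \in \lang(\varphi)$ if and only if there is \emph{some} index $i$ and a completion of the remaining quantifiers drawn from $T \cup \set{t}$ that satisfies $\psi$. The key consequence, mirroring the $\forall$-case, is that initializing the template monitor $\monitor_\varphi$ with a trace $t$ on variable $\pi$ tracks exactly those traces that, paired with $t$ in slot $\pi$, can contribute to satisfaction. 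Because existential quantification is satisfied by enlarging the witness pool, \emph{stronger} requirements correspond to a \emph{larger} monitor language, which is why the inclusion in the existential characterization points the opposite way: $t$ dominates $t'$ precisely when $\lang(\monitor_\varphi[t'/\pi]) \subseteq \lang(\monitor_\varphi[t/\pi])$ for every $\pi \in \pathvars$.

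Next I would carry out the generalization. The proof of Lemma~\ref{thm:exists-2-dominance} has two directions. For the forward direction (dominance implies inclusion), one assumes for contradiction that some $\lang(\monitor_\varphi[t'/\pi]) \nsubseteq \lang(\monitor_\varphi[t/\pi])$, extracts a witnessing trace $\tilde t$ in the larger language but not the smaller, and builds a three-element test set $\set{t,t',\tilde t}$ on which the dominance equivalence $T' \in \lang(\varphi) \Leftrightarrow T' \cup \set{t'} \in \lang(\varphi)$ fails. For $n > 2$ the only change is that the assignment fixing one slot to $\tilde t$ must fix the remaining $n-2$ slots as well; choosing all of them to be $\tilde t$ (or any fixed trace of the test set) keeps the argument intact, since only the slot $\pi$ and the witness $\tilde t$ matter for the contradiction. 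For the backward direction one assumes all $n$ inclusions and, given $T' \supseteq \set{t}$, splits on whether $T' \cup \set{t'} \in \lang(\varphi)$; the inclusions push any satisfying witness over $T' \cup \set{t'}$ down to a satisfying witness over $T'$, using the dual of Lemma~\ref{thm:language-inclusion-by-iteration}, and the failing case is immediate.

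The main obstacle I anticipate is bookkeeping the witness across the other $n-2$ quantifier slots cleanly: in the $\forall$-case every slot is constrained simultaneously, whereas in the $\exists$-case a single slot's witness already discharges the quantifier, so I must be careful that when I fix the non-$\pi$ slots in the forward direction I do not accidentally make $\pathassign'$ satisfying through a different slot. The clean way to avoid this is to phrase the monitor $\monitor_\varphi[t/\pi]$ so that its language is exactly the set of traces completing the remaining slots in a satisfying way, and to note that the inclusion $\lang(\monitor_\varphi[t'/\pi]) \subseteq \lang(\monitor_\varphi[t/\pi])$ per slot is precisely what transfers satisfiability from the $t'$-augmented set to the $t$-augmented set, so that no separate slot can rescue $\pathassign'$. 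Once this per-slot correspondence is stated, the induction on $n$ collapses and the corollary follows by the same case analysis as the two-variable lemma.
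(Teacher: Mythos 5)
Your proposal matches the paper's route: the paper states this corollary without proof, presenting it as the $n$-ary generalization of Lemma~\ref{thm:exists-2-dominance} in exact analogy to how Theorem~\ref{dominating} generalizes the $\forall^2$ lemma, and your plan of lifting that lemma's two-direction argument together with the dual of Lemma~\ref{thm:language-inclusion-by-iteration} is precisely the intended argument. Your explicit attention to the bookkeeping of the remaining $n-2$ existential slots (so that no other slot accidentally rescues the $t'$-assignment in the forward direction) goes beyond what the paper writes down, but does not change the approach.
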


\begin{figure}[t]
\centering
\scalebox{.9}{
\begin{algorithm}[H]
	\label{alg_minimization}
	\SetKwInOut{Input}{input}
	\SetKwInOut{Output}{output}
	\SetAlgoLined
	\Input{$\forall^n$ $\hyperltl$ formula $\varphi$,\\
		redundancy free set of traces $T$\\
		trace $t$}
	\Output{redundancy free set of traces $T_\mathit{min} \subseteq T \cup \set{t}$}
	\BlankLine
	$\monitor_\varphi =$ \texttt{build\_template($\varphi$)}
	\BlankLine
	\ForEach{$t' \in T$}{
		\If{$\bigwedge_{\pi \in \pathvars} \lang(\monitor_\varphi[t'/\pi]) \subseteq \lang(\monitor_\varphi[t/\pi])$}
		{
			return $T$
		}
	}
	\ForEach{$t' \in T$}{
		\If{$\bigwedge_{\pi \in \pathvars} \lang(\monitor_\varphi[t/\pi]) \subseteq \lang(\monitor_\varphi[t'/\pi])$}
		{
              $T \coloneqq T \setminus \set{t'}$
		}
	}
	\Return $T \cup \set{t}$
\end{algorithm}
}
\caption{Storage Minimization Algorithm.}
\end{figure}
\begin{theorem}\label{alg}
  Algorithm~\ref{alg_minimization} preserves the minimal trace set $T$, i.e., for all $t \in T$ it holds that $t$ is not $(T \setminus \set{t},\varphi)$-redundant.
\end{theorem}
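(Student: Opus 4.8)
The plan is to prove, by induction on the number of traces processed so far, that Algorithm~\ref{alg_minimization} maintains the invariant demanded by the theorem: writing $T_{\mathit{out}}$ for the returned set, every $s \in T_{\mathit{out}}$ fails to be $(T_{\mathit{out}} \setminus \set{s}, \varphi)$-redundant. I would first record two standing simplifications. Because $\varphi = \forall \pi_1 \forall \pi_2 \ldot \psi$ is downward closed on trace sets, the ``$\Leftarrow$'' half of the redundancy equivalence is automatic, so $s$ is $(S,\varphi)$-redundant exactly when every model $T' \supseteq S$ already satisfies $T' \cup \set{s} \in \lang(\varphi)$. Moreover, as long as the monitor has not reported a violation the stored set is a model of $\varphi$, so I would carry $T_{\mathit{out}} \in \lang(\varphi)$ as part of the invariant; if $T_{\mathit{out}} \nmodels \varphi$ the statement is vacuous, since then every trace is trivially redundant. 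Being a model also gives, for each $s \in T_{\mathit{out}}$, that the self-pair $(s,s)$ satisfies $\psi$, which removes the reflexivity side-condition from the analysis below.

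Next I would isolate the core reduction. For a trace $s$ set $\Lambda(s) \coloneqq \lang(\monitor_\varphi[s/\pi_1]) \cap \lang(\monitor_\varphi[s/\pi_2])$, the set of traces that are $\psi$-compatible with $s$ on both sides, and for a model set $S$ let $\mathrm{Comp}(S)$ be the set of traces $a$ with $S \cup \set{a} \in \lang(\varphi)$. Using the generalized form of Lemma~\ref{thm:language-inclusion-by-iteration}, I would show that for a model set $S$ a trace $s$ is $(S,\varphi)$-redundant if and only if $\mathrm{Comp}(S) \subseteq \Lambda(s)$. Applying this with $S = T_{\mathit{out}} \setminus \set{s}$ reduces the theorem to the following witness-existence statement: for every $s \in T_{\mathit{out}}$ there is a trace $a^\ast$ compatible with all of $T_{\mathit{out}} \setminus \set{s}$ yet incompatible with $s$, i.e. $a^\ast \in \mathrm{Comp}(T_{\mathit{out}} \setminus \set{s})$ and $a^\ast \notin \Lambda(s)$.

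The remaining step is to produce such an $a^\ast$ from the guarantees the algorithm actually enforces. By construction the returned set is a dominance antichain: the first loop discards $t$ if some stored $t'$ dominates it, and the second deletes every stored $t'$ that $t$ dominates, so no element of $T_{\mathit{out}}$ dominates another. By Theorem~\ref{dominating}, ``$t'$ does not dominate $s$'' unfolds to a failure of inclusion $\lang(\monitor_\varphi[t'/\pi_j]) \not\subseteq \lang(\monitor_\varphi[s/\pi_j])$ for some index $j$, which supplies, for each fixed $t' \in T_{\mathit{out}} \setminus \set{s}$, a trace lying in $\Lambda(t') \setminus \Lambda(s)$. The plan is then to assemble these per-pair witnesses into the single global witness $a^\ast$ required above, or, dually, to argue contrapositively that $\mathrm{Comp}(T_{\mathit{out}} \setminus \set{s}) \subseteq \Lambda(s)$ can only occur when the constraints of some single $t'$ already force $\Lambda(t') \subseteq \Lambda(s)$, i.e. when $t'$ dominates $s$, contradicting the antichain property.

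I expect this last assembly step to be the main obstacle, and it is where essentially all the difficulty of the theorem resides. A trace witnessing non-domination of $s$ by $t_1$ need not be compatible with $t_2$, so the local witnesses from Theorem~\ref{dominating} need not patch together into a trace in $\mathrm{Comp}(T_{\mathit{out}} \setminus \set{s})$; equivalently, the intersection $\bigcap_{t'} \Lambda(t')$ over the stored set can be contained in $\Lambda(s)$ even though no individual $\Lambda(t')$ is, so collective redundancy genuinely need not be witnessed by any single dominator. Making the step go through therefore seems to require additional structure on $\varphi$ that turns dominance into a sufficiently linear preorder --- transitivity of $\varphi$ (Definition~\ref{def:transitivity}) is the natural candidate, since it collapses $(S,\varphi)$-redundancy to domination by a single element and so lets the antichain property deliver the witness directly --- or, failing that, a strengthening of Algorithm~\ref{alg_minimization} that tests each survivor for redundancy against finite \emph{subsets} of the remaining traces rather than against single traces, so that the maintained invariant matches the collective notion the theorem asserts.
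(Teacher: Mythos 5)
Your reduction of $(S,\varphi)$-redundancy to the inclusion $\mathrm{Comp}(S) \subseteq \Lambda(s)$ is sound, and your refusal to complete the final ``assembly'' step is the right instinct: that step is genuinely impossible in general, and the paper does not supply it either. The paper's entire proof of Theorem~\ref{alg} reads ``By induction on $T \setminus \set{t}$ and Theorem~\ref{dominating}'', which silently equates ``$t$ is $(T\setminus\set{t},\varphi)$-redundant'' with ``some single $t' \in T\setminus\set{t}$ dominates $t$''. But Theorem~\ref{dominating} characterizes only single-trace dominance, so induction plus that theorem establishes exactly the pairwise antichain property you isolate --- no stored trace dominates another --- and not the set-level claim in the theorem statement. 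In other words, your proposal is a more careful account of the situation than the paper's own proof; the obstacle you flag is not a defect of your attempt but a gap the paper shares.

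Your suspicion that collective redundancy can occur without any single dominator can in fact be confirmed by a small counterexample within the paper's own definitions. Take $\psi = (x_{\pi_1} \rightarrow y_{\pi_2}) \wedge (x'_{\pi_1} \rightarrow y'_{\pi_2}) \wedge (z_{\pi_1} \rightarrow w_{\pi_2})$ and three traces of length one: $t_1 = \set{x,z,y,y',w}$, $t_2 = \set{x',z,y,y',w}$, and $t = \set{x,x',y,y',w}$ (note $t$ lacks $z$). Since all three traces contain $y$, $y'$, and $w$, each $\pi_2$-instantiated monitor accepts every trace, while the $\pi_1$-instantiations impose the requirement sets $\set{y,w}$, $\set{y',w}$, and $\set{y,y'}$, respectively. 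These are pairwise incomparable, so by Theorem~\ref{dominating} no trace dominates any other, and Algorithm~\ref{alg_minimization} retains all three in every arrival order. Yet any trace compatible with both $t_1$ and $t_2$ must carry $y$, $y'$, and $w$, hence is compatible with $t$, and $t$'s own obligations are satisfied by every such extension; since $\forall^2$ languages are closed under subsets, it follows that $t$ \emph{is} $(\set{t_1,t_2},\varphi)$-redundant --- contradicting the theorem read literally. The statement is therefore only correct under the weaker, pairwise reading (no element of $T$ is $(\set{t'},\varphi)$-redundant for any single $t' \in T\setminus\set{t}$), which is what the algorithm guarantees and what the paper's one-line proof actually shows; recovering the collective claim would require extra hypotheses, such as the transitivity you propose, or a strengthened redundancy check against subsets of the stored traces.
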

\begin{proof}
	By induction on $T \setminus \set{t}$ and Theorem~\ref{dominating}.
\end{proof}

In the following, we give a characterization of the trace dominance for $\hyperltl$ formulas with one alternation.
These characterizations can be checked similarly to the algorithm depicted in Fig.~\ref{alg_minimization}.
\begin{theorem} \label{thm:forall-exists-dominance}
	Given a $\hyperltl$ formula $\forall \pi \ldot \exists \pi' \ldot \psi$ two traces $t,t' \in \traces$, the following holds: $t$ dominates $t'$ if and only if
	\begin{equation*}
	\lang(\monitor_\varphi[t/\pi]) \subseteq \lang(\monitor_\varphi[t'/\pi])
	\text{ and }
	\lang(\monitor_\varphi[t'/\pi']) \subseteq \lang(\monitor_\varphi[t/\pi']) \enspace.
	\end{equation*}
\end{theorem}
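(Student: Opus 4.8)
The plan is to mimic the structure of the proof for the purely universal case (the $\forall^2$ lemma preceding Theorem~\ref{dominating}), adapting it to the mixed quantifier prefix $\forall\pi\ldot\exists\pi'$. The key conceptual point is that dominance of $t$ over $t'$ must account for the fact that the two quantifiers impose \emph{opposite} monotonicity requirements on the trace set: adding a trace makes the universal constraint harder to satisfy, but makes the existential constraint easier to satisfy. This is precisely why the two language-inclusion conditions in the statement point in \emph{opposite} directions: for the universal variable $\pi$ we require $\lang(\monitor_\varphi[t/\pi]) \subseteq \lang(\monitor_\varphi[t'/\pi])$ (the replacement $t'$ is at least as permissive as witness), whereas for the existential variable $\pi'$ we require $\lang(\monitor_\varphi[t'/\pi']) \subseteq \lang(\monitor_\varphi[t/\pi'])$. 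First I would establish a one-alternation analogue of Lemma~\ref{thm:language-inclusion-by-iteration}, characterizing when $T \cup \set{t}$ is a model of $\forall\pi\ldot\exists\pi'\ldot\psi$ in terms of the two instantiated monitors; this bookkeeping lemma is what lets the contradiction argument go through cleanly.

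For the forward direction I would argue by contraposition, as in the $\forall^2$ and $\exists^2$ lemmas: assume $t$ dominates $t'$ but (w.l.o.g.) one of the two inclusions fails. If the universal inclusion fails, there is a witness trace $\tilde t \in \lang(\monitor_\varphi[t/\pi]) \setminus \lang(\monitor_\varphi[t'/\pi])$, and I would plug $T' \coloneqq \set{t,\tilde t}$ into the dominance equivalence $T' \in \lang(\varphi) \Leftrightarrow T' \cup \set{t'} \in \lang(\varphi)$ to derive a contradiction, exactly as done for the $\forall^2$ case. The genuinely new work is the existential inclusion: here the failing witness $\tilde t$ lies in $\lang(\monitor_\varphi[t'/\pi']) \setminus \lang(\monitor_\varphi[t/\pi'])$, and I would have to build a trace set on which $t'$ serves as the required existential witness for some universally chosen trace while $t$ does not, again contradicting the equivalence.

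For the converse direction I would fix an arbitrary $T'$ with $\set{t} \subseteq T'$ and do a case distinction on whether $T' \in \lang(\varphi)$, chaining the two inclusions through the characterization lemma from the first step to transfer satisfaction from $T'$ to $T' \cup \set{t'}$ and back, mirroring the two-bullet case analyses in the proofs above. The main obstacle I anticipate is getting the direction of each inclusion exactly right and verifying that the one-alternation characterization lemma actually holds: unlike the purely universal case, satisfaction of $\forall\pi\ldot\exists\pi'\ldot\psi$ by $T$ is \emph{not} simply a conjunction of per-trace membership tests, because the existential witness for each universally quantified trace may itself depend on which traces are present in $T$. I would therefore need to be careful that the monitor $\monitor_\varphi[t/\pi]$ (with the universal variable fixed to $t$) genuinely captures "for every extension, there is an existential witness in $T$", and that replacing $t$ by $t'$ according to the stated inclusions preserves both the existence of witnesses and the universal obligations. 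Pinning down this semantics precisely, and confirming that the mixed-direction inclusions are both necessary and sufficient, is where the real care is required; the two contradiction arguments and the final case split are then routine adaptations of the earlier proofs.
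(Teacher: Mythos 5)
Your proposal matches the paper's proof essentially step for step: the forward direction is dispatched by the same contradiction argument as in the $\forall^2$ and $\exists^2$ lemmas, and the converse is the same two-case analysis on $T'$ versus $T'\cup\set{t'}$, using inclusion (a) to supply an existential witness for the newly added universal obligation of $t'$ and inclusion (b) to show that $t$ can replace $t'$ as a witness when $t'$ is removed. The only cosmetic difference is that the paper does not introduce a one-alternation analogue of Lemma~\ref{thm:language-inclusion-by-iteration} (as you yourself suspect, no such clean per-trace decomposition exists for the alternating prefix); it simply unfolds the $\forall\exists$ semantics directly inside the case analysis, which is exactly the care you flagged as necessary.
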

\begin{proof}
	The $\Rightarrow$ direction is the same as in proofs of Theorem~\ref{dominating} and Lemma~\ref{thm:exists-2-dominance}.
	
	For the other direction, assume that that (a) $\lang(\monitor_\varphi[t/\pi]) \subseteq \lang(\monitor_\varphi[t'/\pi])$ and (b) $\lang(\monitor_\varphi[t'/\pi']) \subseteq \lang(\monitor_\varphi[t/\pi'])$.
	Let $T'$ be arbitrary such that $\set{t} \subseteq T'$.
	We distinguish two cases:
	\begin{itemize}
		\item
		Case $T' \in \lang(\varphi)$, then for all $t_1 \in T'$ there is a $t_2 \in T'$ such that $\pathassignfin = \set{\pi \mapsto t_1, \pi' \mapsto t_2} \models_\emptyset \psi$.
		Especially, for $t$, there is a corresponding trace $t^*$ such that $\set{\pi \mapsto t, \pi' \mapsto t^*} \models_\emptyset \psi$, thus $t^* \in \lang(\monitor_\varphi[t/\pi])$.
		From (a) it follows that $t^* \in \lang(\monitor_\varphi[t'/\pi])$.
		Hence, $\set{\pi \mapsto t', \pi' \mapsto t^*} \models_\emptyset \psi$ and thereby $T' \cup \set{t'} \in \lang(\varphi)$.
		\item
		Case $T' \cup \set{t'} \in \lang(\varphi)$, then for all $t_1 \in T' \cup \set{t'}$ there is a $t_2 \in T' \cup \set{t'}$ such that $\set{\pi \mapsto t_1, \pi' \mapsto t_2} \models_\emptyset \psi$.
		Assume for the sake of contradiction there is a $t_1 \in T'$ such that there is no $t_2 \in T'$ with $\set{\pi \mapsto t_1, \pi' \mapsto t_2} \models_\emptyset \psi$.
		It follows that $\set{\pi \mapsto t_1, \pi' \mapsto t'} \models_\emptyset \psi$, i.e., $t_1 \in \lang(\monitor_\varphi[t'/\pi'])$.
		From (b) it follows that $t_1 \in \lang(\monitor_\varphi[t/\pi'])$, leading to the contradiction that $\set{\pi \mapsto t_1, \pi' \mapsto t} \models_\emptyset \psi$ and $t \in T'$.
		Hence, $T' \in \lang(\varphi)$.
	\end{itemize}
\end{proof}
\begin{corollary}
	Given a $\hyperltl$ formula $\exists \pi \ldot \forall \pi' \ldot \psi$ two traces $t,t' \in \traces$, the following holds: $t$ dominates $t'$ if and only if
	\begin{equation*}
	\lang(\monitor_\varphi[t'/\pi]) \subseteq \lang(\monitor_\varphi[t/\pi])
	\text{ and }
	\lang(\monitor_\varphi[t/\pi']) \subseteq \lang(\monitor_\varphi[t'/\pi']) \enspace.
	\end{equation*}
\end{corollary}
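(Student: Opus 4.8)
The plan is to derive this corollary from Theorem~\ref{thm:forall-exists-dominance} by duality, rather than repeating a case analysis. The key observation is that dominance is insensitive to complementation of the specification. Write $\varphi = \exists \pi \ldot \forall \pi' \ldot \psi$ and let $\varphi' = \forall \pi \ldot \exists \pi' \ldot \neg\psi$ be the formula obtained by pushing a negation through the quantifier prefix. For every $T$ we have $T \in \lang(\varphi)$ if and only if $T \notin \lang(\varphi')$, so the redundancy biconditional $T' \in \lang(\varphi) \Leftrightarrow T' \cup \set{t'} \in \lang(\varphi)$ is logically equivalent, by taking the contrapositive of each side, to $T' \in \lang(\varphi') \Leftrightarrow T' \cup \set{t'} \in \lang(\varphi')$. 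Quantifying over all $T' \supseteq \set{t}$ then shows that $t$ dominates $t'$ with respect to $\varphi$ exactly when $t$ dominates $t'$ with respect to $\varphi'$.

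First I would make precise the monitor-language identity that links the two formulas. The language of the instantiated template is characterized semantically: $\lang(\monitor_\varphi[t/\pi])$ is the set of traces $\tilde t$ with $\set{\pi \mapsto t, \pi' \mapsto \tilde t} \models_\emptyset \psi$, and symmetrically for $\pi'$. Replacing the body $\psi$ by $\neg\psi$ therefore complements each of these languages, that is $\lang(\monitor_{\varphi'}[t/\pi]) = \overline{\lang(\monitor_\varphi[t/\pi])}$ and $\lang(\monitor_{\varphi'}[t/\pi']) = \overline{\lang(\monitor_\varphi[t/\pi'])}$, where the complement is taken over all traces. I expect this to be the main obstacle: it requires the template to be deterministic and complete enough that ``not accepted'' coincides with ``accepts the negated body'', which is exactly the membership characterization underlying Lemma~\ref{thm:hyperltl-body-to-ltl} and the monitor construction.

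Then I would apply Theorem~\ref{thm:forall-exists-dominance} to $\varphi'$, whose prefix is $\forall \pi \ldot \exists \pi'$. It yields that $t$ dominates $t'$ with respect to $\varphi'$ if and only if
\begin{equation*}
\lang(\monitor_{\varphi'}[t/\pi]) \subseteq \lang(\monitor_{\varphi'}[t'/\pi]) \text{ and } \lang(\monitor_{\varphi'}[t'/\pi']) \subseteq \lang(\monitor_{\varphi'}[t/\pi']) \enspace.
\end{equation*}
Substituting the complementation identities and using that complementation reverses inclusions ($A \subseteq B$ iff $\overline{B} \subseteq \overline{A}$), the first conjunct becomes $\lang(\monitor_\varphi[t'/\pi]) \subseteq \lang(\monitor_\varphi[t/\pi])$ and the second becomes $\lang(\monitor_\varphi[t/\pi']) \subseteq \lang(\monitor_\varphi[t'/\pi'])$, which is precisely the claimed characterization. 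Combined with the equivalence of dominance for $\varphi$ and $\varphi'$ established above, this completes the argument.

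As a fallback that avoids complementation, I could instead mirror the proof of Theorem~\ref{thm:forall-exists-dominance} directly, with the two cases of the $\Leftarrow$ direction interchanged: in the case $T' \cup \set{t'} \in \lang(\varphi)$ one transfers the existential witness from $T' \cup \set{t'}$ back to $T'$ using the inclusion on $\pi$, and in the case $T' \cup \set{t'} \notin \lang(\varphi)$ one argues by contradiction, extending a hypothetical witness for $T'$ to a witness for $T' \cup \set{t'}$ through the inclusion on $\pi'$ evaluated at the trace $t \in T'$. The $\Rightarrow$ direction is the same contradiction argument already used in the earlier proofs.
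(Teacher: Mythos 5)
Your proposal is correct. The paper states this result as a corollary of Theorem~\ref{thm:forall-exists-dominance} without giving a proof; judging from how it treats the analogous existential case (Lemma~\ref{thm:exists-2-dominance} is proved by mirroring the universal one), the intended argument is the interchanged case analysis you offer as a fallback, and that sketch is sound: when $T' \cup \set{t'} \in \lang(\varphi)$ the existential witness (if it is $t'$) transfers to $t \in T'$ via the inclusion on $\pi$, and when $T' \in \lang(\varphi)$ the universal obligation of the witness $t_1$ at the new trace $t'$ is discharged via the inclusion on $\pi'$ applied to $\set{\pi \mapsto t_1, \pi' \mapsto t} \models_\emptyset \psi$ for $t \in T'$. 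Your primary route is genuinely different and, in my view, tighter: passing to $\varphi' = \forall \pi \ldot \exists \pi' \ldot \neg\psi$, noting that $\lang(\varphi)$ and $\lang(\varphi')$ are complementary so the redundancy biconditional of Definition~\ref{def_dominance} is invariant under the switch, and that negating the body complements each instantiated monitor language and hence reverses both inclusions in the theorem, which yields exactly the stated characterization. This actually earns the label ``corollary'' and avoids duplicating the case analysis; its only prerequisite is the identification of $\lang(\monitor_\varphi[t/\pi])$ with the exact semantic set $\set{\tilde t \mid \set{\pi \mapsto t, \pi' \mapsto \tilde t} \models_\emptyset \psi}$, which is precisely how the paper already uses these languages in the proofs of Lemma~\ref{thm:exists-2-dominance} and Theorem~\ref{thm:forall-exists-dominance}, so no new assumption is introduced.
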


\begin{example}
	We show the effect of the dominance characterization on two example formulas.
	Consider the $\hyperltl$ formula $\forall \pi \ldot \exists \pi' \ldot \G (a_\pi \rightarrow b_{\pi'})$ and the traces $\set{b} \emptyset$, $\set{b}\set{b}$, $\set{a} \emptyset$, and $\set{a}\set{a}$.
	Trace $\set{a}\set{a}$ dominates trace $\set{a}\emptyset$ as instantiating $\pi$ requires two consecutive $b$'s for $\pi'$ where $\set{a}\emptyset$ only requires a $b$ at the first position (both traces do not contain $b$'s, so instantiating $\pi'$ leads to the same language).
	Similarly, one can verify that $\set{b}\set{b}$ dominates trace $\set{b}\emptyset$.
	
	Consider alternatively the formula $\exists \pi \ldot \forall \pi' \ldot \G (a_\pi \rightarrow b_{\pi'})$.
	In this case, $\set{a}\emptyset$ dominates $\set{a}\set{a}$ and $\set{b}\emptyset$ dominates $\set{b}\set{b}$.
\end{example}
For our conference management example formula given in Equation~\ref{eq:conference-management-alternating}, a trace $\set{pc}\emptyset\set{v}$ dominates $\set{pc}\emptyset\emptyset$ and $\emptyset\set{s}\emptyset$ dominates $\emptyset\emptyset\emptyset$, but $\emptyset\set{s}\emptyset$ and $\set{pc}\emptyset\set{v}$ are incomparable with respect to the dominance relation.

\subsubsection{Tries.}
After having specification and trace analysis in place, one observation is that there are a lot of incoming traces sharing same prefixes, leading to a lot of redundant monitor automaton instantiations, repetitive computations and duplicated information when those traces get stored.
Ideally one wants to avoid this overhead when monitoring a trace that shares some prefixes with already seen traces.
We implemented this idea with a so-called \emph{trie} datastructure.
Tries, also known as prefix trees, describe a tree data structure, which can represent a set of words over an alphabet in a compact manner, which exactly suites our use case.
The root of a trie is identified with the empty word $\epsilon$, additionally each node can have several child nodes, each of which corresponds to a unique letter getting appended to the representing word of the parent node. So the set of words of a trie is identified with the set of words the leaf nodes represent.

Formally,
    a trie is a four tuple $(\Sigma, \Tau, \longrightarrow, \tau_0)$ of
    a finite alphabet $\Sigma$,
    a non-empty set of states $\Tau$,
    a transition function $\longrightarrow: \Tau \times \Sigma \rightarrow \Tau$,
    and a designated initial state $\tau_0 \in \Tau$ called root of the trie.
    Instead of $((\tau,a),\tau') \in \longrightarrow$ we will write $\tau \overset{a}{\longrightarrow} \tau'$ in the following.
    For a trie to be of valid form we restrict $\longrightarrow$ such that, $\forall \tau,\tau' \in \Tau. |\{\tau \overset{a}{\longrightarrow} \tau'| a \in \Sigma\}| \leq 1$.


As mentioned before, storing the incoming traces succinctly is a major concern in monitoring hyperproperties.
In our case the alphabet would be the set of propositions used in the specification, and the word which is built from the trie are the traces.
Instead of storing each trace individually, we store all of them in one trie structure, branching only in case of deviation. This means equal prefixes only have to be stored once.
Besides the obvious benefits for memory, we also can make use of the maintained trie data structure to improve the runtime of our monitoring algorithms. As traces with same prefixes end up corresponding to the same path in the trie, we only have to instantiate the monitor automaton as much as the trie contains branches.
\begin{figure}
	\begin{algorithm}[H]
		\SetKwInOut{Input}{input}
		\SetKwInOut{Output}{output}
		\SetAlgoLined
		\Input{$\forall^n$ HyperLTL formula $\varphi$}
		\Output{satisfied or $n$-ary tuple witnessing violation}
		\BlankLine
		$\monitor_\varphi = (\Sigma, Q, \delta, q_0) =$ \texttt{build\_template($\varphi$)}\;
		$S: \Tau^n \rightarrow Q$\; $\tau_0 \coloneqq $\texttt{new\_trie()}\; $\mathbf{i} \coloneqq \tau_0^n$\; $I \coloneqq \{\mathbf{i}\}$\;
		\BlankLine
		\While{$\mathbf{p} \leftarrow$ new elements (in a forward fashion)}{
			\For{$1 \leq j \leq n$}{
				$\mathbf{i}(j) \leftarrow $\texttt{add\_value($\mathbf{i}(j)$, $j$, $\mathbf{p}(j)$)}\;
			}
			$I \leftarrow \bigcup_{\mathbf{i} \in I} \{(i'_1,\ldots, i'_n)| \mathbf{i}(j) \overset{a}{\longrightarrow} i'_j, a \in \Sigma, 1 \leq j \leq n\}$\;
			\ForEach{$\mathbf{i} \in I$}{
				progress every state in $S$ according to $\delta$\;
				\If{violation in $M_\varphi$}{
					$t \coloneqq ($\texttt{rooted\_sequence($i(1)$)}$,\ldots,$\texttt{rooted\_sequence($i(n)$)}$)$\;
					\Return witnessing tuple $t^n$\;
				}
			}
		}
		\Return satisfied\;
		\BlankLine
	\end{algorithm}
	\caption{Online algorithm using trie datastructure.}
	\label{alg:trie_online_parallel}
\end{figure}

\paragraph{Algorithm.}
Our trie-based parallel monitoring algorithm is depicted in Fig.~\ref{alg:trie_online_parallel}.
Without using tries our monitoring algorithm was based on instantiating the deterministic monitor template $\monitor_\phi$ with tuples of traces. Now we instantiate $\monitor_\phi$ with tuples of tries. Initially we only have to create the single instance having the the root of our trie.
The trie-based algorithm has much in common with its previously discused trace-based pendant. Initially we have to build the determintistic monitor automaton $\monitor_\varphi = (\Sigma, Q, \delta, q_0)$. We instantiate the monitor with a fresh trie root $\tau_0$. A mapping from trie instantiations to a state in $\monitor_\varphi$ $S: \Tau^n \rightarrow Q$, stores the current state of our   For each of the incoming traces we provide an entry in a tuple of tries $\mathbf{\tau}$, each entry gets initialized to $\tau_0$. During the run of our algorithm these entries will get updated such that they always correspond to the word built by the traces up to this point.
For as long as there are traces left, which have not yet ended, and we have not yet detected a violation, we will proceed updating the entries in $\mathbf{\tau}$ as follows. Having entry $\tau$ and the correspond trace sequence proceeds with $a$, if $\exists \tau' \in \Tau. \tau \overset{a}{\longrightarrow} \tau'$, we update the entry to $\tau'$ otherwise we create such a child node of $\tau$. Creating a new node in the trie always occures, if the prefix of the incoming trace starts to differ from already seen prefixes.
After having moved on step in our traces sequences, we have to reflect this step in our trie structure, in order for the trie-instantiated automata to correctly monitor the new propositions. As a trie node can branch to multiple child nodes, each monitor instantiation will get replaced by the set of instantiations, where all possible child combinations of the different assigned tries are existant. $S$ will get updated such that those new tuples are mapped to the same state as the instantiation they were build from. We essential fork the monitor instantiation for the different branches in the trie.
After this preprocessing we are able to update our mapping $S$ according to $\delta$. If a violation is detected here, we will return the corresponding counter example as a tuple of traces, as those can get reconstructed by stepping upwards in the tries of $\tau$.

Note that the sequential algorithm is derived from the parallel one by progressing one trace at a time and additionally when forking the monitor instantiations, only those new instantiation have to be kept, which contain the (single) trie in $\mathbf{\tau}$ at least once. 

\section{Evaluation}
\label{sec:experimentalresults}

In the following section, we report on experimental results of our tool RVHyper v2. We briefly describe implementation details before evaluating our monitoring algorithms for both, the parallel and unbounded input model.

\subsection{Implementation}

We implemented the monitoring algorithm for the sequential input model in a tool called RVHyper\footnote{The implementation of RVHyper is available at \url{https://react.uni-saarland.de/tools/rvhyper/}.}~\cite{rvhyper}.
We extended this implementation to RVHyper v2, including the trie optimization technique and added an implementation of the online monitoring approach for the parallel input model.
RVHyper v2 is written in C\nolinebreak[4]\hspace{-.05em}\raisebox{.4ex}{\relsize{-3}{\textbf{++}}}.
It uses \emph{spot} for building the deterministic monitor automata and the \emph{Buddy} BDD library for handling symbolic constraints.
We use the $\hyperltl$ satisfiability solver EAHyper~\cite{conf/cav/FinkbeinerHS17,conf/concur/FinkbeinerH16} to determine whether the input formula is reflexive, symmetric, or transitive.
Depending on those results, we omit redundant tuples in the monitoring algorithm.


\subsection{Experimental Results: Sequential Input Model}
In this subsection, we report on experimental results of the algorithm for the sequential input model and, especially, the accompanying optimizations.
\paragraph{Specification Analysis.}
For the specification analysis, we checked  variations of observational determinism, quantitative non-interference~\cite{conf/cav/FinkbeinerRS15}, equality and our conference management example for symmetry, transitivity, and reflexivity. The results are depicted in Table~\ref{formulaanalysis}. The specification analysis comes with low costs (every check was done in under a second), but with a high reward in terms of constructed monitor instances (see Fig.~\ref{fig:runtime-comparison}).
For hyperproperties that do not satisfy one of the properties, e.g., our conference management example, our trace analysis will still dramatically reduce the memory consumption.
\begin{table}[t]
	\centering
	\caption{Specification Analysis for universally quantified hyperproperties.}
	\label{formulaanalysis}
	\begin{tabular}{ll|l|l|l|}
		\cline{3-5}
		&                                                                                                                                                                                                                                                                                                                                                                                                                                                                    & symm & trans & refl \\ \hline
		\multicolumn{1}{|l|}{ObsDet1}     & $\forall \pi. \forall \pi'.\; \G (I_{\pi} = I_{\pi'}) \rightarrow \G (O_{\pi} = O_{\pi'})$                                                                                                                                                                                                                                                                                                                                                             &  \cmark    &    \xmark   &  \cmark    \\ \hline
		\multicolumn{1}{|l|}{ObsDet2}     & $\forall \pi. \forall \pi'.\; (I_{\pi} = I_{\pi'}) \rightarrow \G (O_{\pi} = O_{\pi'})$                                                                                                                                                                                                                                                                                                                                                                &  \cmark    &    \xmark   &  \cmark      \\ \hline
		\multicolumn{1}{|l|}{ObsDet3}     & $\forall \pi. \forall \pi'. (O_\pi = O_\pi') \W (I_\pi \neq I_\pi')$                                                                                                                                                                                                                                                                                                                                                                                                   &  \cmark    &    \xmark   &  \cmark      \\ \hline
		\multicolumn{1}{|l|}{QuantNoninf} & $\forall \pi_0 \ldots \forall \pi_{c}.~\neg ((\bigwedge_i I_{\pi_i} = I_{\pi_0}) \wedge \bigwedge_{i \neq j} O_{\pi_i} \neq O_{\pi_j})$                                                                                                                                                                                                                                                                                                                            &  \cmark    &    \xmark   &  \cmark      \\ \hline

		\multicolumn{1}{|l|}{EQ} &\begin{tabular}[c]{@{}l@{}}$\forall \pi. \forall \pi' \ldot \G (a_\pi \leftrightarrow a_{\pi'})$\end{tabular} &  \cmark    &    \cmark   &  \cmark \\ \hline
		
		\multicolumn{1}{|l|}{ConfMan} &\begin{tabular}[c]{@{}l@{}}$\forall \pi \forall \pi' \ldot \big((\neg pc_\pi \wedge pc_{\pi'}) \rightarrow \X\G (s_\pi \rightarrow \X v_{\pi'})\big)$\\ $\wedge \big((pc_{\pi} \wedge pc_{\pi'}) \rightarrow \X\G (v_\pi \leftrightarrow v_\pi')\big)$\end{tabular} &  \xmark    &    \xmark   &  \xmark \\ \hline
	\end{tabular}
\end{table}

\paragraph{Trace Analysis.} For evaluating our trace analysis, we use a scalable, bounded variation of observational determinism:
$\forall \pi\ldot \forall \pi'\ldot \G_{<n} (I_\pi = I_{\pi'}) \rightarrow \G_{< n +c} (O_\pi = O_{\pi'})$. Figure~\ref{fig:number-of-pruned-traces} shows a family of plots for this benchmark class, where $c$ is fixed to three. We randomly generated a set of $10^5$ traces. The blue (dashed) line depicts the number of traces that need to be stored, the red (dotted) line the number of traces that violated the property, and the green (solid) line depicts the pruned traces. When \emph{increasing the requirements} on the system, i.e., decreasing $n$, we prune the majority of incoming traces with our trace analysis techniques.

\begin{figure}[t]
	\centering
	\begin{minipage}{.33\textwidth}
		\begin{tikzpicture}
		\begin{semilogyaxis}[tiny,width=1.2\textwidth,mark size=1.3pt,ymin=0,ymax=100000,xmin=0,xmax=100000,no markers,thick,xtick={0,25000,50000,75000,100000},xlabel={$n=16$}]
		\addplot+[blue,dashed] table {plots/12_16__3/nli1_ir16_nlo1_or3_nhi0_nho0_s.dat};
		\addplot+[green,solid] table {plots/12_16__3/nli1_ir16_nlo1_or3_nhi0_nho0_p.dat};
		\addplot+[red,dotted] table {plots/12_16__3/nli1_ir16_nlo1_or3_nhi0_nho0_v.dat};
		
		\end{semilogyaxis}
		\end{tikzpicture}
	\end{minipage}\begin{minipage}{.33\textwidth}
	\begin{tikzpicture}
	\begin{semilogyaxis}[tiny,width=1.2\textwidth,mark size=1.3pt,ymin=0,ymax=100000,xmin=0,xmax=100000,no markers,thick,xtick={0,25000,50000,75000,100000},xlabel={$n=14$}]
	\addplot+[blue,dashed] table {plots/12_16__3/nli1_ir14_nlo1_or3_nhi0_nho0_s.dat};
	\addplot+[green,solid] table {plots/12_16__3/nli1_ir14_nlo1_or3_nhi0_nho0_p.dat};
	\addplot+[red,dotted] table {plots/12_16__3/nli1_ir14_nlo1_or3_nhi0_nho0_v.dat};
	
	\end{semilogyaxis}
	\end{tikzpicture}
    \end{minipage}\begin{minipage}{.33\textwidth}
		\begin{tikzpicture}
		\begin{semilogyaxis}[tiny,width=1.2\textwidth,mark size=1.3pt,ymin=0,ymax=100000,xmin=0,xmax=100000,no markers,thick,xtick={0,25000,50000,75000,100000},xlabel={$n=12$}]
		\addplot+[blue,dashed] table {plots/12_16__3/nli1_ir12_nlo1_or3_nhi0_nho0_s.dat};
		\addplot+[green,solid] table {plots/12_16__3/nli1_ir12_nlo1_or3_nhi0_nho0_p.dat};
		\addplot+[red,dotted] table {plots/12_16__3/nli1_ir12_nlo1_or3_nhi0_nho0_v.dat};
		
		\end{semilogyaxis}
		\end{tikzpicture}
	\end{minipage}
	\caption{Absolute numbers of violations in red (dotted), number of instances stored in blue (dashed), number of instances pruned in green (solid) for $10^5$ randomly generated traces of length $100000$. The $y$ axis is scaled logarithmically.}
	\label{fig:number-of-pruned-traces}
\end{figure}
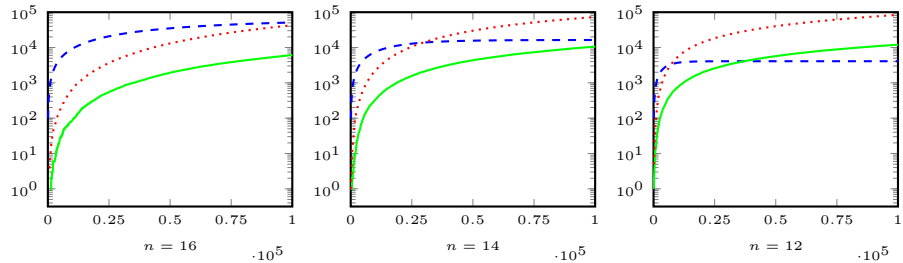

\paragraph{Optimizations in Combination.}
We furthermore considered our optimizations in different combination. Note that the trace analysis and the trie optimization are not trivially combinable and we leave this as future work.
As a first benchmark, we monitored an encoder for its robustness against errors, i.e., a Hamming-distance preserving encoder. That an encoder preserves a certain Hamming-distance can be encoded as a universally quantified HyperLTL formula~\cite{conf/post/ClarksonFKMRS14}.
In Fig.~\ref{fig:runtime-comparison} we compare the running time of the monitoring optimizations presented in this paper to the naive approach.
We compare the naive monitoring approach to the monitor using specification analysis and trace analysis, as well as a combination thereof.
We randomly built traces of length $50$. In each position of the trace, the corresponding bit had a 1\% chance to be flipped.
Applying our techniques results in a tremendous speed up of the monitoring algorithm, where the combination of specification analysis and trie optimization turns out to be superior for this benchmark.
\begin{figure}[t]
	\centering
	\begin{tikzpicture}
		\begin{axis}[width=0.55\textwidth,mark size=1.3pt,ymin=0,ymax=56296,xmin=0,xmax=500,no markers,thick,xlabel={\# of instances},ylabel={runtime in msec.},
            legend entries={naive,specification analysis,trace analysis,both,trie naive,trie},
  		    legend style={
              at={(-0.2,1)},
              anchor=north east}]
            ]]
            \addplot+[red,solid] table {plots/runtime/journal/encoder1_2_l50/timings_naive.dat};
            \addplot+[blue,dashed,very thick] table {plots/runtime/journal/encoder1_2_l50/timings_sa.dat};
            \addplot+[orange,dotted,very thick] table {plots/runtime/journal/encoder1_2_l50/timings_ta.dat};
            \addplot+[green,dashdotted,very thick] table {plots/runtime/journal/encoder1_2_l50/timings_both.dat};
            \addplot+[yellow,dashdotted,very thick] table {plots/runtime/journal/encoder1_2_l50/timings_trie_naive.dat};
            \addplot+[violet,dotted,very thick] table {plots/runtime/journal/encoder1_2_l50/timings_trie.dat};
		\end{axis}
	\end{tikzpicture}
	\caption{Hamming-distance preserving encoder: runtime comparison of naive monitoring approach with different optimizations and a combination thereof.}
	\label{fig:runtime-comparison}
\end{figure}
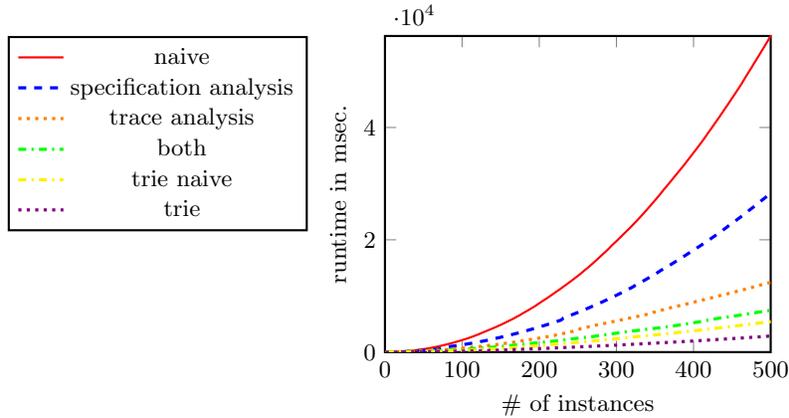

For our second benchmark, we considered a black box combinatorial circuit, guarded by a multiplexer that selects between the two input vectors $\vec i$ and $\vec i'$ and an inverse multiplexer that forwards the output of the black box either towards $\vec o$ or $\vec o'$. We monitored whether there is a semantic dependency between the in- and outputs (see~\cite{rvhyper} for details).
With random simulation, we generated traces of length $30$. Fig~\ref{fig:runtime-comparison-prob} demonstrates that the specification analysis, if applicable, is a valuable addition to both storage optimizations.
Note that both axes are log-scaled.
\begin{figure}[t]
	\centering
	\begin{tikzpicture}
        \begin{axis}[width=0.55\textwidth,mark size=1.3pt,ymode=log,ymax=400000,xmode=log,xmin=1,xmax=5000,no markers,thick,xlabel={probability for bit flip $\times 10^{-4}$},ylabel={runtime in msec.},
            legend entries={naive,specification analysis,trace analysis,both,trie naive,trie},
  		    legend style={
              at={(-0.2,1)},
              anchor=north east}]
            ]]
            \addplot+[red,solid] table {plots/runtime/journal/mux_30_probs/naive.dat};
            \addplot+[blue,dashed,very thick] table {plots/runtime/journal/mux_30_probs/sa.dat};
            \addplot+[orange,dotted,very thick] table {plots/runtime/journal/mux_30_probs/ta.dat};
            \addplot+[green,dashdotted,very thick] table {plots/runtime/journal/mux_30_probs/both.dat};
            \addplot+[yellow,dashdotted,very thick] table {plots/runtime/journal/mux_30_probs/trie_naive.dat};
            \addplot+[violet,dotted,very thick] table {plots/runtime/journal/mux_30_probs/trie.dat};
		\end{axis}
	\end{tikzpicture}
	\caption{Monitoring of black box circuits: runtime comparison of naive monitoring approach with different optimizations and a combination thereof.}
	\label{fig:runtime-comparison-prob}
\end{figure}
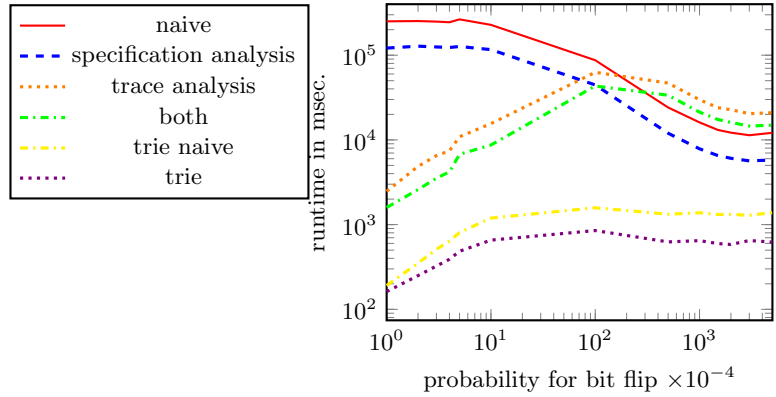

\paragraph{Use Case: Detecting Spurious Dependencies in Hardware Designs.}

The problem whether input signals influence output signals in hardware designs, was considered in~\cite{rvhyper}. We briefly describe the input specification and the corresponding hardware designs, before reporting the results of RVHyper v2 on those benchmarks.
We write $\vec i \ninfluences \vec o$ to denote that the inputs $\vec i$ do not influence the outputs $\vec o$.
Formally, we specify this property as the following $\hyperltl$ formula:
\begin{equation*}
\forall \pi_1 \forall \pi_2 \ldot
(\vec{o}_{\pi_1} = \vec{o}_{\pi_2}) \WUntil (\overline{\vec{i}}_{\pi_1} \neq \overline{\vec{i}}_{\pi_2}) \enspace,
\end{equation*}
where $\overline{\vec{i}}$ denotes all inputs except $\vec i$.
Intuitively, the formula asserts that for every two pairs of execution traces $(\pi_1,\pi_2)$ the value of $\vec{o}$ has to be the same until there is a difference between $\pi_1$ and $\pi_2$ in the input vector $\overline{\vec{i}}$, i.e., the inputs on which $\vec o$ may depend.

We applied RVHyper in both versions to traces generated from the following hardware designs (see~\cite{rvhyper} for details).

\begin{example}[\textsc{xor}]
	As a first example, consider the \textsc{xor} function $\vec{o} = \vec{i} \oplus \vec{i}'$.
	In the corresponding circuit, every $j$-th output bit $o_j$ is only influenced by the $j$-the input bits $i_j$ and $i'_j$.
\end{example}

\begin{example}[\textsc{mux}]
	We consider a black box combinatorial circuit, guarded by a multiplexer that selects between the two input vectors $\vec i$ and $\vec i'$ and an inverse multiplexer that forwards the output of the black box either towards $\vec o$ or $\vec o'$.
	Despite there being a syntactic dependency between $\vec o$ and $\vec i'$, there is no semantic dependency, i.e., the output $\vec o$ does solely depend on $\vec i$ and the selector signal.
	
	When using the same example, but with a sequential circuit as black box, there may be information flow from the input vector $\vec i'$ to the output vector $\vec o$ because the state of the latches may depend on it.
	We construct such a circuit that leaks information about $\vec{i}'$ via its internal state.
	
\end{example}

\begin{example}[counter]
	\label{ex:counter}
	Our last example is a binary counter with two input control bits $\mathit{incr}$ and $\mathit{decr}$ that increments and decrements the counter.
	The counter has a single output, namely a signal that is set to one when the counter value overflows.
	Both inputs influence the output, but timing of the overflow depends on the number of counter bits.
\end{example}

\begin{table}[]
	\caption{Results of RVHyper v2 compared to RVHyper v1 on traces generated from circuit instances. Every instance was run 10 times with different seeds and the average is reported.}
	\label{tbl:rvhyper-results}
	\centering
	\begin{tabular}{lllllll}
		\hline \noalign{\smallskip}
		instance  & \#\,traces  & \#\,instances v1 & \#\,instances v2 & time v1 & time v2 \\ \noalign{\smallskip}\hline\noalign{\smallskip}
		\textsc{xor1}  & 18 & 222 & 18 & 12ms & 6ms \\
		\textsc{xor2}  & 1000 & 499500 & 127 & 16913ms & 1613ms   \\
		count1  & 1636 & 1659446 & 2 & 28677ms & 370ms \\
		count2 & 1142 &  887902 & 22341 & 15574ms & 253ms\\
		\textsc{mux}  & 1000 &  499500 & 32 & 14885ms & 496ms \\
		\textsc{mux2}  & 82 & 3704 & 1913 & 140ms & 27ms \\ \hline
	\end{tabular}
\end{table}

\subsection{Experimental Results: Parallel Input Model}
In this subsection, we present experimental results of our optimizations for the online algorithm in the parallel input model.
As a first benchmark, we reused the Hamming-distance preserving encoder example described above. We generated $500$ traces of length $30$. The results are depicted in Figure~\ref{fig:runtime-comparison-par-prob}. The x-axis denotes the probability that a bit of the trace is flipped. The most left hand side of the scale means that 1 out of 10000 bits is flipped and the most right hand side of the scale means that every second bit is flipped. As expected the trie optimization has a major influence on the runtime of the monitoring algorithm as long as the traces potentially share the same prefix. The specification analysis, however, enhances the monitoring process regardless of a possible prefix equality.

\begin{figure}[t]
	\centering
	\begin{tikzpicture}
        \begin{axis}[width=0.55\textwidth,mark size=1.3pt,ymode=log,ymax=100000,xmode=log,xmin=1,xmax=5000,no markers,thick,xlabel={probability for bit flip $\times 10^{-4}$},ylabel={runtime in msec.},
	legend entries={naive,specification analysis,trie naive,trie},
	legend style={
		at={(-0.2,1)},
		anchor=north east}]
	]]
	\addplot+[red,solid] table {plots/runtime/journal/encoder2_3_l30_n500_probs_parallel/naive.dat};
	\addplot+[blue,dashed,very thick] table {plots/runtime/journal/encoder2_3_l30_n500_probs_parallel/sa.dat};
	\addplot+[yellow,dashdotted,very thick] table {plots/runtime/journal/encoder2_3_l30_n500_probs_parallel/trie_naive.dat};
	\addplot+[violet,dotted,very thick] table {plots/runtime/journal/encoder2_3_l30_n500_probs_parallel/trie.dat};
	\end{axis}
	\end{tikzpicture}
    \caption{Hamming-distance preserving encoder: runtime comparison of parallel monitoring approach with different optimizations.}
	\label{fig:runtime-comparison-par-prob}
\end{figure}
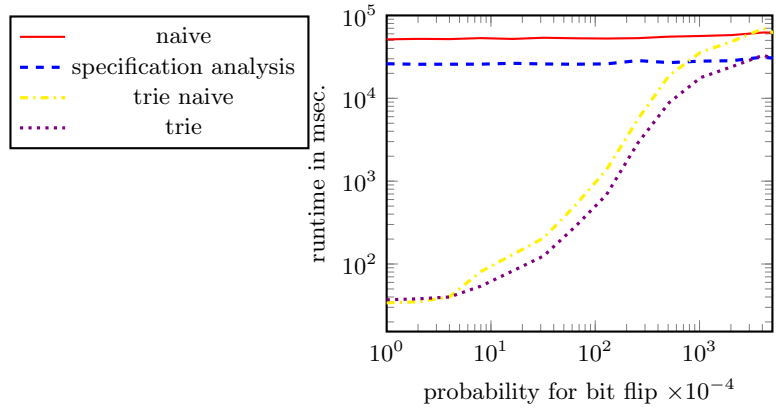


\section{Conclusion}
\label{sec:conclusion}
We have presented automata-based monitoring algorithms for $\hyperltl$.
We considered three different input models, where the traces are either given in parallel or sequentially, and, when given sequentially, may either grow beyond any bound or be limited by a fixed bound. 
We showed that deciding whether a $\hyperltl$ formula is monitorable in the parallel and bounded sequential models is $\pspace$-complete.
We showed that deciding whether an alternation-free formula is monitorable in the unbounded sequential model is $\pspace$-complete.

We presented three optimizations tackling different problems in monitoring hyperproperties.
The trace analysis minimizes the needed memory, by minimizing the stored set of traces.
The specification analysis reduces the algorithmic workload by reducing the number of comparisons between a newly observed trace and the previously stored traces.
The succinct representation of the trace set as a trie tackles the massive storage and computation overhead for prefix-equal traces.

We have evaluated our tool implementation RVHyper on several benchmarks, showing that the optimizations contribute significantly towards the practical monitoring of hyperproperties.



\bibliographystyle{plain}
\bibliography{main}
\end{document}